\begin{document}

\title[Quantum Hoare logic with classical variables]{Quantum Hoare logic with classical variables}         


\author{Yuan Feng}
\orcid{nnnn-nnnn-nnnn-nnnn}             
\affiliation{
  \position{Professor}
  \department{Centre for Quantum Software and Information}              
  \institution{University of Technology Sydney}            
  \state{NSW}
  \country{Australia}                    
}

\email{Yuan.Feng@uts.edu.au}          

\author{Mingsheng Ying}
\orcid{nnnn-nnnn-nnnn-nnnn}             
\affiliation{
  \position{Professor}
  \department{Centre for Quantum Software and Information}             
  \institution{University of Technology Sydney}           
  \state{NSW}
  \country{Australia}                   
}

\affiliation{
	\position{Professor}
	\department{Institute of Software}             
	\institution{Chinese Academy of Sciences}           
	\city{Beijing}
	\country{China}                   
}

\affiliation{
  \position{Professor}
  \department{Department of Computer Science}             
  \institution{Tsinghua University}           
  \city{Beijing}
  \country{China}                   
}
\email{Mingsheng.Ying@uts.edu.au}         

\begin{abstract}

Hoare logic provides a syntax-oriented method to reason about program correctness, and has been proven effective in the verification of classical and probabilistic programs. Existing proposals for quantum Hoare logic either lack completeness or support only quantum variables, thus limiting their capability in practical use.
In this paper, we propose a quantum Hoare logic for a simple while language which involves both 
classical and quantum variables. Its soundness and relative completeness are proven for both partial and total correctness of quantum programs written in the language. Remarkably, with novel definitions of classical-quantum states and corresponding assertions, the logic system is quite simple and similar to the traditional Hoare logic for classical programs. Furthermore, to simplify reasoning in real applications, auxiliary proof rules are provided which support standard logical operation in the classical part of assertions, and of super-operator application in the quantum part. Finally, a series of practical quantum algorithms, in particular the whole algorithm of Shor's factorisation, are formally verified to show the effectiveness of the logic.

\end{abstract}


\begin{CCSXML}
	<ccs2012>
	<concept>
	<concept_id>10003752.10010124.10010131.10010133</concept_id>
	<concept_desc>Theory of computation~Denotational semantics</concept_desc>
	<concept_significance>500</concept_significance>
	</concept>
	<concept>
	<concept_id>10003752.10010124.10010131.10010135</concept_id>
	<concept_desc>Theory of computation~Axiomatic semantics</concept_desc>
	<concept_significance>500</concept_significance>
	</concept>
	<concept>
	<concept_id>10003752.10003790.10011741</concept_id>
	<concept_desc>Theory of computation~Hoare logic</concept_desc>
	<concept_significance>500</concept_significance>
	</concept>
	<concept>
	<concept_id>10003752.10010124.10010138.10010142</concept_id>
	<concept_desc>Theory of computation~Program verification</concept_desc>
	<concept_significance>500</concept_significance>
	</concept>
	<concept>
	<concept_id>10003752.10010124.10010138.10010141</concept_id>
	<concept_desc>Theory of computation~Pre- and post-conditions</concept_desc>
	<concept_significance>500</concept_significance>
	</concept>
	<concept>
	<concept_id>10003752.10010124.10010138.10010144</concept_id>
	<concept_desc>Theory of computation~Assertions</concept_desc>
	<concept_significance>500</concept_significance>
	</concept>
	</ccs2012>
\end{CCSXML}

\ccsdesc[500]{Theory of computation~Denotational semantics}
\ccsdesc[500]{Theory of computation~Axiomatic semantics}
\ccsdesc[500]{Theory of computation~Hoare logic}
\ccsdesc[500]{Theory of computation~Program verification}
\ccsdesc[500]{Theory of computation~Pre- and post-conditions}
\ccsdesc[500]{Theory of computation~Assertions}

\keywords{Quantum programming, quantum while language}  

\setcopyright{acmcopyright}

\maketitle

\newcommand {\empstr} {\Lambda}

\newcommand {\qcf}[1] {{\sf{#1}}}

\newcommand {\qc}[1] {{\sf{#1}}}
\def\>{\ensuremath{\rangle}}
\def\<{\ensuremath{\langle}}
\def\sl {\ensuremath{\llparenthesis}}
\def\sr{\ensuremath{\rrparenthesis}}
\def\-{\ensuremath{\textrm{-}}}
\def\ott{t}
\def\otu{u}
\def\ots{s}
\def\apply{\mathrel{*\!\!=}}

\def\comm{\ensuremath{\leftrightarrow^*}}
\def\reach{\ensuremath{\rightarrow^*}}

\def\ctp{P}
\def\ctq{Q}

\def\qVar{\ensuremath{\mathit{qVar}}}
\def\Var{\ensuremath{\mathit{Var}}}

\def\fdmu{\Delta}
\def\fdnu{\dnu}
\def\fdomega{\domega}

\def\dmu{\mu}
\def\dnu{\nu}
\def\domega{\omega}
\def\expect{\mathbb{E}}
\def\preexpect{\mathrm{pre}\mathbb{E}}

\def\rassign{:=_{\$}}
\def\fpi{\widehat{\pi}}
\def\h{\ensuremath{\mathcal{H}}}
\def\p{\ensuremath{\mathcal{P}}}
\def\l{\ensuremath{\mathcal{L}}}
\def\g{\ensuremath{\mathcal{G}}}
\def\lh{\ensuremath{\mathcal{L(H)}}}
\def\dh{\ensuremath{\mathcal{D(H})}}
\def\dhv{\ensuremath{\d(\h_V)}}
\def\q{\bold Q}
\def\Q{\ensuremath{\mathbb Q}}
\def\P{\ensuremath{\mathbb P}}
\def\SO{\ensuremath{\mathcal{SO}}}
\def\HP{\ensuremath{\mathcal{HP}}}
\def\hpe{\ensuremath{\mathcal{\e}}}

\def\r{\ensuremath{\mathcal{R}}}
\def\R{\ensuremath{\mathbb{R}}}
\def\m{\ensuremath{\mathcal{M}}}
\def\u{\ensuremath{\mathcal{U}}}
\def\k{\ensuremath{\mathcal{K}}}
\def\K{\ensuremath{\mathfrak{K}}}
\def\S{\ensuremath{\mathfrak{S}}}
\def\s{\ensuremath{\mathcal{S}}}
\def\t{\ensuremath{\mathcal{T}}}
\def\u{\ensuremath{\mathcal{U}}}
\def\U{\ensuremath{\mathfrak{U}}}
\def\L{\ensuremath{\mathfrak{L}}}
\def\x{\ensuremath{\mathcal{X}}}
\def\y{\ensuremath{\mathcal{Y}}}
\def\z{\ensuremath{\mathcal{Z}}}
\def\v{\ensuremath{\mathcal{V}}}

\def\st{\ensuremath{\mathfrak{t}}}
\def\su{\ensuremath{\mathfrak{u}}}
\def\ss{\ensuremath{\mathfrak{s}}}

\def\ra{\ensuremath{\rightarrow}}
\def\a{\ensuremath{\mathcal{A}}}
\def\b{\ensuremath{\mathcal{B}}}
\def\c{\ensuremath{\mathcal{C}}}

\def\e{\ensuremath{\mathcal{E}}}
\def\f{\ensuremath{\mathcal{F}}}
\def\l{\ensuremath{\mathcal{L}}}
\def\X{\mbox{\bf{X}}}
\def\N{\mathbb{N}}
\def\sreal{\mathbb{R}}
\def\Z{\mathbb{Z}}

\def\qzz{\ensuremath{|0\>_q\<0|}}
\def\qoo{\ensuremath{|1\>_q\<1|}}
\def\qzo{\ensuremath{|0\>_q\<1|}}
\def\qoz{\ensuremath{|1\>_q\<0|}}
\def\qii{\ensuremath{|i\>_q\<i|}}
\def\qiz{\ensuremath{|i\>_q\<0|}}
\def\qzi{\ensuremath{|0\>_q\<i|}}

\def\quzz{\ensuremath{|0\>_{\bar{q}}\<0|}}
\def\quoo{\ensuremath{|1\>_{\bar{q}}\<1|}}
\def\quzo{\ensuremath{|0\>_{\bar{q}}\<1|}}
\def\quoz{\ensuremath{|1\>_{\bar{q}}\<0|}}
\def\quii{\ensuremath{|i\>_{\bar{q}}\<i|}}
\def\quiz{\ensuremath{|i\>_{\bar{q}}\<0|}}
\def\quzi{\ensuremath{|0\>_{\bar{q}}\<i|}}

\DeclarePairedDelimiter{\ceil}{\lceil}{\rceil}

\def\d{\ensuremath{\mathcal{D}}}
\def\dh{\ensuremath{\mathcal{D(H)}}}
\def\lh{\ensuremath{\mathcal{L(H)}}}
\def\le{\ensuremath{\sqsubseteq}}
\def\ge{\ensuremath{\sqsupseteq}}
\def\eval{\ensuremath{{\psi}}}
\def\aeq{\ensuremath{{\ \equiv\ }}}
\def\osnt{\ensuremath{\sl \ott, \e\sr}}
\def\snt{\st}
\def\snti{\ensuremath{\sl \ott_i, \e_i\sr}}
\def\osnu{\ensuremath{\sl \otu, \f\sr}}
\def\osns{\ensuremath{\sl s, \g\sr}}
\def\snu{\su}
\def\fdist{\ensuremath{\d ist_\h}}
\def\dist{\ensuremath{Dist}}
\def\wtx{\ensuremath{\widetilde{X}}}

\def\bv{1{v}}
\def\bV{\mathbf{V}}
\def\bf{\mathbf{f}}
\def\bw{\mathbf{w}}
\def\zo{\mathbf{0}}
\def\bX{\mathbf{X}}
\def\bDelta{\mathbf{\Delta}}
\def\bdelta{\boldsymbol{\delta}}
\def\next{\mathcal{X}}
\def\until{\mathcal{U}}

\def\leqI{\ensuremath{\mathcal{SI}(\h)}}
\def\leqIq{\ensuremath{\mathcal{SI}_{\eqsim}(\h)}}
\def\oact{\ensuremath{\alpha}}
\def\oactb{\ensuremath{\beta}}
\def\sact{\ensuremath{\gamma}}
\def\fpi{\ensuremath{\widehat{\pi}}}
\newcommand{\supp}[1]{\ensuremath{\lceil{#1}\rceil}}
\newcommand{\support}[1]{\lceil{#1}\rceil}

\newcommand{\abis}{\stackrel{\lambda}\approx}
\newcommand{\abisa}[1]{\stackrel{#1}\approx}
\newcommand {\qbit} {\mbox{\bf{new}}}

\renewcommand{\theenumi}{(\arabic{enumi})}
\renewcommand{\labelenumi}{\theenumi}
\newcommand{\tr}{{\rm tr}}
\newcommand{\rto}[1]{\stackrel{#1}\rightarrow}
\newcommand{\orto}[1]{\stackrel{#1}\longrightarrow}
\newcommand{\srto}[1]{\stackrel{#1}\longmapsto}
\newcommand{\sRto}[1]{\stackrel{#1}\Longmapsto}

\newcommand{\ass}[3]{\left\{#1\right\}\ #2\ \left\{#3\right\}}
\newcommand{\andor}{\ \&\ }

\newcommand {\true} {\ensuremath{{\mathbf{true}}}}
\newcommand {\false} {\ensuremath{{\mathbf{false}}}}
\newcommand {\abort}{\ensuremath{{\mathbf{abort}}}}
\newcommand {\sskip} {\mathbf{skip}}

\newcommand {\then} {\ensuremath{\mathbf{then}}}
\newcommand {\eelse} {\ensuremath{\mathbf{else}}}
\newcommand {\while} {\ensuremath{\mathbf{while}}}
\newcommand {\ddo} {\ensuremath{\mathbf{do}}}
\newcommand {\pend} {\ensuremath{\mathbf{end}}}
\newcommand {\inv} {\ensuremath{\mathbf{inv}}}

\renewcommand {\measure} {\mathbf{meas}}

\newcommand {\iif} {\mathbf{if}}
\def\mstm{\iif\ b\ \then\ S_1\ \eelse\ S_0\ \pend}
\def\wstm{\while\ b\ \ddo\ S\ \pend}
\newcommand\measstm[3]{\iif\ #1\ \then\ #2\ \eelse\ #3\ \pend}
\newcommand\whilestm[2]{\while\ #1\ \ddo\ #2\ \pend}

\newcommand {\spann} {\mathrm{span}}

\newcommand{\rrto}[1]{\xhookrightarrow{#1}}
\newcommand{\con}[3]{\iif\ {#1}\ \then\ {#2}\ \eelse\ {#3}}

\newcommand{\Rto}[1]{\stackrel{#1}\Longrightarrow}
\newcommand{\nrto}[1]{\stackrel{#1}\nrightarrow}

\newcommand{\Rhto}[1]{\stackrel{\widehat{#1}}\Longrightarrow}
\newcommand{\define}{\ensuremath{\triangleq}}
\newcommand{\rsim}{\simeq}
\newcommand{\obis}{\approx_o}
\newcommand{\sbis}{\ \dot\approx\ } 
\newcommand{\stbis}{\ \dot\sim\ } 
\newcommand{\nssbis}{\ \dot\nsim\ } 

\newcommand{\bis}{\sim}
\newcommand{\rat}{\rightarrowtail}
\newcommand{\wbis}{\approx}
\newcommand{\id}{\mathcal{I}}
\newcommand{\stet}[1]{\{ {#1}  \}  } 
\newcommand{\unw}[1]{\stackrel{{#1}}\sim}
\newcommand{\rma}[1]{\stackrel{{#1}}\approx}

\def\step{\textsf{step}}
\def\obs{\textsf{obs}}
\def\dom{\textsf{dom}}
\def\purge{\textsf{ipurge}}
\def\source{\textsf{sources}}
\def\cnt{\textsf{cnt}}
\def\read{\textsf{read}}
\def\alter{\textsf{alter}}
\def\dirac#1{\delta_{#1}}

\def\tybool{\ensuremath{\mathbf{Boolean}}}
\def\tyint{\ensuremath{\mathbf{Integer}}}
\def\tyqubit{\ensuremath{\mathbf{Qubit}}}
\def\tyqudit{\ensuremath{\mathbf{Qudit}}}
\def\tyqureg{\ensuremath{\mathbf{Qureg}}}
\def\tyunitreg{\ensuremath{\mathbf{Unitreg}}}
\def\type{\ensuremath{\mathit{type}}}

\def\qstate{\Delta}
\def\qassert{\Theta}
\def\qassertp{\Psi}
\def\casserts{\a}
\def\cstate{\sigma}
\def\cstates{\Sigma}
\def\cassert{p}
\def\emptydis{\bot}
\def\qset{Q}
\def\qsetp{R}
\def\qv{{qv}}
\def\Exp{\mathrm{Exp}}

\def\qstates{\s_V}
\def\qasserts{\a_V}
\def\qstatesh#1{\mathcal{S}_{#1}}
\def\qassertsh#1{\mathcal{A}_{#1}}

\def\qstatesp{\mathcal{S}(\h')}

\newcommand\prog{\mathit{Prog}}
\def\ph{\ensuremath{\mathcal{P}(\h)}}
\def\phv{\ensuremath{\mathcal{P}(\h_V)}}

\def\<{\langle}
\def\>{\rangle}
\def\l{\mathcal{L}}
\def\k{\mathcal{K}}
\def\qmc {\color{red}}
\def\dtmc {\color{black}}
\newcommand{\ysim}[1]{\stackrel{#1}\sim}
\def\z{\mathbf{0}}
\newcommand{\TRANDA}[3]{#1\xrightarrow{#2}_{{\sf D}}#3}
\def\pdist{\mathit{pDist}}

\def\C{\mathbb{C}}

\newcommand{\subs}[2]{{#2}/{#1}}

\def \Rm#1{\mbox{\rm #1}}
\def \lsem      {\raise1pt\hbox{\Rm {[\kern-.12em[}}}
\def \rsem      {\raise1pt\hbox{\Rm {]\kern-.12em]}}}
\def \sem#1{\mbox{\lsem$#1$\rsem}}

\newtheorem{remark}{Remark}

\section{Introduction}

Quantum computing and quantum communication provide potential speed-up and enhanced security compared with their classical counterparts~\cite{grover1996fast,shor1994algorithms,harrow2009quantum,bennett1984quantum,bennett1992quantum}. 
However, the quantum features which are responsible for these benefits, such as entanglement between different systems and non-commutativity of quantum operations, also make analysis of quantum algorithms and protocols notoriously difficult~\cite{mayers2001unconditional}. Furthermore, due to the lack of reliable and scalable hardware on which practical quantum algorithms can be executed, traditional techniques such as testing and debugging in classical software engineering will not be readily available in the near future, and formal methods based static analysis of quantum programs seems indispensable.

Among other techniques, Hoare logic provides a syntax-oriented proof system to reason about program correctness~\cite{hoare1969axiomatic}. For classical (non-probabilistic) programs, the correctness is expressed in the Hoare triple form $\{P\} S \{Q\}$ where $S$ is a program, and $P$ and $Q$ are first-order logic formulas called \emph{assertions} that describe the pre- and post-conditions of $S$, respectively. Intuitively, the triple claims that if $S$ is executed at a \emph{state} (evaluation of program variables) satisfying $P$ and it terminates, then $Q$ must hold in the final state. This is called \emph{partial correctness}. If termination is further guaranteed in all states that satisfy $P$, then partial correctness becomes a \emph{total} one. After decades of development, Hoare logic has been successfully applied in analysis of programs with non-determinism, recursion, parallel execution, etc. For a detailed survey, we refer to~\cite{apt2019fifty,apt2010verification}. 

Hoare logic was also extended to programming languages with probabilistic features. As the program states for probabilistic languages are (sub)distributions over evaluations of program variables, the extension naturally follows two different approaches, depending on how assertions of probabilistic states are defined. The first one
takes subsets of distributions as (qualitative) assertions, similar to the non-probabilistic case, and the satisfaction relation between distributions and assertions is then just the ordinary membership~\cite{ramshaw1979formalizing,den2002verifying,chadha2007reasoning,barthe2018assertion}. In contrast, the other approach takes
non-negative functions on evaluations as (quantitative) assertions. Consequently, one is concerned with the \emph{expectation} of a distribution satisfying an assertion~\cite{morgan1996probabilistic,mciver2005abstraction,olmedo2016reasoning,kozen1981semantics,kozen1985probabilistic}. 

In recent years, Hoare logic and relational Hoare logic for quantum programs have been developed, also following two different approaches similar to the probabilistic setting. Note that quantum (mixed) states are described mathematically by density operators in a Hilbert space. 
Assertions in the satisfaction-based logics proposed in~\cite{chadha2006reasoning,Kakutani:2009} extend the probabilistic counterparts in~\cite{den2002verifying,chadha2006reasoningr} with the ability to 
reason about probabilities (or even the complex amplitudes) and expected values of measuring a quantum state.  
The satisfaction-based logics proposed in ~\cite{zhou2019applied,unruh2019quantum,unruh2019quantumr} regard subspaces of the Hilbert space as assertions, and a quantum state $\rho$ satisfies an assertion $P$ iff the support (the image space of linear operators) of $\rho$ is included in $P$.
In contrast, the expectation-based approaches~\cite{ying2012floyd,barthe2019relational,li2019quantum,ying2018reasoning,ying2016foundations,ying2019toward} take positive operators as assertions for quantum states, following the observation of~\cite{d2006quantum}, and the expectation of a quantum state $\rho$ satisfying an assertion $M$ is then defined to be $\tr(M\rho)$.
A comparison of the quantum Hoare logics in~\cite{chadha2006reasoning,ying2012floyd,Kakutani:2009}  was provided in~\cite{rand2019verification}.


The logics proposed in~\cite{chadha2006reasoning,Kakutani:2009} support classical variables in the language. However, whether or not they are complete is still unknown. Completeness of the logic for a purely quantum language in~\cite{unruh2019quantum} has not been established either. 
On the other hand, the quantum Hoare logics in~\cite{ying2012floyd,ying2016foundations,ying2018reasoning,zhou2019applied} are complete, but the programming languages they consider do not natively support classical variables. Although infinite dimensional quantum variables are provided which are able to encode classical data like integers, in practice it is inconvenient (if possible) to specify and reason about properties in infinite dimensional Hilbert spaces. The subspace assertion in~\cite{zhou2019applied,unruh2019quantum} 
makes it easy to describe and determine properties of quantum programs, but the expressive power of the assertions is limited: they only assert if a given quantum state lies completely within a subspace. Consequently,  quantum algorithms which succeed with certain probability cannot be verified in their logics.

\textbf{Contribution of the current paper}: 
Our main contribution is a sound and relatively complete Hoare logic for a simple while-language where both classical and quantum variables are involved. The expressiveness and effectiveness of our logic are demonstrated by formally specifying and verifying Shor's factorisation algorithm~\cite{shor1994algorithms} and its related subroutines such as quantum Fourier transform, phase estimation, and order finding algorithms. To the best of our knowledge, this is the first time quantum Hoare logic is applied on verification of the whole algorithm of Shor's factorisation.  

Our work distinguishes itself from the works on quantum Hoare logic mentioned above in the following aspects:
\begin{enumerate}
	\item
	\emph{Programming language}.  
	The language considered in this paper supports both classical variables with infinite domains (e.g. the set of integers) and quantum variables. In contrast, the programming languages in~\cite{chadha2006reasoning,Kakutani:2009} allow only a finite variant of integer-type (and bounded iteration for ~\cite{chadha2006reasoning}), while only quantum variables are considered in~\cite{ying2012floyd,barthe2019relational,li2019quantum,zhou2019applied,unruh2019quantum,unruh2019quantumr,ying2018reasoning,ying2016foundations,ying2019toward}. 
	
	\item\emph{Classical-quantum states}. We define program states of our quantum language to be mappings from classical evaluations to partial density operators. This notion of \emph{positive-operator valued distribution} is a direct extension of probability distribution in the probabilistic setting, and often simplifies both specification and verification of program correctness, compared with the way adopted in~\cite{chadha2006reasoning} of regarding probability distributions over pairs of classical evaluation and quantum pure state as classical-quantum states. Note also that if only boolean-type classical variables and qubit-type quantum variables are considered, our definition coincides with the one in~\cite{selinger2004towards}.
	
	\item\emph{Classical-quantum assertions}. 
	Accordingly, assertions for the classical-quantum program states are defined to be mappings from classical evaluations to positive  operators, analogous to discrete random variables in the probabilistic case~\cite{morgan1996probabilistic}. This follows the expectation-based approach in~\cite{ying2012floyd,barthe2019relational,li2019quantum,ying2018reasoning,ying2016foundations,ying2019toward}. However, we also require that the preimage of each positive operator under the mapping be characterised by a classical first-order logic formula. Thus our definition of assertions is essentially in a \emph{hybrid} style, combining the satisfaction-based approach for the classical part and the expectation-based one for the quantum part. 
	
	\item\emph{A simpler quantum Hoare logic}.
	Thanks to the novel definition of classical-quantum states and assertions,  our quantum Hoare logic is much simpler and similar to the traditional Hoare logic, compared with those in~\cite{chadha2006reasoning,Kakutani:2009} for classical-quantum languages. Furthermore, since the language we consider includes probabilistic assignments, it provides a sound and relatively complete Hoare logic for probabilistic programs as a by-product.
	
	\item \emph{Auxiliary rules}. In addition to the sound and complete proof system, various auxiliary proof rules are provided to simplify reasoning in real applications. These include the standard disjunction, invariance, and existential quantifier introduction rules for the classical part of the assertions, and super-operator application for the quantum part.
	In particular, the (ProbComp) rule plays an essential role in verification of quantum algorithms which succeed with a certain probability. These rules turn out to be useful, as illustrated by a series of examples including Grover's search algorithm and Shor's factorisation algorithm.
\end{enumerate}

The paper is organised as follows. In the remainder of this section, related work on quantum Hoare logic is further discussed in detail. We review in Sec. 2 some basic notions from linear algebra and quantum mechanics that will be used in this paper. Classical-quantum states and assertions, which serve as the basis for the semantics and correctness of quantum programs, are defined in Sec. 3. The quantum programming language that we are concerned with is introduced in Sec. 4. A structural operational semantics, a denotational semantics, and a weakest (liberal) precondition semantics are also defined there. Sec. 5 is devoted to a Hoare logic for quantum programs written in our language, where proof rules for both partial and total correctness are proposed. These proof systems are shown to be both sound and relatively complete with respect to their corresponding correctness semantics. Auxiliary proof rules are presented in Sec.~\ref{sec:aux} to help reasoning in real applications. In addition to the running example of Grover's algorithm, verification of quantum Fourier transform, phase estimation, order finding, and Shor's algorithm are provided in Sec.~\ref{sec:case} to illustrate the expressiveness of our language as well as the effectiveness of the proposed Hoare logic. Finally, Sec. 8 concludes the paper and points out some directions for future study.

\subsection{Related work}

Although the first quantum programming languages traced back to~\cite{Om98,sanders2000quantum,bettelli2003toward}, Selinger's seminal paper~\cite{selinger2004towards} proposed for the first time a rigorous semantics for a simple quantum language QPL. The syntax of our language is heavily influenced by Selinger's work. We also borrow from him the idea of using partial density operators (i.e., not normalising them at each computational step) to describe quantum states. This convention simplifies both notationally and conceptually the semantics of quantum languages,  especially the description of non-termination. Our language excludes general recursion and procedure call from QPL, but includes $\tyint$ as a classical data type. Consequently, the semantic model in~\cite{selinger2004towards}, which takes finite tuples (indexed by evaluations of $\tybool$ variables in the program) of partial density operators as program states, does not apply directly to our language considered in this paper. Instead, we extend the `tuples of matrices' notion to matrix-valued functions with countable supports to denote classical-quantum states; see Sec.~\ref{sec:cqstates} for details.

An Ensemble Exogenous Quantum Propositional Logic (EEQPL) was proposed in~\cite{chadha2006reasoning} for a simple quantum language with bounded $\tyint$ type and bounded iteration. In contrast with Selinger's approach, program states of the language are probability sub-distributions over pairs of classical evaluation and quantum pure state. EEQPL has the ability of reasoning about  amplitudes of quantum states. This makes it very strong in expressiveness, but also hinders its use in applications such as debugging, as amplitudes of quantum states are not physically accessible through measurements. The soundness and (weak) completeness of EEQPL is proven in a special case where all real and complex values involved range over a finite set. General completeness result has not been reported.
A qualitative Hoare logic called QHL for Selinger's QPL (again, without general recursion and procedure call) was proposed in~\cite{Kakutani:2009}. The assertion language of QHL is an extended first-order logic with the primitives of applying a matrix on a set of qubits and computing the probability that a classical predicate is satisfied by the outcome of a quantum measurement. The proof system of QHL is sound, but no completeness result was established.

The idea of taking hermitian operators as quantum assertions was first proposed in~\cite{d2006quantum}, which paves the way for expectation-based reasoning about quantum programs. The notion of quantum weakest precondition was also proposed in the same paper in a language-independent manner. Based on these notions, a sound and relatively complete Hoare logic was proposed in~\cite{ying2012floyd} for a quantum while language where only quantum variables are involved. 
The operational semantics of our language, as well as the way the denotational one is derived from it, are inspired by~\cite{ying2012floyd}. Some auxiliary proof rules presented in Sec. 6 are motivated by~\cite{ying2019toward}.

The logic in~\cite{ying2012floyd} does not natively support classical variables. Instead, it allows quantum variables to be of (countably) infinite dimension, thus providing a way to encode classical types like $\tyint$ into quantum states. In contrast, our language explicitly includes classical data types, but only allows $\tyqudit$ (associated with a $d$-dimensional Hilbert space, where $d$ is an arbitrary but finite integer) for quantum variables. Including classical variables makes the description and verification of quantum algorithms easier and more natural, while excluding infinite dimensional quantum variables avoids the mathematical difficulties of dealing with infinite dimensional Hilbert spaces. To illustrate this, one may compare the correctness proofs of Grover's search algorithm in~\cite{ying2012floyd} and the current paper.

A restricted version of~\cite{ying2012floyd}, called applied quantum Hoare logic (aQHL), was proposed in~\cite{zhou2019applied} where quantum predicates are restricted to be projections, instead of general hermitian operators, with the purpose of simplifying its use in debugging and testing. To reason about robustness of quantum programs in aQHL, the (qualitative) satisfaction relation of a quantum state $\rho$ with respect to a projection $P$ is extended to an approximate one $\rho\models_\epsilon P$ for a given error bound $\epsilon$. However, this approximate satisfaction is quite different from the quantitative relation of~\cite{ying2012floyd} (and that in the current paper) which is determined by the expectation $\tr(P\rho)$: the former claims that $\rho$ is $\epsilon$-close to some state in $P$, which is not physically checkable by a quantum measurement; while the latter is the expected value of measuring $\rho$ using the projective measurement $\{P, I-P\}$.

The quantum Hoare logic in~\cite{ying2012floyd} has been implemented on Isabelle/HOL~\cite{liu2019formal}. It was also used in~\cite{hung2019quantitative} to reason about robustness of quantum programs against noise during execution, and extended in~\cite{ying2018reasoning} for analysis of parallel quantum programs. 
A quantum Hoare logic with ghost variables is introduced in~\cite{unruh2019quantum}. Interestingly, by introducing the ghost variables, one can express properties such as a quantum variable is unentangled with others. The logic is shown to be sound, but again, no completeness result is provided.

\section{Preliminaries}

This section is devoted to fixing some notations from linear algebra and quantum mechanics that will be used in this paper. For a thorough introduction of relevant backgrounds, we refer to~\cite[Chapter 2]{nielsen2002quantum}.

\subsection{Basic linear algebra}
Let $\h$ be a Hilbert space. In the finite-dimensional case which we are concerned with here, it is merely a complex linear space equipped with an inner product. Consequently, it is isomorphic to $\C^d$ where $d=\dim(\h)$, the dimension of $\h$.
Following the tradition in quantum computing, vectors in $\h$ are denoted in the Dirac form $|\psi\>$. The inner product of $|\psi\>$ and $|\phi\>$ is written $\<\psi|\phi\>$, and they are \emph{orthogonal} if $\<\psi|\phi\> = 0$. The \emph{outer product} of them, denoted $|\psi\>\<\phi|$, is a rank-one linear operator which maps any $|\psi'\>$ in $\h$ to $\<\phi|\psi'\> |\psi\>$.
The \emph{length} of $|\psi\>$ is defined to be $\||\psi\>\| \define\sqrt{\<\psi|\psi\>}$ and it is called \emph{normalised} if $\||\psi\>\|=1$. A set of vectors $B\define\{|i\> : i\in I\}$ in $\h$ is \emph{orthonormal} if each $|i\>$ is normalised and every two of them are orthogonal. Furthermore, if they span the whole space $\h$; that is, any vector in $\h$ can be written as a linear combination of vectors in $B$, then $B$ is called an \emph{orthonormal basis} of $\h$. 

Let $\lh$ be the set of linear operators on $\h$, and $\z_\h$ and $I_\h$ the zero and identity operators respectively. Let $A\in \lh$. The \emph{trace} of $A$ is defined to be $\tr(A) \define \sum_{i\in I} \<i|A|i\>$ for some (or, equivalently, any) orthonormal basis $\{|i\> : i\in I\}$ of $\h$. The \emph{adjoint} of $A$, denoted $A^\dag$, is the unique linear operator in $\lh$ such that $\<\psi|A|\phi\> = \<\phi|A^\dag |\psi\>^*$ for all $|\psi\>, |\phi\>\in \h$. Here for a complex number $z$, $z^*$ denotes its conjugate. Operator $A$ is said to be \emph{normal} if $A^\dag  A = A A^\dag$, \emph{hermitian} if $A^\dag = A$, \emph{unitary} if $A^\dag A = I_\h$, and \emph{positive} if for all $|\psi\>\in \h$, $\<\psi|A|\psi\>\geq 0$. Obviously,  hermitian operators are normal, and both unitary operators and positive ones are hermitian. Any normal operator $A$ can be written into a \emph{spectral decomposition} form $A  = \sum_{i\in I} \lambda_i |i\>\<i|$ where $\{|i\> : i\in I\}$ constitute some orthonormal basis of $\h$. Furthermore, if $A$ is hermitian, then all $\lambda_i$'s are real; if $A$ is unitary, then all $\lambda_i$'s have unit length; if $A$ is positive, then all $\lambda_i$'s are non-negative.  The L\"owner (partial) order $\le_\h$ on the set of hermitian operators on $\h$ is defined by letting $A\le_\h B$ iff $B-A$ is positive. 

Let $\h_1$ and $\h_2$ be two finite dimensional Hilbert spaces, and $\h_1\otimes \h_2$ their tensor product. 
Let $A_i\in \l(\h_i)$. The tensor product of $A_1$ and $A_2$, denoted $A_1\otimes A_2$ is a linear operator in $\l(\h_1\otimes \h_2)$ such that
$(A_1\otimes A_2)|(\psi_1\>\otimes |\psi_2)\> = (A_1|\psi_1\>)\otimes (A_2|\psi_2\>)$ for all $|\psi_i\> \in \h_i$. To simplify notations, we often write $|\psi_1\> |\psi_2\>$ for $|\psi_1\>\otimes |\psi_2\>$.
Given $\h_1$ and $\h_2$, the \emph{partial trace} with respect to $\h_2$, denoted $\tr_{\h_2}$, is a linear mapping from
$\l(\h_1\otimes \h_2)$ to $\l(\h_1)$ such that for any $|\psi_i\>, |\phi_i\> \in \h_i$, $i=1,2$,
$$\tr_{\h_2}(|\psi_1\>\<\phi_1|\otimes |\phi_1\>\<\phi_2|) = 
\<\phi_2|\phi_1\> |\psi_1\>\<\phi_1|.$$
The definition is extended to $\l(\h_1\otimes \h_2)$ by linearity.

A linear operator $\e$ from $\l(\h_1)$ to $\l(\h_2)$ is called a \emph{super-operator}.  It is said to be (1) \emph{positive} if it maps positive operators to positive operators; (2) \emph{completely positive} if all the cylinder extension $\mathcal{I}_\h\otimes \e$ is positive for all finite dimensional Hilbert space $\h$, where $\mathcal{I}_\h$ is the identity super-operator on $\lh$; (3) \emph{trace-preserving} (resp. \emph{trace-nonincreasing}) if 
$\tr(\e(A)) = \tr(A)$ (resp. $\tr(\e(A)) \leq \tr(A)$ for any positive operator $A\in \l(\h_1)$; (4) \emph{unital} (resp. \emph{sub-unital}) if 
$\e(I_{\h_1})= I_{\h_2}$ (resp. $\e(I_{\h_1}) \le_{\h_2} I_{\h_2}$).
From \emph{Kraus representation theorem}~\cite{kraus1983states}, a super-operator $\e$  from $\l(\h_1)$ to $\l(\h_2)$ is completely positive iff there is some set of linear operators, called \emph{Kraus operators}, $\{E_i : i\in I\}$ from $\h_1$ to $\h_2$ such that $\e(A) = \sum_{i\in I} E_i A E_i^\dag$ for all $A\in \l(\h_1)$. 
It is easy to check that the trace and partial trace operations defined above are both completely positive and trace-preserving super-operators. 
Given a completely positive super-operator $\e$ from $\l(\h_1)$ to $\l(\h_2)$ with Kraus operators $\{E_i : i\in I\}$, the adjoint of $\e$, denoted $\e^\dag$, is a completely positive super-operator from $\l(\h_2)$ back to $\l(\h_1)$ with Kraus operators $\{E_i^\dag : i\in I\}$. Then we have $(\e^\dagger)^\dag = \e$, and $\e$ is trace-preserving (resp. trace-nonincreasing) iff $\e^\dag$ is unital (resp. sub-unital). Furthermore, for any $A\in \l(\h_1)$ and $B\in \l(\h_2)$, $\tr(\e(A)\cdot B) = \tr(A\cdot \e^\dag(B))$.

 \subsection{Basic quantum mechanics}

According to von Neumann's formalism of quantum mechanics
\cite{vN55}, any quantum system with finite degrees of freedom is associated with a finite-dimensional Hilbert space $\h$ called its \emph{state space}. When $\dim(\h) = 2$, we call such a system a \emph{qubit}, the analogy of bit in classical computing. A {\it pure state} of the system is described by a normalised vector in $\h$. When the system is in one of an {ensemble} of states $\{|\psi_i\>: i\in I\}$ with respective probabilities $p_i$, we say it is in a \emph{mixed} state, represented by the \emph{density operator} $\sum_{i\in I} p_i|\psi_i\>\<\psi_i|$ on $\h$. Obviously, a density operator is positive and has trace 1. Conversely, by spectral decomposition, any positive operator with unit trace corresponds to some (not necessarily unique) mixed state.

The state space of a composite system (for example, a quantum system
consisting of multiple qubits) is the tensor product of the state spaces
of its components. For a mixed state $\rho$ in $\h_1 \otimes \h_2$,
partial traces of $\rho$ have explicit physical meanings: the
density operators $\tr_{\h_1}(\rho)$ and $\tr_{\h_2}(\rho)$ are exactly
the reduced quantum states of $\rho$ on the second and the first
component systems, respectively. Note that in general, the state of a
composite system cannot be decomposed into tensor product of the
reduced states on its component systems. A well-known example is the
2-qubit state
$|\Psi\>=\frac{1}{\sqrt{2}}(|00\>+|11\>).
$
This kind of state is called {\it entangled state}, and usually is the key to many quantum information processing tasks  such as teleportation
\cite{bennett1993teleporting} and superdense coding \cite{bennett1992communication}.

The \emph{evolution} of a closed quantum system is described by a unitary
operator on its state space: if the states of the system at times
$t_1$ and $t_2$ are $\rho_1$ and $\rho_2$, respectively, then
$\rho_2=U\rho_1U^{\dag}$ for some unitary operator $U$ which
depends only on $t_1$ and $t_2$. In contrast, the general dynamics which can occur in a physical system is
described by a completely positive and trace-preserving super-operator on its state space. 
Note that the unitary transformation $\e_U(\rho)\define U\rho U^\dag$ is
such a super-operator. 

A quantum {\it measurement} $\m$ is described by a
collection $\{M_i : i\in I\}$ of linear operators on $\h$, where $I$ is the set of measurement outcomes. It is required that the
measurement operators satisfy the completeness equation
$\sum_{i\in I}M_i^{\dag}M_i = I_\h$. If the system is in state $\rho$, then the probability
that measurement result $i$ occurs is given by
$p_i=\tr(M_i^{\dag}M_i\rho),$ and the state of the post-measurement system
is $\rho_i = M_i\rho M_i^{\dag}/p_i$ whenever $p_i>0$. 
Note that the super-operator $$\e_\m: \rho\mapsto
\sum_{i\in I} p_i \rho_i = \sum_{i\in I} M_i\rho M_i^\dag$$
which maps the initial state to the final (mixed) one when the measurement outcome is ignored is completely positive and trace-preserving.
A particular case of measurement is {\it projective measurement} which is usually represented by a hermitian operator $M$ in $\lh$ called \emph{observable}.  Let 
\[
M=\sum_{m\in \mathit{spec}(M)}mP_m
\] 
where $\mathit{spec}(M)$ is the set of eigenvalues of $M$, and $P_m$ the projection onto the eigenspace associated with $m$. 
Obviously, the projectors  $\{P_m:m\in
spec(M)\}$ form a quantum measurement. 

In this paper, we are especially concerned with the set 
 \[
 \ph \define \{M\in \lh : \z_\h\le M\le I_\h\}
 \]
of observables whose eigenvalues lie between 0 and 1, where $\le$ is the L\"owner  order on $\lh$. Furthermore, following Selinger's convention~\cite{selinger2004towards}, we regard the set of \emph{partial density operators} 
\[
\dh \define \{\rho\in \lh : \z_\h\le \rho, \tr(\rho)\leq 1\}
\]
as (unnormalised) quantum states. Intuitively, the partial density operator $\rho$ means that the legitimate quantum state $\rho/\tr(\rho)$ is reached with probability $\tr(\rho)$.
As a matter of fact, we note that $\dh\subseteq\ph$.

\section{Classical-quantum states and assertions}

\newsavebox{\tablebox}

{\renewcommand{\arraystretch}{1.5}
\begin{table}[t]
	\begin{lrbox}{\tablebox}
		\centering
		\begin{tabular}{cccc}
			\hline
			Classical & Probabilistic & Quantum & Classical-quantum \\
			\hline \hline
			state & probability (sub)distribution & (partial) density operator &  cq-state   \\ 
			$\cstate \in \cstates$ & $\mu \in\cstates \ra [0,1]$ & $\rho\in \dh$ & $\qstate \in \cstates \ra \dh$  \\
			& countable support  &  & countable support   \\
			\hline
			assertion & (discrete) random variable & observable &  cq-assertion   \\ 
			$\cassert\in \cstates \ra \{0,1\}$ & $f\in \cstates\ra [0, 1]$ & $M\in \ph$ & $\qassert\in \cstates \ra \ph$   \\ 
			& countable image  &  & countable image   \\		
			\hline 
			satisfaction & expectation & expectation &  expectation   \\ 
			$\cstate \models \cassert$ & $\sum_{\cstate\in \supp{\mu}} \mu(\cstate) f(\cstate)$ & $\tr(M\rho)$ & $\sum_{\cstate\in \supp{\qstate}} \tr\left[\qstate(\cstate) \qassert(\cstate)\right]$   \\ 
			\hline
		\end{tabular}
	\end{lrbox}
	\resizebox{0.9\textwidth}{!}{\usebox{\tablebox}}\\
	\vspace{2mm}
	\caption{Comparison of the basic notions in different language paradigms.
	}
	\label{tbl:comparison}
\end{table}
}

In this section, the notions of program states and assertions are introduced for our quantum language where classical variables are involved. To motivate the definition, we first review the corresponding ones in classical (non-probabilistic), probabilistic, and purely quantum programs. A brief summary of the comparison, which extends the one presented in~\cite{d2006quantum}, is depicted in Table~\ref{tbl:comparison}.

Let $\cstates$ be a non-empty set which serves as the state space of classical programs. An assertion $\cassert$ for classical states is (semantically) a mapping from $\cstates$ to $\{0,1\}$ such that a state $\cstate$ satisfies $\cassert$, written $\cstate\models\cassert$, iff $\cassert(\cstate) = 1$. In contrast, a state for probabilistic programs is a probability sub-distribution $\mu$ on $\cstates$ which has countable support\footnote{For simplicity, we only consider here probabilistic programs in which all random variables are taken discrete. The probabilities are not required to sum up to 1 in a probability sub-distribution, for the sake of describing non-termination.}; that is, $\mu(\cstate) >0$ for at most countably infinite many $\cstate\in \cstates$. Accordingly, an assertion for probabilistic states is a discrete random variable $f$ on $\cstates$ with countable image; that is, $f$ takes at most countably infinite many values. Finally, the `degree' of a state satisfying an assertion corresponds naturally to the expected value of a random variable with respect to a probability distribution. In particular, when the assertion is a traditional one, meaning that its image set is $\{0,1\}$, this expectation reduces to the probability of satisfaction.

To motivate the corresponding notions proposed in~\cite{d2006quantum} for purely quantum programs where classical variables are excluded, note that for any partial density operator $\rho$ in $\dh$ and any orthonormal basis $\{|i\> : i\in I\}$ of $\h$, the function $\mu$ with $\mu(i) =\<i|\rho |i\>$ defines a probability sub-distribution over $I$. Thus the set $\dh$ can naturally be taken as the state space for purely quantum programs. Similarly, for any observable $M\in \ph$, $\<i|M|i\>\in [0,1]$ for all $i\in I$. Thus $\ph$ can be regarded as the quantum extension of probabilistic assertions. Finally, the degree of a state $\rho$ satisfying an assertion $M$ is the expected value $\sum_i \mu(i) \<i| M |i\>$, which, when $|i\>$'s are eigenstates of $M$ or $\rho$, is exactly $\tr(M\rho)$. Most remarkably, as $\tr(M\rho)$ is the expected value of outcomes when the projective measurement represented by $M$ is applied on state $\rho$, it can be physically estimated (instead of mathematically calculated) when multiple copies of $\rho$ are available. This physical implementability is especially important in black box testing of quantum programs, where programs can be executed multiple times, but the implementation detail is not available.

For programs where both quantum and classical variables are involved, we have to find a way to combine the notions for probabilistic programs and purely quantum ones.
The following three subsections are devoted to this goal.

\subsection{Classical-quantum states}\label{sec:cqstates}

We assume two basic types for classical variables: $\tybool$ with the corresponding domain $D_{\tybool} \define \{\true, \false\}$ and $\tyint$ with $D_{\tyint} \define\Z$. For each integer $d\geq 1$, we assume a basic quantum type $\tyqudit$ with domain $\h_{\tyqudit}$, which is a $d$-dimensional Hilbert space with an orthonormal basis $\{|0\>, \ldots, |d-1\>\}$. In particular, we denote the quantum type for $d=2$ as $\tyqubit$. Let $\Var$, ranged over by $x,y,\cdots$, and $\qVar$, ranged over by $q, r, \cdots$, be countably infinite sets of classical and quantum variables, respectively. 
Let $\Sigma \define \Var \rightarrow D$ be the (uncountably infinite) set of classical states, where $D\define D_{\tybool}\cup D_{\tyint}$.  We further require that states in $\cstates$ respect the types of classical variables; that is, $\cstate(x) \in D_{\mathit{type}(x)}$ for all $\cstate\in \cstates$ and $x\in \Var$, where $\mathit{type}(x)$ denotes the type of $x$. For any
 finite subset $V$ of $\qVar$, let
 \[\h_V \define \bigotimes_{q\in V} \h_{q},
 \]
 where $\h_{q} \define \h_{\type(q)}$ is the Hilbert space associated with $q$. For simplicity, we let $\h_{\emptyset} \define \C$.
 As we use subscripts to distinguish Hilbert spaces with different (sets of) quantum variables, their order in the tensor product is not essential. In this paper, when we refer to a subset of $\qVar$, it is always assumed to be finite.
 
\begin{definition}\label{def:cqstate}
	Given $V\subseteq \qVar$, a classical-quantum state (cq-state for short) $\qstate$ over $V$ is a function in $\cstates\rightarrow\dhv$ such that  
	\begin{enumerate}
		\item the support of $\qstate$, denoted $\supp{\qstate}$, is countable. That is, $\qstate(\cstate) \neq \z_{\h_V}$ for at most countably infinite many $\cstate\in \cstates$;
		\item $	\tr(\qstate) \define \sum_{\cstate\in \supp{\qstate}}\tr[\qstate(\cstate)] \leq 1$.
	\end{enumerate}
\end{definition}
One may note the similarity of the above definition with probability sub-distributions. Actually, a probability sub-distribution is obtained by assuming that $V = \emptyset$, as in this case $\dhv = [0,1]$. Recall also that in~\cite{selinger2004towards}, the state for a quantum program with $n$ bits $b_1, \ldots, b_n$ and $m$ qubits $q_1,\ldots, q_m$ is given by a $2^n$-tuple $(\rho_0, \ldots, \rho_{2^n-1})$ of partial density matrices, each with dimension $2^m\times 2^m$. Intuitively, each $\rho_i$ denotes the corresponding state of the qubits when the state of the classical bits $b_1\ldots b_n$ constitute the binary representation of $i$. Such a tuple can be described by a cq-state $\qstate$ over $\{q_1,\ldots, q_m\}$ such that $\qstate(\cstate_i) = \rho_i$, $0\leq i<2^n$, where $\Var\define \{b_1, \ldots, b_n\}$, and $\cstate_i(b_k) = i_k$ with $\sum_{k=1}^{n}i_k  2^{n-k}=i$.

Sometimes it is convenient to denote a cq-state $\qstate$ by the explicit form $\bigoplus_{i\in I}\<\cstate_i, \rho_i\>$ where
$\supp{\qstate}= \{\cstate_i : i\in I\}$ 
and $\qstate(\cstate_i)=\rho_i$ for each $i\in I$. 
When $\qstate$ is a simple function such that $\supp \qstate=\{\cstate\}$ for
some $\cstate$ and $\qstate(\cstate)=\rho$, we denote $\qstate$ simply by $\<\cstate, \rho\>$. 
Let  $\{\qstate_i : i\in I\}$ be a countable set of cq-states over $V$ such that  for any $\cstate$,
	$\sum_{i\in I} \qstate_i(\cstate) = \rho_\cstate$ for some $ \rho_\cstate\in \dhv$ and $\sum_{i\in I} \tr(\qstate_i) \leq 1$. Then the summation of them, denoted $\sum_{i\in I} \qstate_i$, is a cq-state $\qstate$ over $V$
	such that for any $\cstate\in \cstates$, $\qstate(\cstate)= \rho_\cstate$. Obviously, $\supp{\qstate} = \bigcup_{i\in I}\supp{\qstate_i}$. 
It is worth noting the difference between $\sum_{i\in I} \<\cstate_i, \rho_i\>$, the summation of some (simple) cq-states, and  $\bigoplus_{i\in I} \<\cstate_i, \rho_i\>$, the explicit form of a single one: in the latter $\cstate_i$'s must be distinct while in the former they may not.

Let $\e$ be a completely positive and trace-nonincreasing super-operator from $\l(\h_V)$ to $\l(\h_W)$. We extend it to $\qstatesh{V}$ in a point-wise way: $\e(\qstate)(\cstate) = \e(\qstate(\cstate))$ for all $\cstate$. Note that $\e(\z_{\h_V}) = \z_{\h_W}$ and $\tr(\e(\rho)) \leq \tr(\rho)$ for all $\rho\in \dhv$. Thus 
$\e(\qstate)$ is a valid cq-state provided that $\qstate$ is. In particular, for any $V\subseteq qv(\qstate)$, the partial trace
$\tr_{\h_{V}}(\qstate)$ is a cq-state which maps any $\cstate\in \cstates$ to $\tr_{\h_{V}}(\qstate(\cstate))$. 
Furthermore, for any $\rho \in\d(\h_W)$ with $W\cap \qv(\qstate) = \emptyset$ and $\tr(\rho) \leq 1$,
$\qstate\otimes \rho$ is a cq-state in $\qstatesh{\qv(\qstate)\cup W}$ which maps $\cstate$ to $\qstate(\cstate)\otimes \rho$. In the special case that $W=\emptyset$, $\rho$ becomes a real number  in $[0,1]$, and we write $\rho\qstate$  for $\qstate\otimes \rho$.

\begin{example}\label{ex:tel}
To better understand the notion of cq-states, let us consider the output of quantum Teleportation algorithm~\cite{bennett1993teleporting}, where Alice would like to teleport an arbitrary state $\rho_q$ to Bob, using a pre-shared Bell state $\frac{1}{\sqrt{2}} (|00\> + |11)_{q_1, q_2}$ between them. Here $q$, $q_1$, and $q_2$ are all $\tyqubit$-type variables, and we use subscripts to indicate the quantum variables on which the states and operators are acting. Suppose $\cstate_i$ and $\rho^i$, $0\leq i\leq 3$, are the classical and final quantum states (of $q_2$), respectively, when the measurement outcome of Alice is $i$. Then the cq-state output by the algorithm can be written as
\[
\qstate\define \bigoplus_{0\leq i\leq 3} \left\<\cstate_i, \frac{1}{4} |i\>_{q, q_1}\<i| \otimes \rho^i_{q_2}\right\>.
\]
\end{example}

Note that a more intuitive way to describe the cq-state $\qstate$ in Example~\ref{ex:tel} is to use a probability distribution of  classical-quantum state pairs: $\{\frac{1}{4}(\cstate_i, |i\>_{q, q_1}\<i| \otimes \rho^i_{q_2}): 0\leq i\leq 3\}$.
We will explain why we decide not to do so in more detail at the end of this subsection after more notations are introduced.

Let $\qstatesh{V}$ be the set of all cq-states over $V$, and $\s$ the set of all cq-states; that is, 
$$\s \define \bigcup_{V\subseteq \qVar} \qstatesh{V}.$$ 
When $\qstate\in \qstatesh{V}$, denote by $qv(\qstate) \define V$ the set of quantum variables in $\qstate$.
We extend the L\"{o}wner order $\le_V$ for $\l(\h_V)$ point-wisely to $\s$ by letting $\qstate\le \qstate'$ iff $qv(\qstate) = qv(\qstate')$ and for all $\cstate\in \Sigma$, $\qstate(\cstate) \le_{qv(\qstate)} \qstate'(\cstate)$. 
Obviously, when both $\qstate$ and $\qstate'$ are probability sub-distributions, i.e., $qv(\qstate) = qv(\qstate')=\emptyset$, then $\qstate\le \qstate'$ iff they are related with the partial order defined in~\cite{morgan1996probabilistic} for probability sub-distributions.

The following lemma shows that $\s$ is an $\omega$-complete partial order (CPO) under $\le$. 

\begin{lemma}\label{lem:cpo}
	For any $V\subseteq \qVar$, $\qstatesh{V}$ is a pointed $\omega$-CPO under $\le$, with the least element being the constant $\z_{\h_V}$ function, denoted $\emptydis_V$. Furthermore, $\s$ as a whole is an $\omega$-CPO under $\le$.
\end{lemma}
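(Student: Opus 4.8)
The plan is to reduce everything to the known fact that $\dhv$, the set of partial density operators on a fixed finite-dimensional Hilbert space $\h_V$, is a pointed $\omega$-CPO under the Löwner order $\le_V$, and then lift this structure pointwise to functions $\cstates \to \dhv$, taking care of the countable-support and bounded-trace side conditions. So I would first recall (or quickly verify) the base case: $\z_{\h_V} \le_V \rho$ for every $\rho \in \dhv$, so $\z_{\h_V}$ is the least element; and given an increasing chain $\rho_0 \le_V \rho_1 \le_V \cdots$ in $\dhv$, the supremum exists in $\lh_V$ (each matrix entry, or equivalently each $\langle\psi|\rho_n|\psi\rangle$, is a bounded monotone real sequence, and positivity plus $\tr \le 1$ are preserved in the limit by continuity of $\tr$ and of $\psi \mapsto \langle\psi|\cdot|\psi\rangle$), and it is the least upper bound. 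This is standard and I would cite it as folklore or refer to the analogous statement for density operators in the cited quantum-programming literature.

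\medskip

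\textbf{Pointwise lifting.} For fixed $V$, the least element of $\qstatesh{V}$ is clearly $\emptydis_V$, the constant $\z_{\h_V}$ function: it has empty (hence countable) support, trace $0 \le 1$, and $\emptydis_V \le \qstate$ for every $\qstate$ by the pointwise definition of $\le$. For $\omega$-completeness, let $\qstate_0 \le \qstate_1 \le \cdots$ be an increasing chain in $\qstatesh{V}$. Define $\qstate(\cstate) \define \bigsqcup_n \qstate_n(\cstate)$, the supremum in $\dhv$ of the chain $\qstate_0(\cstate) \le_V \qstate_1(\cstate) \le_V \cdots$, which exists by the base case. I must then check the three defining conditions of a cq-state. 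First, $\qstate(\cstate) \in \dhv$: positivity is clear, and $\tr[\qstate(\cstate)] = \lim_n \tr[\qstate_n(\cstate)]$ by continuity of trace on this finite-dimensional monotone chain. Second, $\supp{\qstate} \subseteq \bigcup_n \supp{\qstate_n}$ is a countable union of countable sets, hence countable — here one uses that $\qstate(\cstate) = \z_{\h_V}$ whenever all $\qstate_n(\cstate) = \z_{\h_V}$, which holds because the chain is increasing and $\z_{\h_V}$ is the least element (so if some $\qstate_n(\cstate) \ne \z_{\h_V}$ then $\qstate(\cstate) \ge_V \qstate_n(\cstate) \ne \z_{\h_V}$ as well, actually the point is that $\qstate(\cstate) = \z$ forces every $\qstate_n(\cstate) = \z$). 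Third, the bounded-trace condition $\tr(\qstate) = \sum_{\cstate} \tr[\qstate(\cstate)] \le 1$. Finally, I must verify $\qstate$ is the least upper bound: $\qstate_n \le \qstate$ for all $n$ is immediate from the pointwise suprema, and if $\qstate' $ is any upper bound then $\qstate'(\cstate) \ge_V \qstate_n(\cstate)$ for all $n$, so $\qstate'(\cstate) \ge_V \bigsqcup_n \qstate_n(\cstate) = \qstate(\cstate)$.

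\medskip

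\textbf{The main obstacle.} The one genuinely non-routine step is the bounded-trace condition for the limit: I need to interchange a supremum over $n$ with a sum over (possibly infinitely many) $\cstate$, i.e. show $\sum_{\cstate} \lim_n \tr[\qstate_n(\cstate)] \le 1$ given $\sum_{\cstate} \tr[\qstate_n(\cstate)] \le 1$ for each $n$. This is exactly the monotone convergence theorem for sums: since $\tr[\qstate_n(\cstate)]$ is nondecreasing in $n$ for each $\cstate$ and all terms are nonnegative, $\sum_{\cstate} \lim_n \tr[\qstate_n(\cstate)] = \lim_n \sum_{\cstate} \tr[\qstate_n(\cstate)] \le 1$. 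I would spell this out via finite truncations (for any finite $F \subseteq \cstates$, $\sum_{\cstate \in F} \tr[\qstate(\cstate)] = \lim_n \sum_{\cstate \in F} \tr[\qstate_n(\cstate)] \le 1$, then take the supremum over $F$). Everything else is bookkeeping.

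\medskip

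\textbf{The global statement.} For $\s = \bigcup_{V} \qstatesh{V}$: since $\qstate \le \qstate'$ requires $qv(\qstate) = qv(\qstate')$ by definition, any nonempty increasing chain in $\s$ lies entirely within a single $\qstatesh{V}$, so its supremum exists there by the first part. Hence $\s$ is an $\omega$-CPO under $\le$ (note it has no global least element, since elements with different $qv$ are incomparable, which is why the lemma only claims \emph{pointed} for each $\qstatesh{V}$ and plain $\omega$-CPO for $\s$).
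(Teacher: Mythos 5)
Your proposal is correct and follows essentially the same route as the paper, which simply observes that the result ``follows directly'' from the fact (due to Selinger) that $\dhv$ is a pointed $\omega$-CPO under the L\"owner order and lifts it pointwise. You merely spell out the details the paper leaves implicit --- the countable-support and bounded-trace checks via monotone convergence, and the observation that any chain in $\s$ stays inside a single $\qstatesh{V}$ --- all of which are sound.
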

\begin{proof}
	The result follows directly from the fact that for any $V\subseteq \qVar$, $\dhv$ is an $\omega$-CPO under the L\"{o}wner order $\le_V$, with $\z_{\h_V}$ being its least element~\cite{selinger2004towards}.
\end{proof}

When $\qstate \le \qstate'$, there exists a unique $\qstate''\in \qstatesh{qv(\qstate)}$, denoted $\qstate' - \qstate$, such that $\qstate'' + \qstate = \qstate'$. 
 For any real numbers $\lambda_i$, $i\in I$, if both $\qstate_+ \define \sum_{\lambda_i >0}\lambda_i  \qstate_i$ and $\qstate_- \define \sum_{\lambda_i <0}(-\lambda_i)  \qstate_i$ are well-defined and $\qstate_- \le  \qstate_+$ ,
then the linear-sum $\sum_{i\in I}\lambda_i  \qstate_i$ is defined to be $\qstate_+ - \qstate_-$. In the rest of this paper, whenever we write $\sum_{i\in I}\lambda_i  \qstate_i$ we always assume that it is well-defined.

		\begin{example}\label{exa:cqstate}
		Let $\cstate_1\neq \cstate_2 \in \cstates$, $q\in \qVar$ with $\type(q) = \tyqubit$, 
		$
		\qstate \define \<\cstate_1, 0.5 |0\>_q\<0|\>  \oplus \<\cstate_2,  0.25I_q\>.
		$
		 and $\qstate' \define \<\cstate_2, |+\>_q\<+|\>$. Then we have the linear-sum
	\[
	\qstate - 0.25 \qstate' =  \<\cstate_1, 0.5|0\>_q\<0|\> \oplus \<\cstate_2,  0.25|-\>_q\<-|\>.
	\]
\end{example}

To conclude this subsection, we would like to say a few words about the design decision we make in Definition~\ref{def:cqstate}. Recall that a partial density operator $\rho$ encodes both the (normalised) quantum state $\rho/\tr(\rho)$ and the probability $\tr(\rho)$ of reaching it. Thus the meaning of a cq-state $\qstate = \bigoplus_{i\in I} \<\cstate_i, \rho_i\>$ is that with probability $\tr(\rho_i)$, the classical and quantum systems are in states $\cstate_i$ and $\rho_i/\tr(\rho_i)$, respectively. This also explains why we have the requirement in Definition~\ref{def:cqstate}(2): the probabilities of all possible state pairs sum up to at most 1. One may ask why we do not directly define cq-states as sub-distributions over such classical-quantum state pairs, just as in~\cite{chadha2006reasoning} (see also the comment below Example~\ref{ex:tel})? To see the reason, note that $\dhv$ is a convex set, and the quantum state $\sum_i \lambda_i \rho_i$ is indistinguishable from the ensemble that lies in $\rho_i$ with probability $\lambda_i\geq 0$, $\sum_i \lambda_i =1$. Thus we would have to introduce some auxiliary rules to equate the probability distribution $\sum_{i} \lambda_i \<\cstate, \rho_i\>$ with the single state $\<\cstate, \sum_{i} \lambda_i \rho_i\>$, if cq-states had been defined as sub-distributions on classical-quantum state pairs. In contrast, in our framework these two cq-states are \emph{equal by definition}, from the linear-sum form introduced above. Finally, note that this difficulty does not appear in probabilistic programs, as the classical state space $\cstates$ is discrete, and there does not exist any algebraic structure in it.

\subsection{Classical-quantum assertions} 
 
Recall that  assertions for classical program states are usually represented as first order logic formulas over $\Var$.
For any classical assertion $\cassert$, denote by $\sem{\cassert} \define \{\cstate\in \Sigma : \cstate\models \cassert\}$ the set of classical states that satisfy $\cassert$. Two assertions $\cassert$ and $\cassert'$ are equivalent, written $\cassert \equiv \cassert'$, iff $\sem{\cassert} = \sem{\cassert'}$.

\begin{definition}
	Given $V\subseteq \qVar$, a classical-quantum assertion (cq-assertion for short) $\qassert$ over $V$ is a function in $\cstates\rightarrow\phv$ such that  
\begin{enumerate}
	\item the image set $\qassert(\Sigma)$ of $\qassert$ is countable;
	\item for each $M\in \qassert(\Sigma)$, the preimage $\qassert^{-1}(M)$ is definable by a classical assertion $\cassert$ in the sense that $\sem{\cassert} =  \qassert^{-1}(M)$.
\end{enumerate}
\end{definition}
Obviously, the above definition is a natural extension of discrete random variables when $\phv$, the set of operators between $\z_{\h_V}$ and $I_{\h_V}$ with respect to the L\"{o}wner order, is regarded as the quantum generalisation of $[0,1]$. The second clause is introduced to guarantee a compact representation of cq-assertions.

For convenience, we do not distinguish $\cassert$ and $\sem{\cassert}$ when denoting a cq-assertion. Consequently, we write $\bigoplus_{i\in I}\<\cassert_i, M_i\>$ instead of $\bigoplus_{i\in I}\<\sem{\cassert_i}, M_i\>$ for a cq-assertion  $\qassert$ whenever
$\qassert(\cstates) = \{M_i : i\in I\}$ 
and $\qassert^{-1}(M_i)=\sem{\cassert_i}$ for each $i\in I$. Note that this representation is not unique: the representative assertion $\cassert_i$ can be replaced by $\cassert'_i$ whenever $\cassert_i\equiv \cassert'_i$.
Furthermore, the summand with zero operator $\z_{\h_V}$ is always omitted.
 In particular, when $\qassert(\cstates) = \{\z_\h, M\}$ or $\{M\}$ for some $M\neq \z_{\h_V}$, we simply denote $\qassert$ by $\<\cassert, M\>$ for some $\cassert$ with $\qassert^{-1}(M)=\sem{\cassert}$.

Note that any observable $M$ in $\ph$ corresponds to some \emph{quantitative property} of quantum states.
	Thus intuitively, a cq-assertion $\bigoplus_{i\in I}\<\cassert_i, M_i\>$ specifies that whenever the classical state satisfies $\cassert_i$, the property $M_i$ is checked on the corresponding quantum state. The average value of the satisfiability will be defined in the next subsection. 

	\begin{example}\label{ex:telassert}
	Back to the Teleportation algorithm in Example~\ref{ex:tel}. The
	cq-assertion
	\[
	 \qassert_1 \define \bigoplus_{0\leq i\leq 3}\<x=i, |i\>_{q,q_1}\<i|\>
	\]
	where $x$ is the classical variable used by Alice to store (and send to Bob) the measurement outcome,
	claims that the states of $q$ and $q_1$ are both in the computational basis, and they together correspond to the measurement outcome of Alice. To be specific, it states that whenever $x=i$, the corresponding quantum state of $q$ and $q_1$ should be $|x_0\>_q|x_1\>_{q_1}$ where $x_0x_1$ is the binary representation of $x$.
	We will make it more rigorous in Example~\ref{ex:telexp}.
	
	Suppose the teleported state $\rho = |\psi\>\<\psi|$ is a pure one. Then the cq-assertion 
	\[
	\qassert_2 \define \<\true, |\psi\>_{q_2}\<\psi|\>,
	\]
	when applied on the output cq-state,
	actually computes the (average) precision of the Teleportation algorithm when $|\psi\>$ is taken as the input. Again, we refer to Example~\ref{ex:telexp} for more details.
\end{example}

Let $\qassertsh{V}$ be the set of all cq-assertions over $V$, and $\a$ the set of all cq-assertions; that is, 
$$\a \define \bigcup_{V\subseteq \qVar} \qassertsh{V}.$$ 
When $\qassert\in \qassertsh{V}$, denote by $qv(\qassert) \define V$ the set of quantum variables in $\qassert$. Again, we extend the L\"{o}wner order $\le_V$ for $\l(\h_V)$ point-wisely to $\a$ by letting $\qassert\le \qassert'$ iff $qv(\qassert) = qv(\qassert')$ and for all $\cstate\in \Sigma$, $\qassert(\cstate) \le_{qv(\qassert)} \qassert'(\cstate)$. 
It is easy to see that 
 $\qasserts$ is also a pointed $\omega$-CPO under $\le$, with the least element being $\emptydis_{V}$. Furthermore, it has the largest element $\top_{V} \define \<\true, I_{\h_V}\>$. If both $\qassert$ and $\qassert'$ are probabilistic assertions, i.e., $qv(\qassert) = qv(\qassert')=\emptyset$, then $\qassert\le \qassert'$ iff they are related with the partial order defined in~\cite{morgan1996probabilistic} for probabilistic assertions.

When $\qassert \le \qassert'$,  we denote by $\qassert' - \qassert$ the unique $\qassert''\in \qassertsh{qv(\qassert)}$ such that $\qassert'' + \qassert = \qassert'$. With these notions, summation and linear-sum of cq-assertions can be defined similarly as for cq-states.

Given a classical assertion $\cassert$, we denote by $\cassert \bowtie \sum_{i} \<\cassert_i, M_i\>$ the cq-assertion $\sum_{i} \<\cassert \bowtie \cassert_i, M_i\>$  (if it is valid) where $\bowtie$ can be any logic connective such as $\wedge$, $\vee$, $\Rightarrow$, $\Leftrightarrow$, etc. As $\sem{\cassert \bowtie \cassert_1} =  \sem{\cassert \bowtie \cassert_2}$ provided that  $\sem{\cassert_1} = \sem{\cassert_2}$, these notations are well-defined. Let $\f$ be a completely positive and sub-unital linear map from $\phv$ to $\mathcal{P}(\h_W)$. We extend it to $\qassertsh{V}$ in a point-wise way. Note that $\f(\z_{\h_V}) = \z_{\h_{W}}$ and $\f(M) \le \f(I_{\h_V}) \le I_{\h_W}$ for all $M\in \phv$. Thus $\f(\qassert)$ is a valid cq-assertion provided that $\qassert$ is. In particular, when $qv(\qassert) \cap W = \emptyset$, $\qassert\otimes I_{\h_W}$ is a cq-assertion which maps any $\cstate\in \cstates$ to $\qassert(\cstate)\otimes I_{\h_W}$. Note that $1$ is the identity operator on $\h_{\emptyset}$. Sometimes we also abuse the notation a bit to write $\cassert$ for $\<\cassert, 1\>$ where $\cassert$ is a classical assertion, and $\lambda$ for $\<\true, \lambda\>$ where $\lambda\in [0,1]$.

The (lifted) L\"{o}wner order provides a natural way to compare cq-assertions over the same set of quantum variables. However, in later discussion of this paper, we sometimes need to compare cq-assertions acting on different quantum variables. To deal with this situation, we introduce a pre-order $\lesssim$ on the whole set $\a$ of cq-assertions.
To be specific, let $V_1, V_2$ be two subsets of $\qVar$, and $\qassert_i\in \a_{V_i}$, $i=1,2$. We say $\qassert_1\lesssim \qassert_2$ whenever $\qassert_1\otimes I_{\h_{V_2\backslash V_1}} \le I_{\h_{V_1\backslash V_2}} \otimes \qassert_2$. Obviously, when restricted on some given set of quantum variables, $\lesssim$ coincides with $\le$. 
Let $\eqsim$ be the kernel of $\lesssim$. Then $\qassert_1 \eqsim \qassert_2$ iff there exists $\qassert$ such that $\qassert_1 = \qassert\otimes I_V$ and $\qassert_2 = \qassert \otimes I_W$ for some $V$ and $W$.

\subsection{Expectation of satisfaction}

With the above notions, we are now ready to define the \emph{expectation} (or \emph{degree}) of a cq-state satisfying a cq-assertion. 

\begin{definition}\label{def:satisfaction}
	Given a cq-state $\qstate$ and a cq-assertion $\qassert$ with $\qv(\qstate) \supseteq \qv(\qassert)$, the expectation of $\qstate$ satisfying $\qassert$ is defined to be 
	\[
	\Exp(\qstate \models \qassert) \define 
	 \sum_{\cstate\in \supp{\qstate}} \tr\left[\left(\qassert(\cstate)\otimes I_{\h_{V}} \right) \cdot \qstate(\cstate)\right]
	 = 
	 \sum_{\cstate\in \supp{\qstate}} \tr\left[\qassert(\cstate) \cdot \tr_{\h_{V}}(\qstate(\cstate))\right]
	\] 
	where $V = \qv(\qstate) \backslash \qv(\qassert)$ and the dot $\cdot$ denotes matrix multiplication.
\end{definition}

Again, when both $\qstate$ and $\qassert$ are probabilistic, i.e., $qv(\qstate) = qv(\qassert)=\emptyset$, then the expectation defined above is exactly the expected value of $\qassert$ over $\qstate$ defined in~\cite{morgan1996probabilistic} for probabilistic programs.

\begin{example}\label{ex:telexp}
	Consider again the Teleportation algorithm. Let
	$
	\qstate$ be defined as in Example~\ref{ex:tel}, and $\qassert_1$ and $\qassert_2$ in Example~\ref{ex:telassert}. Note that $\cstate_i \models (x = j)$ iff $i=j$. Thus 
	\[
	\Exp(\qstate \models \qassert_1) = \sum_{i=0}^3 \frac{1}{4} \tr(|i\>_{q,q_1}\<i|\otimes I_{q_2} \cdot |i\>_{q,q_1}\<i| \otimes \rho^i_{q_2}) = 1,
	\]
	meaning that with probability 1, $x$ equals the value represented by the states of $q$ and $q_1$.
	
	For $\qassert_2$, we compute 
		\[
	\Exp(\qstate \models \qassert_2) = \sum_{i=0}^3 \frac{1}{4} \tr(I_{q,q_1}\otimes |\psi\>_{q_2}\<\psi| \cdot |i\>_{q,q_1}\<i| \otimes \rho^i_{q_2}) =  \frac{1}{4} \sum_{i=0}^3 \<\psi|\rho^i|\psi\>,
	\]
	which denotes the average fidelity between the output states $\rho^i$ and the ideal one $|\psi\>$.
\end{example}

We collect some properties of the $\Exp$ function in the following lemmas.

\begin{lemma}\label{lem:bpdeg}
	For any cq-state $\qstate\in \qstatesh{V}$, cq-assertion $\qassert\in \qassertsh{W}$ with $W\subseteq V$, and classical assertion $\cassert$,
	\begin{enumerate}
		\item $\Exp(\qstate \models \qassert)\in [0,1]$;
		\item $\Exp(\emptydis_V \models \qassert) = \Exp(\qstate \models \emptydis_{W})=0$, $\Exp(\qstate \models \top_{W}) = \tr(\qstate)$;
		\item $\Exp(\qstate \models \qassert) = \sum_i \lambda_i \Exp(\qstate \models \qassert_i)$ if $\qassert = \sum_i \lambda_i \qassert_i$;
		\item $\Exp(\qstate \models \qassert) = \sum_i \lambda_i \Exp(\qstate_i \models \qassert)$ if $\qstate = \sum_i \lambda_i \qstate_i$;
		\item\label{cl:lem3.9} $\Exp(\qstate |_\cassert \models \qassert) = \Exp(\qstate \models \cassert\wedge\qassert)$
		where $\qstate |_{\cassert}$ is the cq-state by restricting $\qstate$ on the set of classical states $\cstate$ with $\cstate\models \cassert$;
		{\item\label{cl:super} $\Exp(\qstate \models \f(\qassertp)) = \Exp(\f^\dag(\qstate) \models \qassertp)$ for any $\qassertp\in \qassertsh{W'}$ and any completely positive and sub-unital super-operator $\f$ from $\h_{W'}$ to $\h_W$.}
	\end{enumerate} 
\end{lemma}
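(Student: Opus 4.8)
The plan is to verify each of the six clauses directly from Definition~\ref{def:satisfaction}, in the natural order, since each is either a routine computation or an immediate consequence of linearity of trace and of the point-wise structure of cq-states and cq-assertions. For clause~(1), I would note that $\Exp(\qstate\models\qassert) = \sum_{\cstate\in\supp{\qstate}}\tr[\qassert(\cstate)\cdot\tr_{\h_V}(\qstate(\cstate))]$, and that each summand lies in $[0,\tr(\tr_{\h_V}(\qstate(\cstate)))] = [0,\tr(\qstate(\cstate))]$ because $\z\le_{\h_W}\qassert(\cstate)\le_{\h_W} I_{\h_W}$ implies $0\le\tr(\qassert(\cstate)\sigma)\le\tr(\sigma)$ for any positive $\sigma$; summing over $\supp{\qstate}$ and using Definition~\ref{def:cqstate}(2) gives the upper bound $\tr(\qstate)\le 1$. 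Clause~(2) is immediate: $\emptydis_V$ maps everything to $\z_{\h_V}$ so both its support is relevant trivially and $\tr[\qassert(\cstate)\cdot\z]=0$; for $\qstate\models\emptydis_W$ we have $\qassert(\cstate)=\z$ everywhere so every summand vanishes; and for $\top_W=\<\true,I_{\h_W}\>$, $\tr[(I_{\h_W}\otimes I_{\h_V})\qstate(\cstate)]=\tr(\qstate(\cstate))$, which sums to $\tr(\qstate)$.

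Clauses~(3) and~(4) follow from bilinearity of $(\qassert,\qstate)\mapsto\sum_\cstate\tr[(\qassert(\cstate)\otimes I)\qstate(\cstate)]$ in each argument, together with the convention that $\sum_i\lambda_i\qassert_i$ (resp.\ $\sum_i\lambda_i\qstate_i$) is only written when well-defined, i.e.\ when it unfolds as $\qassert_+ - \qassert_-$ with $\qassert_-\le\qassert_+$; one splits the linear-sum into its positive and negative parts, applies additivity of trace on each (which holds because countable sums of positive operators behave well under trace), and recombines. Here one should be a little careful that the supports/images involved are countable so that rearrangement of the (absolutely convergent, by clause~(1)) series is justified — that is the only analytic subtlety, and it is handled exactly as in the probabilistic case of~\cite{morgan1996probabilistic}. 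For clause~(5), I would observe that $\supp{\qstate|_\cassert} = \supp{\qstate}\cap\sem{\cassert}$ and that $(\cassert\wedge\qassert)(\cstate)$ equals $\qassert(\cstate)$ when $\cstate\models\cassert$ and $\z$ otherwise; hence the two sums in $\Exp(\qstate|_\cassert\models\qassert)$ and $\Exp(\qstate\models\cassert\wedge\qassert)$ have exactly the same nonzero terms.

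The most substantive clause is~(6), $\Exp(\qstate\models\f(\qassertp)) = \Exp(\f^\dag(\qstate)\models\qassertp)$, and it is here I expect the only real bookkeeping. The key input is the adjoint identity from the Preliminaries: for a completely positive super-operator $\f$ and operators $A,B$ of matching type, $\tr(\f(A)\cdot B) = \tr(A\cdot\f^\dag(B))$. I would apply this point-wise at each $\cstate$, with $A=\qassertp(\cstate)$ (acting on $\h_{W'}$) and $B=\tr_{\h_{V\setminus W}}(\qstate(\cstate))$, taking care to tensor $\f$ with the identity super-operator on the spectator variables $V\setminus W$ so that the types line up; since $\f$ is sub-unital, $\f^\dag$ is trace-nonincreasing, so $\f^\dag(\qstate)$ is a genuine cq-state and its support is contained in that of $\qstate$, which keeps the index set of the sum under control. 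The main obstacle is purely notational — getting the tensor factors and partial traces over $V\setminus W$ versus $W'$ to match up correctly in Definition~\ref{def:satisfaction} — rather than mathematical; once the identity $\tr[(\f(\qassertp(\cstate))\otimes I)\qstate(\cstate)] = \tr[(\qassertp(\cstate)\otimes I)\f^\dag(\qstate(\cstate))]$ is established at each $\cstate$, summing over $\supp{\qstate}$ finishes the proof.
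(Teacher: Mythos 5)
Your proposal is correct and follows essentially the same route as the paper: the paper writes out only Clause (5), with exactly the support-restriction argument you give, and dismisses the remaining clauses as immediate from the definitions. Your extra detail for Clauses (1)--(4) and the point-wise use of the adjoint identity $\tr(\f(A)\cdot B)=\tr(A\cdot\f^\dag(B))$ (tensored with the identity on the spectator variables) for Clause (6) is just the routine verification the paper leaves implicit, so there is nothing to flag.
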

\begin{proof} We only prove Clause~\ref{cl:lem3.9}; the others are easy from  definitions. For simplicity, we assume $V=W$. Then
	\begin{eqnarray*}
		\Exp(\qstate |_\cassert \models \qassert)  &=& \sum_{\cstate\in \supp{\qstate|_\cassert}} \tr\left[ \qassert(\cstate)\cdot \qstate|_\cassert(\cstate)\right]\\
		&=&\sum_{\cstate\in \supp{\qstate}, \cstate\models \cassert } \tr\left[ \qassert(\cstate)\cdot\qstate(\cstate)\right]\\
		&=&\sum_{\cstate\in \supp{\qstate}} \tr\left[ (\cassert\wedge\qassert)(\cstate)\cdot\qstate(\cstate)\right]\\
		&=&\Exp(\qstate \models \cassert\wedge\qassert)
	\end{eqnarray*}
	where the third inequality comes from the fact that for any  $\cstate\in \cstates$, $(\cassert\wedge\qassert)(\cstate) = \qassert(\cstate)$ if $\cstate\models \cassert$, and $\z_{\h_W}$ otherwise.
\end{proof}

\begin{lemma}\label{lem:qassetorder}
		\begin{enumerate}
	\item For any cq-states $\qstate$ and $\qstate'$ in $\qstatesh{V}$,
	\begin{itemize}
		\item if $\qstate \le \qstate'$, then $\Exp(\qstate \models \qassert)\leq \Exp(\qstate' \models \qassert)$ for all $\qassert\in \qassertsh{W}$ with $W\subseteq V$;
		\item conversely, if $\Exp(\qstate \models \qassert)\leq \Exp(\qstate' \models \qassert)$ for all $\qassert\in \qassertsh{V}$, then $\qstate \le \qstate'$.
	\end{itemize} 
	\item For any cq-assertions $\qassert$ and $\qassert'$ with $W=qv(\qassert)\cup qv(\qassert')$,
\begin{itemize}
	\item if $\qassert \lesssim \qassert'$, then $\Exp(\qstate \models \qassert)\leq \Exp(\qstate \models \qassert')$ for all $\qstate\in \qstatesh{V}$ with $W\subseteq V$;
	\item conversely, if $\Exp(\qstate \models \qassert)\leq \Exp(\qstate \models \qassert')$ for all $\qstate\in \qstatesh{W}$, then $\qassert \lesssim \qassert'$.
\end{itemize} 
	\end{enumerate} 
\end{lemma}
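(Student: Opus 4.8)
The plan is to prove the four implications in the lemma separately. The two ``forward'' statements (monotonicity of $\Exp$) rest on two elementary facts — $\tr(AB)\geq 0$ whenever $A,B$ are positive operators, and $A\le_W B$ implies $A\otimes I\le B\otimes I$ — while the two ``converse'' statements are handled contrapositively, by exhibiting a cq-state (respectively, a cq-assertion) that strictly reverses the claimed expectation inequality.

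For Part~(1), forward direction: assuming $\qstate\le\qstate'$ one first notes $\supp{\qstate}\subseteq\supp{\qstate'}$, since $\z_{\h_V}\le\qstate(\cstate)\le\qstate'(\cstate)$ forces $\qstate(\cstate)=\z_{\h_V}$ whenever $\qstate'(\cstate)=\z_{\h_V}$; writing both expectations as sums over $\supp{\qstate'}$ (harmless, as the summands vanish off the respective supports), their difference is $\sum_{\cstate\in\supp{\qstate'}}\tr\big[(\qassert(\cstate)\otimes I_{\h_{V\setminus W}})(\qstate'(\cstate)-\qstate(\cstate))\big]\geq 0$ termwise, both factors being positive. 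For Part~(2), forward direction: with $V_i=qv(\qassert_i)$ and $W=V_1\cup V_2$, decompose $\h_V=\h_{V_1\setminus V_2}\otimes\h_{V_1\cap V_2}\otimes\h_{V_2\setminus V_1}\otimes\h_{V\setminus W}$, so that $\qassert(\cstate)\otimes I_{\h_{V\setminus V_1}}=(\qassert(\cstate)\otimes I_{\h_{V_2\setminus V_1}})\otimes I_{\h_{V\setminus W}}$ and $\qassert'(\cstate)\otimes I_{\h_{V\setminus V_2}}=(I_{\h_{V_1\setminus V_2}}\otimes\qassert'(\cstate))\otimes I_{\h_{V\setminus W}}$; the definition of $\lesssim$ gives $\qassert(\cstate)\otimes I_{\h_{V_2\setminus V_1}}\le_W I_{\h_{V_1\setminus V_2}}\otimes\qassert'(\cstate)$ for every $\cstate$, tensoring with $I_{\h_{V\setminus W}}$ preserves it, and comparing against the positive operator $\qstate(\cstate)$ termwise yields $\Exp(\qstate\models\qassert)\leq\Exp(\qstate\models\qassert')$.

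For Part~(2), converse direction: if $\qassert\not\lesssim\qassert'$ there are $\cstate_0$ and a normalised $|\psi\>\in\h_W$ with $\<\psi|\big(I_{\h_{V_1\setminus V_2}}\otimes\qassert'(\cstate_0)-\qassert(\cstate_0)\otimes I_{\h_{V_2\setminus V_1}}\big)|\psi\><0$. Since cq-states (unlike cq-assertions) carry no definability constraint, the point mass $\qstate\define\<\cstate_0,|\psi\>\<\psi|\>\in\qstatesh{W}$ is a legitimate cq-state, and $\Exp(\qstate\models\qassert)=\<\psi|(\qassert(\cstate_0)\otimes I_{\h_{V_2\setminus V_1}})|\psi\>>\<\psi|(I_{\h_{V_1\setminus V_2}}\otimes\qassert'(\cstate_0))|\psi\>=\Exp(\qstate\models\qassert')$, contradicting the hypothesis.

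The step I expect to be the main obstacle is the converse direction of Part~(1): I cannot plug a ``point-mass'' cq-assertion concentrated on a single $\cstate_0$ into the hypothesis, because the singleton $\{\cstate_0\}$ is in general not of the form $\sem{\cassert}$ for any first-order $\cassert$ (a formula constrains only finitely many variables), so the definability clause in the definition of cq-assertions forbids such a witness. The remedy is an approximation argument. Suppose $\qstate\not\le\qstate'$; pick $\cstate_0$ and a normalised $|\psi\>\in\h_V$ with $\delta\define-\tr[|\psi\>\<\psi|(\qstate'(\cstate_0)-\qstate(\cstate_0))]>0$ and put $M\define|\psi\>\<\psi|\in\phv$ (so $\cstate_0$ necessarily lies in $\supp{\qstate}\cup\supp{\qstate'}$). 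The set $S\define(\supp{\qstate}\cup\supp{\qstate'})\setminus\{\cstate_0\}$ is countable with $\sum_{\cstate\in S}(\tr[\qstate(\cstate)]+\tr[\qstate'(\cstate)])\leq 2$, so enumerate $S=\{\cstate_1,\cstate_2,\dots\}$ and fix $N$ with $\sum_{j>N}(\tr[\qstate(\cstate_j)]+\tr[\qstate'(\cstate_j)])<\delta$. For each $j\leq N$ choose a variable on which $\cstate_j$ and $\cstate_0$ disagree, let $\cassert$ be the first-order formula pinning these finitely many variables to their values in $\cstate_0$, and set $\qassert\define\<\cassert,M\>\in\qassertsh{V}$. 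Then $\cstate_0\models\cassert$ while no $\cstate_j$ with $j\leq N$ does, so $\{\cstate\in S:\cstate\models\cassert\}\subseteq\{\cstate_{N+1},\cstate_{N+2},\dots\}$; the $\cstate_0$-summand of $\Exp(\qstate'\models\qassert)-\Exp(\qstate\models\qassert)$ equals $-\delta$, and each remaining summand is bounded in absolute value by $\tr[\qstate'(\cstate)]+\tr[\qstate(\cstate)]$ (using $0\le M\le I_{\h_V}$), whence
\[
\Exp(\qstate'\models\qassert)-\Exp(\qstate\models\qassert)\ \leq\ -\delta+\sum_{j>N}\big(\tr[\qstate'(\cstate_j)]+\tr[\qstate(\cstate_j)]\big)\ <\ 0 ,
\]
contradicting $\Exp(\qstate\models\qassert)\leq\Exp(\qstate'\models\qassert)$. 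The only loose ends are the routine verifications that $\qassert$ really is a cq-assertion (finite image, definable preimages) and that $\Exp(\cdot\models\<\cassert,M\>)$ restricts the defining sum to states satisfying $\cassert$, both immediate from the definitions.
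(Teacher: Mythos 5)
Your proposal is correct and follows essentially the same route as the paper: the forward directions and the converse of Part (2) are handled by the same routine positivity/point-mass arguments the paper treats as immediate, and for the converse of Part (1) — the only part the paper proves in detail — your $\delta$-truncation of the countable support followed by pinning finitely many distinguishing classical variables is exactly the paper's $\epsilon$-approximation construction (the paper truncates $\supp{\qstate'}$ and bounds the residual by its trace, while you truncate the union of supports; the idea is identical). No gaps.
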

\begin{proof}
	We take the converse part of Clause (1) as an example. Suppose $\qstate \not \le \qstate'$. Then there exists a $\cstate\in \cstates$ and $|\psi\> \in \d(\h_V)$ such that $\<\psi|\qstate(\cstate)|\psi\>  > \<\psi|\qstate'(\cstate)|\psi\>$. If we can find a classical assertion $\cassert$ which distinguishes $\cstate$ from other states in $\supp{\qstate'}$. Then obviously the cq-assertion $\<\cassert, |\psi\>_V\<\psi|\>$ serves as a counter-example for the assumption.
	
	Note that the formula $\cassert_* \define\bigwedge_{x\in \Var} (x = \cstate(x))$ uniquely determines $\cstate$. However, it is not a valid classical assertion, as the set $\Var$ is infinite. To convert it to a finite conjunction, let $\epsilon \define \<\psi|\qstate(\cstate)|\psi\> - \<\psi|\qstate'(\cstate)|\psi\>$. For this $\epsilon$, there exists a finite subset $A$ of $\supp{\qstate'}$ such that $\tr(\qstate'|_A) > \tr(\qstate')  - \epsilon$. For any $\cstate'\in A$ with $\cstate'\neq \cstate$, there exists  $x_{\cstate'}\in\Var$ such that  $\cstate'(x_{\cstate'}) \neq \cstate(x_{\cstate'})$. Now let $X \define \{x_{\cstate'} : \cstate'\in A, \cstate'\neq \cstate\}$. Then the classical assertion $\cassert \define\bigwedge_{x\in X} (x = \cstate(x))$ distinguishes $\cstate$ from other states in $A$. Finally, let $\qassert \define \<\cassert, |\psi\>_V\<\psi|\>$.  
	Then $\Exp(\qstate'|_A \models \qassert) = \<\psi|\qstate'(\cstate)|\psi\>$, and $\Exp(\qstate' - \qstate'|_A \models \qassert) \leq \tr(\qstate' - \qstate'|_A) < \epsilon$. Thus
	\[
	\Exp(\qstate \models \qassert) \geq \<\psi|\qstate(\cstate)|\psi\> = \<\psi|\qstate'(\cstate)|\psi\> + \epsilon > \Exp(\qstate' \models \qassert),\] contradicting the assumption.
\end{proof}

\begin{lemma}\label{lem:qasset}
	For any cq-states $\qstate, \qstate_n\in \qstatesh{V}$ and cq-assertions $\qassert, \qassert_{n} \in \qassertsh{W}$ with $W\subseteq V$, $n=1, 2, \cdots$,
	\begin{enumerate}
		\item $\Exp(\bigvee_{n\geq 0}\qstate_n \models \qassert) = \sup_{n\geq 0}\Exp(\qstate_n\models \qassert)$ for increasing sequence $\{\qstate_n\}_n$;		
		\item $\Exp(\bigwedge_{n\geq 0}\qstate_n \models \qassert) = \inf_{n\geq 0}\Exp(\qstate_n\models \qassert)$ for decreasing sequence $\{\qstate_n\}_n$;
		\item $\Exp(\qstate \models \bigvee_{n\geq 0}\qassert_n) = \sup_{n\geq 0}\Exp(\qstate\models \qassert_n)$ for increasing sequence $\{\qassert_n\}_n$;
		\item $\Exp(\qstate \models \bigwedge_{n\geq 0}\qassert_n) = \inf_{n\geq 0}\Exp(\qstate\models \qassert_n)$ for decreasing sequence $\{\qassert_n\}_n$.	
	\end{enumerate} 
\end{lemma}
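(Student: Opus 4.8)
The plan is to reduce all four equalities to one scheme: write each side as a countable sum of nonnegative scalars indexed by classical states, and then exchange that summation with the monotone limit over $n$, invoking the monotone convergence theorem in the increasing cases and the dominated convergence theorem in the decreasing ones. Two elementary facts carry the weight. Fact (a): in a finite-dimensional Hilbert space, a L\"owner-monotone sequence of hermitian operators that is bounded (above, if increasing; automatically below by $\z$, if decreasing) converges in operator norm, and its limit is exactly its L\"owner supremum (resp. infimum); hence the joins $\bigvee_n$ and meets $\bigwedge_n$ appearing in the statement are computed separately at each classical state and coincide with these norm limits --- for joins this is precisely how the $\omega$-CPO structure of $\qstatesh{V}$ and $\qassertsh{W}$ is obtained (cf. the proof of Lemma~\ref{lem:cpo} and the analogous remark for $\qassertsh{W}$), and for meets one checks directly that the pointwise infimum is still a valid cq-state/cq-assertion. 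Fact (b): for fixed positive $A$, the map $\rho\mapsto\tr(A\cdot\rho)$ is linear (hence norm-continuous), sends positive operators to nonnegative reals, and is L\"owner-monotone, i.e. $\tr(A\rho)\leq\tr(B\rho)$ whenever $A\le B$ and $\rho$ is positive.

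For Clauses (1) and (2), I would fix $\qassert\in\qassertsh{W}$, recall $\qstate_n\in\qstatesh{V}$ with $W\subseteq V$, put $U\define V\backslash W$, and set $f_n(\cstate)\define\tr[(\qassert(\cstate)\otimes I_{\h_U})\cdot\qstate_n(\cstate)]\geq 0$, so that $\Exp(\qstate_n\models\qassert)=\sum_{\cstate\in J}f_n(\cstate)$ for the countable index set $J=\bigcup_m\supp{\qstate_m}$ in the increasing case and $J=\supp{\qstate_1}$ in the decreasing one; the summand vanishes off $\supp{\qstate_n}$, so enlarging the index set to this common $J$ (and using it likewise for $\bigvee_n\qstate_n$ resp. $\bigwedge_n\qstate_n$) is harmless. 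By Fact (b) and the monotonicity of the chain $\{\qstate_n(\cstate)\}_n$ (cf. Lemma~\ref{lem:qassetorder}(1)), $\{f_n(\cstate)\}_n$ is monotone in $n$ in the same direction, and by Facts (a)--(b) it converges to $\tr[(\qassert(\cstate)\otimes I_{\h_U})\cdot(\bigvee_n\qstate_n)(\cstate)]$, resp. the same with $\bigwedge$. In the increasing case, $\sum_{\cstate\in J}\sup_n f_n(\cstate)=\sup_n\sum_{\cstate\in J}f_n(\cstate)$ holds for any doubly indexed array of nonnegative reals, which is Clause (1). In the decreasing case, $0\leq f_n\leq f_1$ pointwise and $\sum_{\cstate\in J}f_1(\cstate)=\Exp(\qstate_1\models\qassert)\leq 1$ by Lemma~\ref{lem:bpdeg}(1), so dominated convergence gives $\inf_n\sum_{\cstate\in J}f_n(\cstate)=\sum_{\cstate\in J}\inf_n f_n(\cstate)$, i.e. Clause (2).

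Clauses (3) and (4) go through in exact parallel, with the roles of state and assertion swapped. Fixing $\qstate\in\qstatesh{V}$ and again writing $U=V\backslash W$, set $g_n(\cstate)\define\tr[(\qassert_n(\cstate)\otimes I_{\h_U})\cdot\qstate(\cstate)]\geq 0$, so $\Exp(\qstate\models\qassert_n)=\sum_{\cstate\in\supp{\qstate}}g_n(\cstate)$ over the fixed countable set $\supp{\qstate}$. Since $\qstate(\cstate)$ is positive, $\{g_n(\cstate)\}_n$ is monotone in $n$ in the direction of $\{\qassert_n\}_n$ (cf. Lemma~\ref{lem:qassetorder}(2)) and, by Facts (a)--(b), converges to $\tr[((\bigvee_n\qassert_n)(\cstate)\otimes I_{\h_U})\cdot\qstate(\cstate)]$, resp. with $\bigwedge$. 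The increasing case is once more the unconditional $\sum\sup=\sup\sum$ for nonnegative reals, proving Clause (3); for the decreasing case, $g_1(\cstate)\leq\tr[I_{\h_V}\cdot\qstate(\cstate)]=\tr[\qstate(\cstate)]$ because $\qassert_1(\cstate)\le I_{\h_W}$, and $\sum_{\cstate\in\supp{\qstate}}\tr[\qstate(\cstate)]=\tr(\qstate)\leq 1$, so dominated convergence yields Clause (4).

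Once this scheme is in place, the remaining work is confined to Facts (a)--(b) and the measure-theoretic exchange of $\sum_\cstate$ with $\lim_n$. The point I expect to require the most care is the first half of Fact (a): one must confirm that the order-theoretic join/meet used to form $\bigvee_n\qstate_n$, $\bigwedge_n\qstate_n$, $\bigvee_n\qassert_n$, $\bigwedge_n\qassert_n$ is genuinely obtained by taking, separately at each classical state $\cstate$, the L\"owner supremum/infimum of the relevant operators (and hence the operator-norm limit), and --- for the decreasing chains --- that this pointwise infimum is again a cq-state/cq-assertion, i.e. has countable support/image and, for assertions, a classically definable preimage for each of its values (inherited from those of the $\qassert_n$'s). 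Everything else is a direct application of the standard convergence theorems, given the nonnegativity of the summands and the uniform bound $\leq 1$ from Lemma~\ref{lem:bpdeg}(1).
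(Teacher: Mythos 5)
Your proof is correct, but it takes a genuinely different route from the paper's. The paper proves only Clause (1) explicitly (declaring the others similar), and does so order-theoretically: writing $\qstate^* \define \bigvee_{n}\qstate_n$, monotonicity (Lemma~\ref{lem:qassetorder}) gives $\Exp(\qstate^*\models\qassert)\geq\sup_n\Exp(\qstate_n\models\qassert)$, and for the converse it bounds the gap by linearity, $\Exp(\qstate^*\models\qassert)-\Exp(\qstate_n\models\qassert)=\Exp(\qstate^*-\qstate_n\models\qassert)\leq\Exp(\qstate^*-\qstate_n\models\top_{W})=\tr(\qstate^*)-\tr(\qstate_n)\to 0$, never exchanging a sum and a limit explicitly. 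You instead work pointwise in the classical state and invoke monotone/dominated convergence for the counting measure. What your scheme buys is uniformity: the trace-difference trick transfers verbatim to Clause (2) but not to Clauses (3)--(4), where the error term $\bigvee_n\qassert_n-\qassert_m$ is an assertion and is not controlled by a trace; there one needs precisely your argument (operator-norm convergence at each $\cstate$ plus domination by $\tr[\qstate(\cstate)]$, summing to $\tr(\qstate)\leq 1$), so your write-up actually covers the clauses the paper leaves to ``similar''. What the paper's argument buys for (1)--(2) is brevity, resting only on Lemmas~\ref{lem:bpdeg} and~\ref{lem:qassetorder} and the single fact $\tr(\qstate^*)=\sup_n\tr(\qstate_n)$.

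One caveat you flag but do not discharge: your claim that countable image and definable preimages of the pointwise limit are ``inherited'' from the $\qassert_n$'s is not true in general (the $\qassert_n$ may involve ever more classical variables, so the pointwise supremum or infimum can have uncountable image). So Clauses (3)--(4) must be read as presupposing that $\bigvee_n\qassert_n$, resp. $\bigwedge_n\qassert_n$, exists as a cq-assertion and is given by the pointwise L\"owner limit; this is the same implicit assumption the paper makes when it treats $\qassertsh{V}$ as an $\omega$-CPO, and its own proof of (1) likewise relies on the pointwise description of the join via $\tr(\qstate^*)=\sup_n\tr(\qstate_n)$. For states (Clauses (1)--(2)) the pointwise limit genuinely is a cq-state, so no issue arises there.
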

\begin{proof}
	We prove (1) as an example; the others are similar. Let $\qstate^* \define \bigvee_{n\geq 0}\qstate_n$. First, from the fact that $\qstate_n \le \qstate^*$ for all $n$, $\Exp(\qstate^* \models \qassert) \geq \sup_{n\geq 0}\Exp(\qstate_n\models \qassert)$ by Lemma~\ref{lem:qassetorder}(1). Furthermore, for any $n$,
	\begin{align*}
		\Exp(\qstate^* \models \qassert) -\Exp(\qstate_n\models \qassert) &= \Exp(\qstate^* - \qstate_n\models \qassert)\\
		&\leq \Exp(\qstate^* - \qstate_n\models \top_W)\\
		& = \tr(\qstate^* - \qstate_n) = \tr(\qstate^*) - \tr(\qstate_n),
	\end{align*}
	where the first and second equalities are from Lemma~\ref{lem:bpdeg}, and the first inequality from Lemma~\ref{lem:qassetorder}(2).
	Thus $\Exp(\qstate^* \models \qassert) = \sup_{n\geq 0}\Exp(\qstate_n\models \qassert)$ from the fact that $\tr(\qstate^*) = \sup_n\tr(\qstate_n)$.
\end{proof}

\subsection{Substitution and state update}
Let $e$ and $e' $ be classical expressions\footnote{We assume standard classical expressions (constructed inductively from $\Var$ and a fixed set of function symbols) in this paper; the precise definition of them is omitted.}, and $x$ a classical variable with the same type of $e'$. Denote by $e[\subs{x}{e'}]$ the expression obtained by substituting $x$ in $e$ with $e'$. Such substitution can be extended to cq-assertions as follows.
Given $\qassert \define \bigoplus_{i\in I}\<\cassert_i, M_i\>$, we define
\[
\qassert[\subs{x}{e}] \define \bigoplus_{i\in I}\<\cassert_i[\subs{x}{e}], M_i\>.
\]
The well-definedness comes from the following two observations: (1) whenever $\sem{\cassert_i} \cap \sem{\cassert_j} = \emptyset$, it holds
$\sem{\cassert_i[\subs{x}{e}]} \cap \sem{\cassert_j[\subs{x}{e}]} = \emptyset$; (2) whenever $\cassert\equiv \cassert'$, it holds
 $\cassert[\subs{x}{e}]\equiv \cassert'[\subs{x}{e}]$.
Thus the substitution is independent of the choice of the representative classical assertions $\cassert_i$ in $\qassert$.

For classical state $\cstate$ and $d\in D_{\mathit{type}(x)}$, denote by $\cstate[\subs{x}{d}]$ the updated state which maps $x$ to $d$, and other classical variables $y$ to $\cstate(y)$. Similarly, this updating can be extended to cq-states by defining
\[
\qstate[\subs{x}{e}] \define \sum_{i\in I} \<\cstate_i[\subs{x}{\cstate_i(e)}], \rho_i\>
\]
whenever 
$\qstate = \bigoplus_{i\in I} \<\cstate_i, \rho_i\>$. 
Since substitution does not change the trace of the whole state, $\qstate[\subs{x}{e}]$ is still a valid cq-state. 
Note that unlike cq-assertions,
$\cstate_i[\subs{x}{\cstate_i(e)}]$ and $\cstate_j[\subs{x}{\cstate_j(e)}]$ can be equal even when $\cstate_i\neq \cstate_j$. Thus we have $\sum$ instead of $\bigoplus$ here. 

Note that for any classical state $\cstate$ and assertion $\cassert$, we have the substitution rule: 
$\cstate\models \cassert[\subs{x}{e}]$ iff
$\cstate[\subs{x}{\cstate(e)}]\models \cassert$.
The next lemma shows a similar relation between substitutions for cq-states and cq-assertions.

\begin{lemma}\label{lem:qavalid}
	For any cq-state $\qstate$ and cq-assertion $\qassert$ with $qv(\qstate) \supseteq \qv(\qassert)$, $x\in \Var$, and classical expression $e$ with the same type of $x$,
$$
	\Exp(\qstate \models \qassert[\subs{x}{e}]) = \Exp(\qstate[\subs{x}{e}] \models \qassert).
	$$
\end{lemma}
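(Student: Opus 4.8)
The plan is to reduce the identity to the classical substitution rule $\cstate\models \cassert[\subs{x}{e}]$ iff $\cstate[\subs{x}{\cstate(e)}]\models \cassert$, working through the explicit representations of $\qstate$ and $\qassert$. Write $\qstate = \bigoplus_{i\in I}\<\cstate_i,\rho_i\>$ and $\qassert = \bigoplus_{j\in J}\<\cassert_j, M_j\>$, and set $V = \qv(\qstate)\setminus\qv(\qassert)$. Both sides of the claimed equality are sums over $\supp{\qstate}$ (resp. $\supp{\qstate[\subs{x}{e}]}$) of trace expressions, so the natural move is to expand the two sides term by term and exhibit a matching. For the left side, $\Exp(\qstate\models\qassert[\subs{x}{e}]) = \sum_i \tr[(\qassert[\subs{x}{e}](\cstate_i)\otimes I_{\h_V})\cdot\rho_i]$, and by definition $\qassert[\subs{x}{e}](\cstate_i) = M_j$ exactly when $\cstate_i\models \cassert_j[\subs{x}{e}]$. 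For the right side, $\qstate[\subs{x}{e}] = \sum_i \<\cstate_i[\subs{x}{\cstate_i(e)}],\rho_i\>$, so $\Exp(\qstate[\subs{x}{e}]\models\qassert) = \sum_i \tr[(\qassert(\cstate_i[\subs{x}{\cstate_i(e)}])\otimes I_{\h_V})\cdot\rho_i]$, where $\qassert(\cstate_i[\subs{x}{\cstate_i(e)}]) = M_j$ exactly when $\cstate_i[\subs{x}{\cstate_i(e)}]\models \cassert_j$. The classical substitution rule says these two conditions on $i$ and $j$ coincide, so the two sums are term-by-term identical.

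Concretely I would argue: for each $i\in I$ there is (by the partition property of cq-assertions) a unique $j\in J$ with $\cstate_i \models \cassert_j[\subs{x}{e}]$ — or else $\qassert[\subs{x}{e}](\cstate_i) = \z$, in which case the term contributes $0$ on the left; and by the classical substitution rule this $j$ is exactly the one with $\cstate_i[\subs{x}{\cstate_i(e)}]\models\cassert_j$, giving the same operator $M_j$ and hence the same trace term $\tr[(M_j\otimes I_{\h_V})\cdot\rho_i]$ on the right. Summing over $i$ yields the equality. One should also note that the index set $\supp{\qstate[\subs{x}{e}]}$ may be strictly smaller than $\{\cstate_i : i\in I\}$ because the update can collapse distinct $\cstate_i$'s; this is harmless because in $\qstate[\subs{x}{e}] = \sum_i\<\cstate_i[\subs{x}{\cstate_i(e)}],\rho_i\>$ the corresponding $\rho_i$'s are added, and the trace is linear, so grouping collapsed terms does not change the total sum — equivalently one can just keep the sum indexed by $i\in I$ throughout and only pass to $\supp{}$ at the very end.

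The main obstacle, such as it is, is bookkeeping rather than mathematics: one must be careful that $\qv(\qstate[\subs{x}{e}]) = \qv(\qstate)$ (substitution touches only the classical part, so $\rho_i$ and hence $V$ are unchanged), that $\qassert[\subs{x}{e}]$ is genuinely a valid cq-assertion with the same image operators $\{M_j\}$ (already established in the paragraph defining $\qassert[\subs{x}{e}]$), and that the representative assertions $\cassert_j$ may be chosen without affecting either side (independence of the choice of representatives, also already noted). With those caveats dispatched, the proof is a one-line application of the classical substitution rule inside each summand, together with linearity of trace; I would present it as a short displayed computation paralleling the proof of Lemma~\ref{lem:bpdeg}(\ref{cl:lem3.9}).
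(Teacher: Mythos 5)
Your proposal is correct and follows essentially the same route as the paper's proof: expand both sides over the explicit representations $\qstate = \bigoplus_{i}\<\cstate_i,\rho_i\>$ and $\qassert = \bigoplus_{j}\<\cassert_j, M_j\>$ and match terms via the classical substitution rule $\cstate\models \cassert[\subs{x}{e}]$ iff $\cstate[\subs{x}{\cstate(e)}]\models\cassert$. The only difference is that you spell out the bookkeeping (the $\otimes I_{\h_V}$ factor for the case $\qv(\qstate)\supsetneq\qv(\qassert)$, and the harmless collapsing of classical states under the update, handled by linearity of the trace) which the paper dispatches with ``we assume $qv(\qstate)=qv(\qassert)$; the general case can be proved similarly.''
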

\begin{proof}
 Let $\qstate = \bigoplus_{i\in I} \<\cstate_i, \rho_i\>$ and $\qassert = \bigoplus_{j\in J}\<\cassert_j, M_j\>$. Then
	\begin{eqnarray*}
		\Exp(\qstate \models \qassert[\subs{x}{e}]) &=& \sum_{i\in I} \sum_{j\in J, \cstate_i\models \cassert_j[\subs{x}{e}]} \tr(M_j \rho_i)\\
		&=&  \sum_{i\in I} \sum_{j\in J, \cstate_i[\subs{x}{\cstate_i(e)}]\models \cassert_j} \tr(M_j \rho_i),
	\end{eqnarray*}
	which is exactly $\Exp(\qstate[\subs{x}{e}] \models \qassert)$. Here we assume that $qv(\qstate) = \qv(\qassert)$; the general case can be proved similarly. 
\end{proof}

\section{A simple classical-quantum language}\label{sec:blang}

This section is devoted to the syntax and various semantics of our core programming language which supports deterministic and probabilistic assignments, quantum measurements, quantum operations, conditionals, and while loops. 

\subsection{Syntax}\label{sec:syntax}

Our classical-quantum language is based on the one proposed in~\cite{selinger2004towards}, extended with $\tyint$ type classical variables and probabilistic assignments, but excluding general recursion and procedure call. The syntax is defined as follows:
\begin{align*} S::= &\ \sskip\ |\ \abort\ |\ x:= e\ |\ x\rassign g\ | \ x:= \measure\ \m[\bar{q}]\ | \ q:=0\ |\
	\bar{q}\apply U\ |\ S_0;S_1\ |\\ &\ \mstm\ |\ \wstm
\end{align*}
where $S,S_0$ and $S_1$ denote classical-quantum programs (cq-programs for short), $x$ a classical variable in $\Var$, $e$ a classical expression with the same type as $x$, $g$ a discrete probability distribution over $D_{\mathit{type}(x)}$, $b$ a $\tybool$-type expression, $q$ is a quantum variable and $\bar{q} \define q_1, \ldots, q_n$ a (ordered) tuple of distinct quantum variables in $\qVar$, $\m$ a measurement and $U$ a unitary operator on
$d_{\bar{q}}$-dimensional Hilbert space where $$d_{\bar{q}}\define \dim(\h_{\bar{q}}) = \prod_{i=1}^{n} \dim(\h_{q_i}).$$
Sometimes we also use $\bar{q}$ to denote the (unordered) set $\{q_1,q_2,\dots,q_n\}$. Let $|\bar{q}|\define n$ be the size of $\bar{q}$.

Let $\prog$ be the set of all cq-programs. For any $S\in \prog$, the quantum variables that appear in $S$ is denoted $\qv(S)$. The set $var(S)$ (resp. $change(S)$) of classical variables that appear in (resp. can be changed by) $S$ are defined in the standard way. Note that the only way to retrieve information from a quantum system is to measure it, a process which may change its state. Thus the notion of read-only quantum variables does not exist in cq-programs. 

In the purely quantum language presented in~\cite{ying2012floyd}, conditional branching is achieved by the program construct
	$\mathbf{measure}\ \m[\bar{q}] : \bar{S}$
	 where $\bar{S}$ is a set of programs which one-to-one correspond to the measurement outcomes of $\m$. Intuitively, the quantum variables in $\bar{q}$ are measured according to $\m$, and different subsequent programs in $\bar{S}$ will be executed depending on the measurement outcomes. The while loop $\while\ \mathcal{N}[\bar{q}]=1\ \ddo\ S\ \pend$ where the outcome set of $\mathcal{N}$ is $\{0,1\}$ is defined similarly. Let $\bar{S} = \{S_i: 1\leq i\leq n\}$ and the outcome set of $\m$ is $\{1, 2, \ldots, n\}$. Then
	these constructs can be expressed in our language in the following equivalent form:
	\begin{align*}
	&x:=\measure\ \m[\bar{q}]; \ \measstm{x=1}{S_1}{(\measstm{x=2}{S_2}{\cdots})}
	\end{align*}
	and
	\begin{align}\label{eq:encode}
	&x:=\measure\ \mathcal{N}[\bar{q}]; \while\ x=1\ \ddo\  S;  x:= \measure\ \mathcal{N}[\bar{q}]; \pend
	\end{align}
	respectively, where $x$ is a fresh classical variable which does not appear in $\bar{S}$ or $S$. An advantage of having classical variables explicitly in the language is that we can avoid introducing infinite-dimensional quantum variables to encode classical data with infinite domains such as \tyint. This will simplify the verification of real-world quantum programs.

To conclude this subsection, 
we introduce some syntactic sugars for our language which make it easy to use in describing quantum algorithms. Let $\bar{q} \define q_1, \ldots, q_n$.
\begin{itemize}
	\item \emph{Initialisation of multiple quantum variables}. Let $\bar{q} := 0$ stand for  $q_1 := 0; \cdots; q_n := 0$.
	\item \emph{Measurement according to the computational basis}. We write $x:= \measure\ \bar{q}$ for  $x:= \measure\ \m_{\mathit{com}}[\bar{q}]$ 
	where $\m_{\mathit{com}} \define \{P_k \define |k\>\<k| : 0\leq k<d_{\bar{q}}\}$ is the projective measurement according to the computational basis of $\h_{\bar{q}}$. We always write $|k\>$ for the product state $|k_1\>\cdots|k_n\>$, where $k = \sum_{i=1}^{n}k_i  d_{q_{i+1}}\ldots d_{q_n}$.
	\item \emph{Application of parametrised unitary operations}. Let $\u \define \{U_i : 1\leq i\leq K\}$, be a finite family of unitary operators on the $d_{\bar{q}}$-dimensional Hilbert space, and $e$ an $\tyint$-typed expression. We write $\bar{q}\apply {\u}(e)$ for the statement which applies $U_i$ on $\bar{q}$ whenever $e$ evaluates to $i$ in the current classical state. Formally, it denotes the following program:
	$$\measstm{e<1 \vee e> K}{\abort}{S_1; S_2; \cdots; S_{K}}$$
	where for each $1\leq i\leq K$,
	$$S_i \define \measstm{e=i}{\bar{q}\apply {U_i}}{\sskip}.$$
	Note that the order of $S_i$'s is actually irrelevant as there is at most one that will be executed.
	
	\item \emph{Application on selected variables in a quantum register}. Let $1\leq k\leq n$, 
	and $e$ and $e_j$'s be $\tyint$-type expressions. The statement $\bar{q}[e_1, \cdots, e_k] \apply \u(e)$, where $\u$ is defined as in the previous clause, applies  $U_i$ on quantum systems $q_{i_1}, \cdots, q_{i_k}$ whenever $e$ evaluates to $i$ and 
	$e_j$ evaluates to (distinct) $i_j$ for $1\leq j\leq k$
	in the current classical state. Formally, it denotes the following program:	
	\[
		\measstm{\exists i.(e_i <1 \vee e_i >n) \vee \exists i,j. (i\neq j \wedge e_i = e_j)}{\abort}{S_1; S_2; \cdots; S_{|R|}}
	\] where each $S_\ell$ is of the form
		\[
	\measstm{e_1=i_1\wedge \cdots \wedge e_k=i_k}	
	{q_{i_1}, \cdots, q_{i_k} \apply \u(e)}{\sskip}
	\] 
	and $(i_1, \cdots, i_k)$ ranges over
		\[
	R \define \{(i_1, \cdots, i_k) : \forall j. 1\leq i_j \leq n \mbox{ and $ i_j$'s are distinct}\}.
	\]
	 Again, the order of $S_\ell$'s is actually irrelevant as there is at most one that will be executed.
		
\end{itemize}

\subsection{Operational and denotational semantics}
A \emph{configuration} is a triple $\<S, \cstate, \rho\>$ where
$S\in \prog \cup \{E\}$, $E$ is a special symbol to denote termination, $\cstate\in \Sigma$, and $\rho\in \d(\h_V)$ for some $V$ subsuming $\qv(S)$. The operational semantics of programs in $\prog$ is defined as the smallest transition relation $\rightarrow$ on configurations given in Table~\ref{tbl:opsemantics}. Note that there is no transition rule for $\abort$, meaning that the statement $\abort$ simply halts the computation with no proper state reached.

The definition is rather standard and intuitive. We would only like to point out that motivated by~\cite{ying2012floyd}, the operational semantics of quantum measurements (and even probabilistic assignments) are described in a non-deterministic way, while the probabilities of different branches are encoded in the quantum part of the configurations. That is why we need to take partial density operators instead of the normalised density operators as the representation of quantum states.

Similar to~\cite{ying2012floyd}, denotational semantics of cq-programs can be derived from the operational one by summing up all the cq-states obtained by terminating computations.
 
{\renewcommand{\arraystretch}{2.5}
\begin{table}[t]
	\begin{lrbox}{\tablebox}
		\centering
		\begin{tabular}{ll}
			$\<\sskip, \cstate, \rho\> \ra \<E, \cstate, \rho\>$ & $\<x:= e, \cstate, \rho\> \ra \<E, \cstate[\subs{x}{\cstate(e)}], \rho\>$ \\
	 $\<q:=0, \cstate, \rho\> \ra \<E, \cstate, \sum_{i=0}^{d_{q}-1}\qzi \rho\qiz\>$ & $\<\bar{q}\apply U, \cstate, \rho\> \ra \<E, \cstate, U_{\bar{q}} \rho U_{\bar{q}}^\dag\>$\\
	 $\displaystyle\frac{d\in D_{\mathit{type}(x)}}{\<x\rassign g, \cstate, \rho\> \ra \<E, \cstate[\subs{x}{d}], g(d)\cdot \rho\>}$ &
			 	$\displaystyle\frac{\m = \{M_i : i\in I\}}{\<x:= \measure\ \m[\bar{q}], \cstate, \rho\> \ra \<E, \cstate[\subs{x}{i}], M_i \rho M_i^\dag\>}$\\
			 $\displaystyle\frac{\<S_0, \cstate, \rho\> \ra \<S', \cstate', \rho'\>}{\<S_0; S_1, \cstate, \rho\> \ra \<S'; S_1, \cstate', \rho'\>}$\ where $E; S_1 \equiv S_1$ & \\
			$\displaystyle\frac{\cstate \models b}{\<\mstm, \cstate, \rho\> \ra  \<S_1, \cstate, \rho\>}$
			& $\displaystyle\frac{\cstate \models \neg b}{\<\mstm, \cstate, \rho\> \ra  \<S_0, \cstate, \rho\>}$\\
			 $\displaystyle\frac{\cstate \models \neg b}{\<\wstm, \cstate, \rho\> \ra  \<E, \cstate, \rho\>}$&
			$\displaystyle\frac{\cstate \models b}{\<\wstm, \cstate, \rho\> \ra  \<S; \wstm, \cstate, \rho\>}$\\
		\end{tabular}
	\end{lrbox}
	\resizebox{\textwidth}{!}{\usebox{\tablebox}}\\
	\vspace{4mm}
	\caption{Operational semantics for cq-programs. 
	}
	\label{tbl:opsemantics}
\end{table}
}

\begin{definition}
	Let $S\in \prog$, and $\<\cstate, \rho\>\in \qstatesh{V}$ with $V\supseteq \qv(S)$.
	\begin{itemize}
		\item
		A \emph{computation} of $S$ starting in $\<\cstate, \rho\>$ is a (finite or infinite) maximal sequence of configurations $\<S_i, \cstate_i, \rho_i\>$, $i\geq 1$, such that
		$$\<S, \cstate, \rho\> \ra \<S_1, \cstate_1, \rho_1\> \ra \<S_2, \cstate_2, \rho_2\> \ra \cdots$$
		and $\rho_i\neq \z_{\h_V}$ for all $i$.
		
		\item A computation of $S$ \emph{terminates} in $\<\cstate', \rho'\>$ if it is finite and the last configuration is $\<E, \cstate', \rho'\>$; otherwise it is \emph{diverging}.
	\end{itemize}
\end{definition}

Let $\ra^n$ be the $n$-th composition of $\ra$, and $\ra^* {\define} \bigcup_{n\geq 0} \ra^n$. Then we have the following lemma.
\begin{lemma}\label{lem:den}
	Let $S\in \prog$, and $\<\cstate, \rho\>\in \qstatesh{V}$ with $V\supseteq \qv(S)$.
	Then
	\begin{enumerate}
		\item the multi-set $\{\<\cstate', \rho'\> : \<S, \cstate, \rho\> \ra^n \<E, \cstate', \rho'\>\}$ is countable for all $n\geq 0$;
		\item the sequence of cq-states $\{\qstate_n : n\geq 0\}$, where $$\qstate_n \define \sum\left\{\<\cstate', \rho'\> : \<S, \cstate, \rho\> \ra^k \<E, \cstate', \rho'\> \mbox{ for some }k\leq n\right\},$$  is increasing with respect to $\le$. Here we assume $\qstate_n$ to be $\bot_{V}$ if the multi-set on the right-hand side is empty. Thus
		\[
		\sum\left\{\<\cstate', \rho'\> : \<S, \cstate, \rho\> \ra^* \<E, \cstate', \rho'\>\right\} = \bigvee_{n\geq 0} \qstate_n.
		\]
	\end{enumerate}
\end{lemma}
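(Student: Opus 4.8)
The plan is to treat the two clauses in turn, with Clause~(1) feeding into Clause~(2). For Clause~(1) I would prove the slightly stronger statement that, for every $n$, the whole multi-set of configurations reachable from $\langle S,\cstate,\rho\rangle$ in exactly $n$ steps is countable, by induction on $n$; the base case is trivial, and for the inductive step it suffices to check that every configuration has at most countably many one-step successors, since a countable union of countable sets is countable. The one-step branching is read off Table~\ref{tbl:opsemantics}: all rules except those for probabilistic assignment and for measurement are deterministic; the probabilistic-assignment rule branches over $d\in D_{\mathit{type}(x)}$, which is $\{\true,\false\}$ or $\Z$ and hence countable; and the measurement rule branches over the outcome set $I$ of $\m$, of which only countably many indices $i$ have $M_i\neq\z_\h$, because $\sum_{i\in I}M_i^\dag M_i=I_\h$ in the \emph{finite-dimensional} space $\h$ forces all but countably many summands to vanish (the remaining successors have zero quantum part and never occur in a computation). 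The sequential-composition rule simply inherits the branching of its leftmost component, so a short sub-induction on the structure of $S$ closes this clause.

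For Clause~(2) I would first verify that each $\qstate_n$ is a legitimate cq-state. The crucial point is that no single computation step increases the total trace: for any configuration $\langle S',\cstate',\rho'\rangle$, the sum of $\tr(\rho'')$ over all its one-step successors $\langle S'',\cstate'',\rho''\rangle$ is at most $\tr(\rho')$. This is another rule-by-rule check: it is in fact an equality for $\sskip$, for (probabilistic) assignment (using $\sum_{d}g(d)=1$), for quantum initialisation (the map $\rho\mapsto\sum_{i=0}^{d_{q}-1}\qzi\rho\qiz$ has Kraus operators $|0\rangle_q\langle i|$ satisfying $\sum_{i=0}^{d_{q}-1}|i\rangle_q\langle i|=I_q$, hence is trace-preserving), for unitary application, and for the conditional and loop rules; and $\abort$ only loses trace. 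Iterating this, an induction on $n$ gives that the total trace of the still-running configurations reachable in exactly $n$ steps, plus the total trace of all terminated configurations reachable within $n$ steps, is at most $\tr(\rho)\leq1$; in particular $\tr(\qstate_n)\leq1$, the countable (by Clause~(1)) pointwise sums converge in $\dhv$, and therefore $\qstate_n\in\qstatesh{V}$.

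Monotonicity is then immediate, since $\qstate_{n+1}$ is obtained from $\qstate_n$ by adding the positive-operator-valued contributions of the terminating computations of length exactly $n+1$, whence $\qstate_n\le\qstate_{n+1}$; thus $\{\qstate_n\}$ is increasing and, by Lemma~\ref{lem:cpo}, $\bigvee_{n\geq0}\qstate_n$ exists in $\qstatesh{V}$. Finally, because $\ra^{*}=\bigcup_{k\geq0}\ra^{k}$, the countable family $\{\langle\cstate',\rho'\rangle : \langle S,\cstate,\rho\rangle\ra^{*}\langle E,\cstate',\rho'\rangle\}$ has exactly $\{\qstate_n\}$ as its sequence of partial sums, so its summation equals $\bigvee_{n\geq0}\qstate_n$, as required.

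The step I expect to be the main obstacle is the trace bookkeeping in Clause~(2): the induction hypothesis has to be phrased so that it simultaneously tracks the probability mass that has already terminated and the mass still in flight, which is precisely what keeps the accumulated terminated contributions bounded by $1$. Beyond this, the only mildly delicate point is the cardinality remark for measurements in Clause~(1); everything else is a routine inspection of the operational rules.
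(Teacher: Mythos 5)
Your proof is correct and follows essentially the same route as the paper's, which consists of just two sentences: clause (1) "is easy by induction" and clause (2) follows "from the fact that any configuration of the form $\<E, \cstate', \rho'\>$ has no further transition" (the fact you use implicitly when splitting the multi-set for $k\le n+1$ into the part for $k\le n$ plus the length-$(n+1)$ terminations and when identifying the $\qstate_n$ as partial sums). The extra trace bookkeeping you do to certify that each $\qstate_n$ is a legitimate cq-state is exactly the content the paper defers to Lemma~\ref{lem:welldef}(1), so your argument does not diverge from the paper in substance, only in the level of detail.
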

\begin{proof}
	The first clause is easy by induction. The second one is directly from the fact that any configuration with the form $\<E, \cstate', \rho'\>$ has no further transition.
\end{proof}

With this lemma, we are able to define the denotational semantics of cq-programs using the operational one. Let $\qstatesh{\supseteq \qv(S)} \define\bigcup_{V\supseteq qv(S)} \qstatesh{V}$.

\begin{definition}\label{def:denotational}
	Let $S\in \prog$. The \emph{denotational semantics} of $S$ is a mapping 
	$$\sem{S} : \qstatesh{\supseteq \qv(S)}\ra  \qstatesh{\supseteq \qv(S)}$$
	such that for any $\<\cstate, \rho\>\in  \qstatesh{V}$ with $V\supseteq qv(S)$,
	\[
	\sem{S}(\cstate, \rho) = \sum\left\{\<\cstate', \rho'\> : \<S, \cstate, \rho\> \ra^* \<E, \cstate', \rho'\>\right\}.
	\]
	Furthermore, let $\sem{S}(\qstate) = \sum_{i\in I}\sem{S}(\cstate_i, \rho_i)$ whenever  $\qstate = \bigoplus_{i\in I}\<\cstate_i, \rho_i\>$. 
\end{definition}

To simplify notation, we always write $(\cstate, \rho)$ for  $(\<\cstate, \rho\>)$ when $\<\cstate, \rho\>$ appears as a parameter of some function. 
The next lemma guarantees the well-definedness of Definition~\ref{def:denotational}.

\begin{lemma}\label{lem:welldef}
	For any $S\in \prog$ and $\qstate\in \qstatesh{V}$ with $V\supseteq \qv(S)$, 
	\begin{enumerate}
		\item $\tr(\sem{S}(\qstate)) \leq \tr(\qstate)$, and so $\sem{S}(\qstate) \in \qstates$;
		\item $\sem{S}(\qstate) = \sum_i \lambda_i \sem{S}(\qstate_i)$ whenever $\qstate = \sum_i \lambda_i\qstate_i$.
	\end{enumerate}
\end{lemma}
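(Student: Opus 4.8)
The plan is to prove both clauses by structural induction on $S$, using the characterisation of $\sem{S}$ via the operational semantics given in Lemma~\ref{lem:den} together with the CPO structure established in Lemma~\ref{lem:cpo}. First I would observe that it suffices to prove both clauses for simple cq-states $\qstate = \<\cstate, \rho\>$: clause (1) then extends to arbitrary $\qstate = \bigoplus_{i\in I}\<\cstate_i,\rho_i\>$ because $\sem{S}(\qstate)=\sum_{i}\sem{S}(\cstate_i,\rho_i)$ and $\tr(\qstate)=\sum_i\tr(\rho_i)$, while clause (2) is essentially a linearity statement that reduces to additivity over the simple summands plus homogeneity under nonnegative scalars. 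So the core task is: for each program construct, show $\tr(\sem{S}(\cstate,\rho))\le\tr(\rho)$.

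For the base cases ($\sskip$, $\abort$, deterministic and probabilistic assignments, initialisation, unitary application, measurement), I would read off $\sem{S}(\cstate,\rho)$ directly from Table~\ref{tbl:opsemantics}: each terminates in one step, and the trace is preserved or decreased because the relevant super-operator ($\rho\mapsto U\rho U^\dag$, $\rho\mapsto\sum_i\qzi\rho\qiz$, $\rho\mapsto\sum_i M_i\rho M_i^\dag$, or scaling by $g(d)$ summed over $d$) is trace-nonincreasing, with $\abort$ giving $\bot_V$ of trace $0$. For sequential composition $S_0;S_1$, I would use that terminating computations of $S_0;S_1$ factor through terminating computations of $S_0$ followed by terminating computations of $S_1$, so $\sem{S_0;S_1}(\qstate)=\sem{S_1}(\sem{S_0}(\qstate))$; then $\tr(\sem{S_1}(\sem{S_0}(\qstate)))\le\tr(\sem{S_0}(\qstate))\le\tr(\qstate)$ by the induction hypothesis applied twice, and clause (2) follows by composing the two linearity statements. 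The conditional is immediate: on a given $\<\cstate,\rho\>$ exactly one branch fires, so $\sem{\mstm}(\cstate,\rho)$ equals either $\sem{S_1}(\cstate,\rho)$ or $\sem{S_0}(\cstate,\rho)$, and both bounds and linearity transfer from the branches.

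The main obstacle is the while loop $\wstm$. Here I would introduce the standard finite-unrolling approximants: let $W_0=\abort$ and $W_{n+1}=\con{b}{(S;W_n)}{\sskip}$, and argue from the operational semantics (Lemma~\ref{lem:den}) that $\sem{\wstm}(\qstate)=\bigvee_{n\ge 0}\sem{W_n}(\qstate)$, since any terminating computation of the loop uses only finitely many iterations and is thus already captured by some $W_n$, and the sequence $\{\sem{W_n}(\qstate)\}_n$ is increasing by the monotonicity already visible in Lemma~\ref{lem:den}(2). Each $W_n$ is loop-free, so by the induction hypothesis (on the structure of $S$, with an inner induction on $n$) we get $\tr(\sem{W_n}(\qstate))\le\tr(\qstate)$ for all $n$; taking the supremum and using that $\tr$ is $\omega$-continuous on the CPO (Lemma~\ref{lem:cpo}, or directly $\tr(\bigvee_n\qstate_n)=\sup_n\tr(\qstate_n)$ as used in the proof of Lemma~\ref{lem:qasset}) yields clause (1). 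For clause (2), each $\sem{W_n}$ is linear by the inner induction, and linearity is preserved under the supremum of an increasing sequence because $\sum_i\lambda_i(\bigvee_n\qstate_{i,n})=\bigvee_n\sum_i\lambda_i\qstate_{i,n}$ when all linear-sums involved are well-defined — this interchange of supremum and (possibly infinite, signed) linear combination is the one genuinely delicate point, and I would handle it by first treating the positive part $\qstate_+$ and negative part $\qstate_-$ separately via monotone convergence on each, then subtracting. Everything else is routine bookkeeping with the operational semantics.
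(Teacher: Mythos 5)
Your proposal is correct, but it takes a genuinely different route from the paper. The paper does not do structural induction on $S$ at all: it proves clause (1) by a single induction on the number $n$ of operational steps, showing $\tr(\qstate_n)\leq\tr(\rho)$ for the partial sums $\qstate_n$ of Lemma~\ref{lem:den}(2), where the only ingredient is the per-rule observation that the branches generated by one transition from any configuration sum to a trace-nonincreasing transformation of the quantum part (measurement branches sum to $\sum_i M_i\rho M_i^\dag$, probabilistic assignment to $\sum_d g(d)\rho$, etc.); the bound then passes to $\bigvee_n\qstate_n$ by continuity of the trace, and clause (2) is dismissed as easy (each computation path applies to $\rho$ a fixed completely positive map that depends only on the classical data, so $\sem{S}(\cstate,\cdot)$ is linear in $\rho$, and the extension to linear-sums of cq-states is bookkeeping). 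Your structural induction instead routes everything through the compositional equations $\sem{S_0;S_1}=\sem{S_1}\circ\sem{S_0}$, the one-branch-fires fact for conditionals, and $\sem{\wstm}(\qstate)=\bigvee_n\sem{W_n}(\qstate)$ for the unrollings --- these are exactly the contents of Lemma~\ref{lem:iout} (and Lemma~\ref{lem:whilesem}), which the paper only establishes \emph{after} Lemma~\ref{lem:welldef}, so your argument has to re-derive them from the operational semantics inside the induction. That is legitimate (Lemma~\ref{lem:iout} is itself proved directly from the operational semantics, with no reliance on Lemma~\ref{lem:welldef}, so there is no circularity), but it makes the proof considerably heavier than the paper's step-count induction. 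What your route buys is that the compositional characterisation and the linearity/continuity of $\sem{S}$ come out as by-products; what the paper's route buys is brevity and independence from any denotational equations. Your flagged delicate point --- interchanging the supremum over $n$ with (possibly signed, countable) linear combinations in the while case --- is handled adequately by splitting into positive and negative parts and using monotone convergence together with continuity of the trace, so I see no genuine gap, only extra work relative to the published argument.
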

\begin{proof}
	Clause (2) is easy. For (1), we prove by induction on $n$ that
	$\tr(\qstate_n) \leq \tr(\rho)$ whenever $\qstate = \<\cstate, \rho\>$ and $\qstate_n$ is defined as in~Lemma~\ref{lem:den}(2).
	Thus the result holds for simple cq-states. The general case follows easily.
\end{proof}

 To illustrate the concepts and techniques introduced in this paper, we take Grover's search algorithm~\cite{grover1996fast} as a running example. More case studies are presented in Sec.~\ref{sec:case}.

\begin{example}[Grover's algorithm]\label{ex:grover}	
	Suppose we are given an (unstructured) database with $N$ items, $D$ of which are of our concern (called solutions) with $0<D<N/2$. For simplicity, we assume $N=2^n$ for some positive integer $n$.
	Let $\theta\in (0, \pi/2)$ such that
	$$\cos \frac\theta 2 = \sqrt{\frac{2^n-D}{2^n}},$$
	and $K$ be the integer in  $(\frac{\pi}{2\theta} - 1, \frac{\pi}{2\theta}]$. Then Grover's search algorithm can be described in our quantum language (with syntactic sugars) as
	\begin{align*}
	\mathit{Grover} \define &\\
	&\bar{q} := 0; \ \bar{q} \apply H^{\otimes n};\ x := 0;\\
	&\while\ x<K\ \ddo\\
	&\qquad  \bar{q} \apply G; \  x := x +1;\\
	& \pend\\
	&y:= \measure\ \bar{q}
	\end{align*} 
	where $\bar{q} = q_1, \ldots, q_n$ and each $q_i$ has $\tyqubit$-type, $H=|+\>\<0| + |-\>\<1|$ is the Hadamard operator with $|+\> \define \frac 1{\sqrt{2}}(|0\>+|1\>)$ and $|-\> \define \frac 1{\sqrt{2}}(|0\>-|1\>)$. $G = (2|\psi\>\<\psi| - I)O$ is the Grover rotation where $|\psi\> = \frac 1{\sqrt{2^n}} \sum_{i=0}^{2^n-1} |i\>$ and $O$ is the Grover oracle which maps $|i\>$ to $-|i\>$ when $i$ is a solution while to $|i\>$ otherwise.
		
	The (terminating) computations of $\mathit{Grover}$ starting in any $\<\cstate, \rho\>\in \qstatesh{\bar{q}}$ are shown as follows.
	\begin{eqnarray*}
		& &\left\<\mathit{Grover}, \cstate, \rho\right\>\\
		&\ra^n & \left\<\bar{q} \apply H^{\otimes n};\ \cdots, \cstate, |0^{\otimes n}\>\<0^{\otimes n}|\right\>\\
		&\ra & \left\<x := 0;\ \cdots, \cstate, |+^{\otimes n}\>\<+^{\otimes n}|\right\>\\
		&\ra & \left\<\while; y:= \measure\ \bar{q}, \cstate[\subs{x}{0}], |+^{\otimes n}\>\<+^{\otimes n}|\right\>\\
		&\ra & \left\<\bar{q} \apply G;\ x := x +1;\ \while; y:= \measure\ \bar{q}, \cstate[\subs{x}{0}], |+^{\otimes n}\>\<+^{\otimes n}|\right\>\\	
		&\ra & \left\<x := x +1;\ \while; y:= \measure\ \bar{q}, \cstate[\subs{x}{0}], G|+^{\otimes n}\>\<+^{\otimes n}|G^\dag\right\>\\	
		&\ra & \left<\while; y:= \measure\ \bar{q}, \cstate[\subs{x}{1}], G|+^{\otimes n}\>\<+^{\otimes n}|G^\dag\right\>\\			
		&\ra & \cdots\ \cdots\\
		&\ra & \left\<\while; y:= \measure\ \bar{q}, \cstate[\subs{x}{K}], G^K|+^{\otimes n}\>\<+^{\otimes n}|G^{K\dag}\right\>\\			
		&\ra & \left\<y:= \measure\ \bar{q}, \cstate[\subs{x}{K}], G^K|+^{\otimes n}\>\<+^{\otimes n}|G^{K\dag}\right\>\\
		&\ra & \left\<E, \cstate[\subs{x}{K}, \subs{y}{i}], |\<i|G^K|+^{\otimes n}\>|^2 \cdot |i\>\<i|\right\>.
	\end{eqnarray*}
	for all $0\leq i<2^n$.
	We write $\while$ for the while loop in the program. Consequently, 
	\begin{equation}\label{eq:grover}
	\sem{\mathit{Grover}}(\cstate, \rho) = \sum_{i=0}^{2^n-1}\left\<\cstate[\subs{x}{K}, \subs{y}{i}], |\<i|G^K|+^{\otimes n}\>|^2 \cdot |i\>\<i|\right>.
	\end{equation}
	
	Let $Sol\subseteq \{0, \cdots, 2^n-1\}$ be the set of solutions, $|Sol| = D$, and
	$$|\alpha\> = \frac 1 {\sqrt{2^n - D}} \sum_{i \not\in Sol} |i\>, \quad
	|\beta\> = \frac 1 {\sqrt{D}} \sum_{i \in Sol} |i\>.$$
	Then we have 
	$|\psi\> = |+^{\otimes n}\> = \cos\frac{\theta}{2} |\alpha\> + \sin\frac{\theta}{2} |\beta\>$, 
		$$G |\alpha\> = \cos \theta  |\alpha\> + \sin \theta |\beta\>, \quad
	G |\beta\> =  -\sin \theta  |\alpha\> + \cos \theta |\beta\>.$$
	That is, the effect of $G$ in the two-dimensional real space spanned by $|\alpha\>$ and $|\beta\>$ is a rotation with angle $\theta$ (note that $|\alpha\>$ and $|\beta\>$ are orthogonal). Thus the success probability of finding a solution by Grover's algorithm, i.e. the probability of $y\in Sol$ after its execution, can be computed as
	\begin{equation}\label{eq:psucc}
	p_{\mathrm{succ}} = \sum_{i\in Sol} \left|\<i|G^K|+^{\otimes n}\>\right|^2 = \sin^2\left(\frac{2K+1}{2}\theta\right).
	\end{equation}
	Recall that $K\in (\frac{\pi}{2\theta} - 1, \frac{\pi}{2\theta}]$. Thus 
	\[
	1-p_{\mathrm{succ}}  \leq  \left|(2K+1)\theta-\pi\right| \leq \theta.
	\]
	In other words, Grover's algorithm succeeds with a probability at least $1-O(\sqrt{D/N})$, and runs in time $O(\sqrt{N/D})$, achieving a quadratic speed-up over the best classical algorithms which run in $O(N/D)$ time.	
\end{example}

The following lemma presents the explicit form for denotational semantics of various program constructs.

\begin{lemma}\label{lem:iout} For any cq-state $\<\cstate, \rho\>$ in $\s_V$ where $V$ contains all quantum variables of the corresponding program,
	\begin{enumerate}
		\item  $\sem{\sskip}(\cstate, \rho) = \<\cstate, \rho\>$;
		\item  $\sem{\abort}(\cstate, \rho) = \bot_{V}$;
		\item  $\sem{x:=e}(\cstate, \rho) = \<\cstate[\subs{x}{\cstate(e)}], \rho\>$;
		\item  $\sem{x\rassign g}(\cstate, \rho) = \sum_{d\in D_{\mathit{type}(x)}} \<\cstate[\subs{x}{d}], g(d) \cdot \rho\>$;
		\item $\sem{x:= \measure\ \m[\bar{q}]}(\cstate, \rho) = \sum_{i\in I}  \<\cstate[\subs{x}{i}], M_i \rho M_i^\dag\>$ where $M_i$'s are applied on $\bar{q}$, and $\m=\{M_i : i\in I\}$;
		\item  $\sem{q:=0}(\cstate, \rho) = \<\cstate, \sum_{i=0}^{d_q-1}\qzi \rho\qiz\>\>$;
		\item $\sem{\bar{q}\apply U}(\cstate, \rho) = \<\cstate, U_{\bar{q}} \rho U_{\bar{q}}^\dag\>$;
		\item $\sem{S_0; S_1}(\cstate, \rho) = \sem{S_1}(\sem{S_0}(\cstate, \rho))$;
		\item $\sem{(S_0; S_1) ; S_2} = \sem{S_0; (S_1 ; S_2)}$;
		\item $\sem{\mstm}(\cstate, \rho) = \sem{S_1}(\cstate, \rho)$ if $\sigma \models b$, and $\sem{S_0}(\cstate, \rho)$ otherwise;
		\item\label{item:loop} $\sem{\while}(\cstate, \rho)= \bigvee_n (\sem{(\while)^n}(\cstate, \rho))$, where
		$\while \define \wstm$, 
		$(\while)^0 \define \abort$, and for any $n\geq 0$,
		$$(\while)^{n+1} \define \measstm{b}{S; (\while)^n}{\sskip}.$$
	\end{enumerate}
\end{lemma}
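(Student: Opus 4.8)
The plan is to verify the eleven clauses by unfolding Definition~\ref{def:denotational} against the transition rules of Table~\ref{tbl:opsemantics}, working from the atomic statements up to the compound ones and finishing with the loop. Clauses (1)--(7) are read directly off the operational semantics: for each such statement $S$, the configuration $\<S,\cstate,\rho\>$ has exactly the one-step successors listed in the table, each successor has program component $E$, and an $E$-configuration has no further transitions; hence $\{\<\cstate',\rho'\>:\<S,\cstate,\rho\>\ra^*\<E,\cstate',\rho'\>\}$ is precisely the multi-set of those successors (countable by Lemma~\ref{lem:den}(1)), and summing it yields the stated cq-state. For $\abort$ there is no transition at all, so the multi-set is empty and the sum is $\bot_V$ by convention. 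In clauses (4) and (5) the sum ranges over the possibly infinite index sets $D_{\mathit{type}(x)}$ and $I$, and well-definedness of the resulting cq-state is guaranteed by Lemma~\ref{lem:welldef}(1).

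For clause (8) I would first establish a \emph{factorisation} of computations: by induction on length, every terminating computation $\<S_0;S_1,\cstate,\rho\>\ra^*\<E,\cstate'',\rho''\>$ decomposes uniquely as $\<S_0;S_1,\cstate,\rho\>\ra^*\<E;S_1,\cstate',\rho'\>\ra^*\<E,\cstate'',\rho''\>$, where the first segment is obtained by appending ``$;S_1$'' to the programs along a terminating computation $\<S_0,\cstate,\rho\>\ra^*\<E,\cstate',\rho'\>$ (the only rule applicable to a configuration whose program is a top-level sequential composition is the composition rule, and termination of the whole forces the $S_0$-part to reach $E$), and the second segment is a terminating computation of $\<S_1,\cstate',\rho'\>$. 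Summing the terminal configurations along this bijection, and using linearity of $\sem{S_1}$ over countable sums of cq-states (Definition~\ref{def:denotational} together with Lemma~\ref{lem:welldef}(2)), gives $\sem{S_0;S_1}(\cstate,\rho)=\sem{S_1}(\sem{S_0}(\cstate,\rho))$. Clause (9) then follows by applying (8) twice, both sides reducing to $\sem{S_2}(\sem{S_1}(\sem{S_0}(\cstate,\rho)))$; and clause (10) is immediate, since when $\cstate\models b$ the unique first step of $\<\mstm,\cstate,\rho\>$ is to $\<S_1,\cstate,\rho\>$, putting the computations of the conditional in bijection with those of $S_1$ (symmetrically for $\cstate\models\neg b$).

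The substantial work is clause (11). Write $W\define\wstm$ and let $W_n$ abbreviate the approximants $(\while)^n$ from the statement. First, by induction on $n$, I would show $\sem{W_n}(\cstate,\rho)\le\sem{W}(\cstate,\rho)$ for all $\cstate,\rho$: the base case $\sem{\abort}(\cstate,\rho)=\bot_V$ is trivial, and for the step, using clauses (8) and (10) already proven, $\sem{W_{n+1}}(\cstate,\rho)$ equals $\sem{W_n}(\sem{S}(\cstate,\rho))$ when $\cstate\models b$ and $\<\cstate,\rho\>$ otherwise, while one unfolding of $W$ via the while-rules and clause (8) gives that $\sem{W}(\cstate,\rho)$ equals $\sem{W}(\sem{S}(\cstate,\rho))$ when $\cstate\models b$ and $\<\cstate,\rho\>$ otherwise; instantiating the induction hypothesis at the cq-state $\sem{S}(\cstate,\rho)$ closes the step. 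Since the $\sem{W_n}(\cstate,\rho)$ are also increasing in $n$ (a terminating computation of $W_n$, which exits through a $\sskip$ before its innermost $\abort$, is also one of $W_{n+1}$), this yields $\bigvee_n\sem{W_n}(\cstate,\rho)\le\sem{W}(\cstate,\rho)$. For the converse, take $\qstate_m$ as in Lemma~\ref{lem:den}(2), the sum of terminal cq-states of $\<W,\cstate,\rho\>$ reached in at most $m$ steps; a length-$\le m$ computation passes through the loop body at most $m$ times, and $W_{m+1}$ can accommodate up to $m$ such passes before its innermost $\abort$ would be reached, so the computation is mirrored step-for-step (in both classical and quantum components) by a terminating computation of $W_{m+1}$, giving $\qstate_m\le\sem{W_{m+1}}(\cstate,\rho)$. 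Taking suprema and invoking Lemma~\ref{lem:den}(2) gives $\sem{W}(\cstate,\rho)=\bigvee_m\qstate_m\le\bigvee_n\sem{W_n}(\cstate,\rho)$, and the two inequalities combine to the claim.

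The only real obstacle is clause (11): one must set up the correspondence between the branching computations of $W$ and those of its syntactic approximants $W_n$ with enough care that the bookkeeping of terminal multi-sets is exact, and that the $\le$-inequalities survive passage to the $\bigvee$ (which is legitimate by the $\omega$-CPO structure of Lemma~\ref{lem:cpo}). Everything outside the loop is routine unfolding of definitions.
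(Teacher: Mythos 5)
Your proof is correct, and for clauses (1)--(10) it fills in routine details that the paper dismisses as ``simpler'', including the factorisation of terminating computations of $S_0;S_1$ needed for clause (8). For clause (11) your route genuinely differs from the paper's: the paper partitions the set $\Pi$ of terminating computations of the loop into the sets $\Pi_n$ of those with at most $n$ iterations, writes $\sem{\while}(\cstate,\rho)=\bigvee_{n}\sum_{\pi\in\Pi_n}\<\cstate_\pi,\rho_\pi\>$, and then asserts the exact identity $\sem{(\while)^n}(\cstate,\rho)=\sum_{\pi\in\Pi_n}\<\cstate_\pi,\rho_\pi\>$ as ``easy to observe'', which gives the equality in one stroke. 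You instead sandwich: an inductive upper bound $\sem{(\while)^n}(\cstate,\rho)\le\sem{\while}(\cstate,\rho)$ obtained from the one-step unfolding of the loop, which you rightly derive directly from the operational while-rules together with your clause (8) (thereby avoiding any circularity with Lemma~\ref{lem:whilesem}, which the paper deduces \emph{from} the present lemma), and a lower bound $\qstate_m\le\sem{(\while)^{m+1}}(\cstate,\rho)$ by mirroring computations of length at most $m$ against the $(m+1)$-st approximant. The paper's argument is shorter but buries the mirroring correspondence inside the unproved ``easy to observe'' equality; your version trades that exact identification for two one-directional comparisons plus the $\omega$-CPO machinery of Lemma~\ref{lem:cpo} (monotonicity in $n$, suprema of increasing chains), which makes each individual step lighter at the cost of a longer overall argument. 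One point worth making explicit in a final write-up: when you instantiate the induction hypothesis at the general cq-state $\sem{S}(\cstate,\rho)$, you are lifting an inequality proved for simple cq-states to arbitrary ones; this is legitimate by linearity of the semantics (Lemma~\ref{lem:welldef}(2)) and the fact that countable sums of cq-states preserve $\le$, but it deserves a sentence rather than being left implicit.
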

\begin{proof}
	We only prove \ref{item:loop} as an example; the others are simpler. For any $\<\cstate, \rho\>\in \qstatesh{V}$ with $V\supseteq \qv(\while)$, let $\Pi$ be the set of all terminating computations of $\while$ starting in $\<\cstate, \rho\>$. Furthermore, let $\Pi_0 \define \emptyset$, and for $n\geq 1$ let
	$$\Pi_n \define \{\pi\in \Pi : \#\{i : \mathit{prog}(\pi[i]) =\while\} \leq n\}$$
	be the set of computations in $\Pi$ in which the loop has iterated for no more than $n$ times before termination. Here $\mathit{prog}(\pi[i])$ is the program (the first component) of the $i$-th configuration of $\pi$. Obviously, $\Pi = \bigcup_{n\geq 0} \Pi_n$ and 
	$$\sem{\while}(\cstate, \rho) = \sum_{\pi\in \Pi} \<\cstate_\pi, \rho_\pi\> = \bigvee_{n\geq 0} \sum_{\pi\in \Pi_n} \<\cstate_\pi, \rho_\pi\>$$ 
	where we assume each computation $\pi\in \Pi$ ends with $\<E, \cstate_\pi, \rho_\pi\>$. The result then follows from the fact that 
	\[
	\sem{(\while)^n}(\cstate, \rho) = \sum_{\pi\in \Pi_n} \<\cstate_\pi, \rho_\pi\>
	\]
	which is easy to observe.
\end{proof}

The next lemma gives a recursive description of the semantics of while loops.
\begin{lemma}\label{lem:whilesem}
	Let $\while \define \wstm$. For any $\qstate\in \s_V$ with $V\supseteq  \qv(\while)$ and $n\geq 0$,
	\[
	\sem{(\while)^{n+1}}(\qstate)
	= \qstate |_{\neg b} + \sem{(\while)^n}(\sem{S}(\qstate |_b)) \]
	Consequently,
	\[
	 \sem{\while}(\qstate) = \qstate |_{\neg b} + \sem{\while}(\sem{S}(\qstate |_b)).
	\]
\end{lemma}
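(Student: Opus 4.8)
The plan is to derive the first identity from a single unfolding step, using the conditional and sequential-composition clauses of Lemma~\ref{lem:iout}, and then to obtain the ``consequently'' clause by taking the increasing limit over the finite approximants $(\while)^n$. Throughout I would write $\qstate = \bigoplus_{i\in I}\<\cstate_i,\rho_i\>$ and split its (countable) support into $I_b \define \{i : \cstate_i\models b\}$ and $I_{\neg b} \define \{i : \cstate_i\models\neg b\}$, so that $\qstate = \qstate|_b + \qstate|_{\neg b}$ with $\qstate|_b = \bigoplus_{i\in I_b}\<\cstate_i,\rho_i\>$ and similarly for $\qstate|_{\neg b}$; note also that $\qv((\while)^n) = \qv(\while)\subseteq V$ for all $n$, so every semantic map below is applied to cq-states of the correct arity.

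For the first identity, recall $(\while)^{n+1} \define \measstm{b}{S;(\while)^n}{\sskip}$. First I would apply the conditional rule (Lemma~\ref{lem:iout}(10)) to each simple summand $\<\cstate_i,\rho_i\>$ of $\qstate$: it contributes $\sem{S;(\while)^n}(\cstate_i,\rho_i)$ when $i\in I_b$ and $\sem{\sskip}(\cstate_i,\rho_i)$ when $i\in I_{\neg b}$. Summing over $i$ and using the $\bigoplus$-additivity of $\sem{\cdot}$ from Definition~\ref{def:denotational}, this yields
\[
\sem{(\while)^{n+1}}(\qstate) = \sem{S;(\while)^n}(\qstate|_b) + \sem{\sskip}(\qstate|_{\neg b}).
\]
Then Lemma~\ref{lem:iout}(8) rewrites the first summand as $\sem{(\while)^n}(\sem{S}(\qstate|_b))$ and Lemma~\ref{lem:iout}(1) rewrites the second as $\qstate|_{\neg b}$, which is exactly the claimed identity. (Equivalently, one checks it on a simple cq-state $\<\cstate,\rho\>$ by cases on whether $\cstate\models b$ --- then one of $\qstate|_b,\qstate|_{\neg b}$ is $\<\cstate,\rho\>$ and the other is $\bot_V$, and $\sem{\cdot}$ sends $\bot_V$ to $\bot_V$ by the trace bound of Lemma~\ref{lem:welldef}(1) --- and extends by linearity, Lemma~\ref{lem:welldef}(2).)

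For the ``consequently'' clause I would first record that $\sem{\while}(\eta) = \bigvee_{n\geq 0}\sem{(\while)^n}(\eta)$ holds for \emph{every} cq-state $\eta$: this is Lemma~\ref{lem:iout}(\ref{item:loop}) when $\eta$ is simple, and for general $\eta$ it follows by interchanging the countable sum over $\supp{\eta}$ with the increasing supremum over $n$. Since $(\while)^0 = \abort$ contributes only $\bot_V$, I can drop that term, reindex, substitute the first identity, and pull the fixed cq-state $\qstate|_{\neg b}$ out of the supremum:
\[
\sem{\while}(\qstate) = \bigvee_{n\geq 0}\sem{(\while)^{n+1}}(\qstate) = \bigvee_{n\geq 0}\Bigl(\qstate|_{\neg b} + \sem{(\while)^n}(\sem{S}(\qstate|_b))\Bigr) = \qstate|_{\neg b} + \bigvee_{n\geq 0}\sem{(\while)^n}(\sem{S}(\qstate|_b)) = \qstate|_{\neg b} + \sem{\while}(\sem{S}(\qstate|_b)),
\]
using that $\{\sem{(\while)^n}(\sem{S}(\qstate|_b))\}_n$ is increasing (by monotonicity of $\sem{\cdot}$, itself a consequence of Lemma~\ref{lem:iout}(\ref{item:loop})), that adding a fixed cq-state is $\omega$-continuous, and finally the general-$\eta$ form of Lemma~\ref{lem:iout}(\ref{item:loop}) with $\eta = \sem{S}(\qstate|_b)$.

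I expect the only mildly delicate point to be the $\omega$-continuity bookkeeping used in that last display --- the commutation of the countable sum over $\supp{\qstate}$ with $\bigvee_n$, and the commutation of $\bigvee_n$ with addition of the fixed term $\qstate|_{\neg b}$. Both reduce to monotone convergence of the real numbers $\tr(\cdot)$ (via Lemma~\ref{lem:qasset}(1) with assertion $\top_V$, together with Lemma~\ref{lem:bpdeg}(2)) and the $\omega$-CPO structure of $\s$ from Lemma~\ref{lem:cpo}; there is no genuine obstacle, and the first identity itself is just a routine unfolding of the semantic clauses.
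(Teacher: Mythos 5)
Your proposal is correct and follows essentially the same route as the paper, whose proof of this lemma is simply the remark that it is ``easy from Lemma~\ref{lem:iout}'': you supply exactly the intended details, namely unfolding $(\while)^{n+1}$ via the conditional, sequencing and $\sskip$ clauses of Lemma~\ref{lem:iout} applied summand-wise, and then passing to the supremum over $n$ using Lemma~\ref{lem:iout}(\ref{item:loop}) together with routine $\omega$-continuity of addition and of the countable sum over the support.
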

\begin{proof}
	Easy from Lemma~\ref{lem:iout}.
\end{proof}

Finally, we can easily compute the operational semantics of the syntactic sugars introduced in Sec.~\ref{sec:syntax}.

\begin{lemma}\label{lem:densugar}
	Let  $\bar{q}\define q_1,\ldots,\q_n$, $1\leq k\leq n$, and $\u \define \{U_i : 1\leq i\leq K\}$.
	For any cq-state $\<\cstate, \rho\>$ in $\s_V$ where $V$ contains all quantum variables of the corresponding program,
\begin{enumerate}
		\item  $\sem{\bar{q}:=0}(\cstate, \rho) = \<\cstate, \sum_{i=0}^{d_{\bar{q}}-1}\quzi \rho\quiz\>\>$;
	\item $\sem{x:= \measure\ \bar{q}}(\cstate, \rho) = \sum_{i=0}^{d_{\bar{q}}-1}  \<\cstate[\subs{x}{i}], \quii \rho \quii\>$;
	\item $\sem{\bar{q}[e_1, \cdots, e_k]\apply {\u}(e)}(\cstate, \rho) = \bot_V$ if $\cstate(e) <1$, $\cstate(e)>K$, there exists $i$ such that $\cstate(e_i) <1$ or  $\cstate(e_i) > n$, or $\cstate(e_j)$'s are not distinct; otherwise it equals $\<\cstate, {U_i} \rho {U_i}^\dag\>$ where $i=\cstate(e)$ and $U_i$ is applied on $q_{\cstate(e_1)}, \cdots, q_{\cstate(e_k)}$.
\end{enumerate}
\end{lemma}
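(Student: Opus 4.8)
The plan is to unfold each syntactic sugar defined in Sec.~\ref{sec:syntax} into the core language and then compute using the explicit denotational semantics of the basic constructs collected in Lemma~\ref{lem:iout}. Throughout, fix $\<\cstate,\rho\>\in\s_V$ with $V$ containing all the relevant quantum variables. For Clause~(1), I would expand $\bar q:=0$ as $q_1:=0;\cdots;q_n:=0$ and induct on $n$, repeatedly applying $\sem{S_0;S_1}=\sem{S_1}\circ\sem{S_0}$ (Lemma~\ref{lem:iout}(8)) together with the single-variable case $\sem{q:=0}(\cstate,\rho)=\<\cstate,\sum_{i=0}^{d_q-1}\qzi\rho\qiz\>$ (Lemma~\ref{lem:iout}(6)); the only nontrivial point is that composing the individual reset super-operators on $q_1,\ldots,q_n$ yields the joint reset $\rho\mapsto\sum_{i=0}^{d_{\bar q}-1}\quzi\rho\quiz$, which is an elementary Kraus-operator computation once one writes $|i\>=|i_1\>\cdots|i_n\>$ in the product basis of $\h_{\bar q}$ and notes that the classical part is untouched. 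Clause~(2) is immediate: $x:=\measure\ \bar q$ abbreviates $x:=\measure\ \m_{\mathit{com}}[\bar q]$ with $\m_{\mathit{com}}=\{|k\>\<k|:0\le k<d_{\bar q}\}$, so Lemma~\ref{lem:iout}(5) gives the claimed sum directly.

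For Clause~(3) I would do a case analysis on $\cstate$ driven by Lemma~\ref{lem:iout}(2),(7),(10) against the doubly-nested unfolding of $\bar q[e_1,\cdots,e_k]\apply\u(e)$. Recall this sugar is a conditional with guard $\exists i.(e_i<1\vee e_i>n)\vee\exists i,j.(i\ne j\wedge e_i=e_j)$, then-branch $\abort$, and else-branch $S_1;\cdots;S_{|R|}$ over the set $R$ of distinct in-range tuples, where each $S_\ell$ tests $e_1=i_1\wedge\cdots\wedge e_k=i_k$ and, if so, runs the \emph{inner} sugar $q_{i_1},\cdots,q_{i_k}\apply\u(e)$, which is itself a conditional on $e<1\vee e>K$ with then-branch $\abort$ and else-branch $S_1';\cdots;S_K'$, each $S_i'$ testing $e=i$ and applying $U_i$. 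If $\cstate$ makes some $\cstate(e_i)$ fall outside $[1,n]$ or makes the $\cstate(e_i)$'s not pairwise distinct, the outer $\abort$ fires and the result is $\bot_V$. Otherwise the else-branch runs, and here the key observation is that $\sskip$ and $\bar q\apply U$ never alter classical variables, so every guard in $S_1;\cdots;S_{|R|}$ is evaluated at the same state $\cstate$; since the tuples in $R$ are pairwise distinct, exactly the one $S_\ell$ indexed by $(\cstate(e_1),\cdots,\cstate(e_k))$ has a satisfied guard and all the others collapse to $\sskip$, i.e.\ act as the identity on cq-states. A short induction on $|R|$ (peeling one $S_\ell$ at a time via Lemma~\ref{lem:iout}(8)--(10)) then reduces $\sem{S_1;\cdots;S_{|R|}}(\cstate,\rho)$ to $\sem{q_{\cstate(e_1)},\cdots,q_{\cstate(e_k)}\apply\u(e)}(\cstate,\rho)$. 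Applying the very same case analysis to this inner sugar --- $\bot_V$ if $\cstate(e)\notin[1,K]$, else exactly $S_{\cstate(e)}'$ fires, giving $\<\cstate,U_{\cstate(e)}\rho U_{\cstate(e)}^\dag\>$ with $U_{\cstate(e)}$ acting on $q_{\cstate(e_1)},\cdots,q_{\cstate(e_k)}$ by Lemma~\ref{lem:iout}(7) --- yields the stated formula.

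I expect the only real obstacle to be organisational rather than mathematical: one must carefully track the two layers of sugar (the else-branch of $\bar q[e_1,\cdots,e_k]\apply\u(e)$ itself containing the sugar $q_{i_1},\cdots,q_{i_k}\apply\u(e)$) and handle a sequential composition of $|R|$ conditionals in one go. The crux is the observation that classical variables are never modified by the conditional bodies, so across the whole composition the guards see a fixed classical state and at most one of them is true; granting this, the inductive collapse of the composition and all the remaining equalities are routine instances of Lemma~\ref{lem:iout}.
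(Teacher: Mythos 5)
Your proposal is correct and matches the paper's intended argument: the paper's own proof is just ``Routine, using Lemma~\ref{lem:iout},'' i.e.\ exactly the unfolding of the syntactic sugars followed by repeated application of the explicit semantics in Lemma~\ref{lem:iout}, which you carry out in detail (including the key observation that the conditional bodies never modify classical variables, so at most one branch of the unfolded composition fires).
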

\begin{proof}
	Routine, using Lemma~\ref{lem:iout}.
\end{proof}

\subsection{Correctness formula}

As usual, program correctness is expressed by \emph{correctness formulas} with the form
$$\ass{\qassert}{S}{\qassertp}$$
where $S$ is a cq-program, and $\qassert$ and $\qassertp$ are both cq-assertions.
Note here that we do not put any requirement on the quantum variables which $\qassert$ and $\qassertp$ are acting on. In fact, the sets $\qv(S)$, $\qv(\qassert)$, and $\qv(\qassertp)$ can be all different. 

The following definition is a direct extension of the corresponding one in~\cite{ying2012floyd}, with the new notions of cq-states and assertions.
\begin{definition}
	Let $S$ be a cq-program, and $\qassert$ and $\qassertp$ cq-assertions.
	\begin{enumerate}
		\item We say the correctness formula $\ass{\qassert}{S}{\qassertp}$ is true in the sense of \emph{total correctness}, written $\models_{{\mathit{tot}}} \ass{\qassert}{S}{\qassertp}$, if for any 
		$V\supseteq \qv(S, \qassert, \qassertp)$ and
		$\qstate\in \qstatesh{V}$,
		$$\Exp(\qstate\models \qassert) \leq \Exp(\sem{S}(\qstate) \models \qassertp).$$
		\item We say the correctness formula $\ass{\qassert}{S}{\qassertp}$ is true in the sense of \emph{partial correctness}, written $\models_{\mathit{par}} \ass{\qassert}{S}{\qassertp}$, if for any 
		$V\supseteq \qv(S, \qassert, \qassertp)$ and
		$\qstate\in \qstatesh{V}$,
		$$\Exp(\qstate\models \qassert) \leq \Exp(\sem{S}(\qstate) \models \qassertp)+  \tr(\qstate) - \tr( \sem{S}(\qstate)).$$
	\end{enumerate}
\end{definition}

The next lemma shows that the validity of correctness formulas can be checked on simple cq-states.
\begin{lemma}\label{lem:formulasimple}
		Let $S$ be a cq-program, $\qassert$ and $\qassertp$ be cq-assertions, and $V\define \qv(S, \qassert, \qassertp)$. Then 
	\begin{enumerate}
		\item  $\models_{{\mathit{tot}}} \ass{\qassert}{S}{\qassertp}$ iff for any 
		$\<\cstate, \rho\> \in \qstatesh{V}$ with $\tr(\rho) = 1$,
		$$\Exp(\<\cstate, \rho\>\models \qassert) \leq \Exp(\sem{S}(\cstate, \rho) \models \qassertp).$$
		\item $\models_{\mathit{par}} \ass{\qassert}{S}{\qassertp}$ iff for any 
		$\<\cstate, \rho\> \in \qstatesh{V}$ with $\tr(\rho) =1$,
		$$\Exp(\<\cstate, \rho\>\models \qassert) \leq \Exp(\sem{S}(\cstate, \rho) \models \qassertp)+  \tr(\rho) - \tr( \sem{S}(\cstate, \rho)).$$
	\end{enumerate}
\end{lemma}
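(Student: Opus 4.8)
The plan is to prove each biconditional by handling ``only if'' by specialisation and ``if'' by two successive reductions. For ``only if'', every simple cq-state $\<\cstate,\rho\>$ with $\tr(\rho)=1$ in $\qstatesh{V}$ is among the cq-states over which the definitions of $\models_{\mathit{tot}}$ and $\models_{\mathit{par}}$ quantify, so the stated inequalities hold for it. All the work is in ``if'', where I would first reduce arbitrary cq-states to simple ones of trace one, and then reduce an arbitrary ambient variable set $V'\supseteq V$ down to $V$.

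For the first reduction, fix $V'\supseteq V$ and $\qstate\in\qstatesh{V'}$, write $\qstate=\bigoplus_{i\in I}\<\cstate_i,\rho_i\>$ with the zero summands dropped, and set $\lambda_i\define\tr(\rho_i)>0$, $\hat\rho_i\define\rho_i/\lambda_i$. Then each $\<\cstate_i,\hat\rho_i\>$ is a simple cq-state of trace one, $\sum_i\lambda_i=\tr(\qstate)\leq 1$, and $\qstate=\sum_i\lambda_i\<\cstate_i,\hat\rho_i\>$. By Lemma~\ref{lem:bpdeg}(4), $\Exp(\qstate\models\qassert)=\sum_i\lambda_i\Exp(\<\cstate_i,\hat\rho_i\>\models\qassert)$; by Lemma~\ref{lem:welldef}(2), $\sem{S}(\qstate)=\sum_i\lambda_i\sem{S}(\cstate_i,\hat\rho_i)$, so $\Exp(\sem{S}(\qstate)\models\qassertp)=\sum_i\lambda_i\Exp(\sem{S}(\cstate_i,\hat\rho_i)\models\qassertp)$ by Lemma~\ref{lem:bpdeg}(4) again, and $\tr(\sem{S}(\qstate))=\sum_i\lambda_i\tr(\sem{S}(\cstate_i,\hat\rho_i))$ by linearity of the trace. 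The $\lambda_i$-weighted sum of the inequalities for the $\<\cstate_i,\hat\rho_i\>$ then gives both inequalities for $\qstate$, the partial case using in addition $\tr(\qstate)-\tr(\sem{S}(\qstate))=\sum_i\lambda_i\bigl(1-\tr(\sem{S}(\cstate_i,\hat\rho_i))\bigr)$. Hence it suffices to prove the two inequalities for every simple trace-one $\<\cstate,\rho\>\in\qstatesh{V'}$.

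For the second reduction, set $\rho'\define\tr_{\h_{V'\setminus V}}(\rho)\in\d(\h_V)$, which still has trace one since partial trace preserves the trace. Using Definition~\ref{def:satisfaction} and the iterated-partial-trace identity $\tr_{\h_{V'\setminus\qv(\qassert)}}=\tr_{\h_{V\setminus\qv(\qassert)}}\circ\tr_{\h_{V'\setminus V}}$ (and the analogue for $\qassertp$), I obtain $\Exp(\<\cstate,\rho\>\models\qassert)=\Exp(\<\cstate,\rho'\>\models\qassert)$ and $\Exp(\sem{S}(\cstate,\rho)\models\qassertp)=\Exp\bigl(\tr_{\h_{V'\setminus V}}(\sem{S}(\cstate,\rho))\models\qassertp\bigr)$. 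The remaining ingredient is a \emph{locality} property of the semantics: since $V'\setminus V$ is disjoint from $\qv(S)$, $\sem{S}$ commutes with $\tr_{\h_{V'\setminus V}}$, i.e. $\tr_{\h_{V'\setminus V}}(\sem{S}(\cstate,\rho))=\sem{S}(\cstate,\rho')$. Granting it, the simple-state inequalities in $\qstatesh{V'}$ follow from the hypothesis applied to $\<\cstate,\rho'\>\in\qstatesh{V}$ (the partial-correctness version using also $\tr(\rho)=\tr(\rho')$ and $\tr(\sem{S}(\cstate,\rho))=\tr(\sem{S}(\cstate,\rho'))$). I would prove the locality property by structural induction on $S$ via Lemma~\ref{lem:iout}: in the base cases the operators involved act as the identity on $\h_{V'\setminus V}$, and partial trace commutes with conjugation by such operators; sequential composition and conditionals are immediate; and the while loop follows from Lemma~\ref{lem:iout}(\ref{item:loop}) together with the $\omega$-continuity of the completely positive trace-preserving super-operator $\tr_{\h_{V'\setminus V}}$.

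The hard part will be this locality property $\tr_{\h_{V'\setminus V}}\circ\sem{S}=\sem{S}\circ\tr_{\h_{V'\setminus V}}$, and especially its while-loop case, where partial trace must be exchanged with the least upper bound $\bigvee_n\sem{(\while)^n}(\cstate,\rho)$ from Lemma~\ref{lem:iout}(\ref{item:loop}); this is exactly where $\omega$-continuity of partial trace, clear in the finite-dimensional setting, is invoked. Every other step is routine bookkeeping with the linearity properties already recorded in Lemmas~\ref{lem:bpdeg} and~\ref{lem:welldef} and in Definition~\ref{def:satisfaction}.
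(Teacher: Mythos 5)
Your proposal is correct, and its core is the same as the paper's: the paper disposes of the lemma in one line, ``easy from linearity of $\sem{S}$'' (Lemma~\ref{lem:welldef}(2)), which is exactly your first reduction from arbitrary cq-states to simple trace-one ones via $\qstate=\sum_i\lambda_i\<\cstate_i,\hat\rho_i\>$ together with Lemma~\ref{lem:bpdeg}(4). Where you go beyond the paper is the second reduction: since the definitions of $\models_{\mathit{tot}}$ and $\models_{\mathit{par}}$ quantify over all $V'\supseteq V=\qv(S,\qassert,\qassertp)$, checking only states in $\qstatesh{V}$ really does require the locality property $\tr_{\h_{V'\setminus V}}\circ\sem{S}=\sem{S}\circ\tr_{\h_{V'\setminus V}}$ for $V'\setminus V$ disjoint from $\qv(S)$, a step the paper's terse proof leaves entirely implicit. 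Your identification of this as the genuinely nontrivial ingredient is accurate, and your sketch of it is sound: in the atomic cases the operators and measurement/initialisation Kraus operators act as the identity on $\h_{V'\setminus V}$, so conjugation commutes with the partial trace; sequencing and conditionals are immediate; and for while loops Lemma~\ref{lem:iout}(\ref{item:loop}) plus continuity of the trace-preserving super-operator $\tr_{\h_{V'\setminus V}}$ on increasing chains lets you exchange it with the supremum. So your write-up is a correct, and in fact more complete, version of the paper's argument rather than a different one.
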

\begin{proof}
	Easy from linearity of $\sem{S}$ for any cq-program $S$; see Lemma~\ref{lem:welldef}(2).	
\end{proof}

\begin{example}\label{ex:groverform}
	We have proven in Example~\ref{ex:grover} that
	no matter what the initial (classical and quantum) state is, the output (value of $y$) of Grover's algorithm lies in $\mathit{Sol}$ with probability $p_{\mathit{succ}}$. This correctness can be stated in the following form
		\begin{equation}\label{eq:groverform}
	\models_{{\mathit{tot}}}\ass{ p_{\mathit{succ}}}{\mathit{Grover}}{y\in Sol},
	\end{equation}
	which claims that the postcondition $y\in Sol$ can be established by $\mathit{Grover}$ with probability $p_{\mathit{succ}}$.
Recall that in Eq.(\ref{eq:groverform}), $p_{\mathit{succ}}$ denotes $\<\true, p_{\mathit{succ}}\>$ and $y\in Sol$ denotes $\<y\in Sol, 1\>$. Both the pre- and post-conditions being purely classical means that the initial and final quantum states are irrelevant.

Note that $qv(\mathit{Grover}) = \bar{q}$. For any $\<\cstate, \rho\> \in \qstatesh{\bar{q}}$ with $\tr(\rho) =1$, we have from Eq.(\ref{eq:grover}) that
\begin{align*}
 \Exp(\sem{\mathit{Grover}}(\cstate, \rho) \models y\in Sol) & = \sum_{i\in Sol} |\<i|G^K|+^{\otimes n}\>|^2 \\
 &= p_{\mathit{succ}} =
	\Exp(\<\cstate, \rho\> \models p_{\mathit{succ}}).
\end{align*}
Then Eq.(\ref{eq:groverform}) follows from Lemma~\ref{lem:formulasimple}.
\end{example}

Finally, we show some basic facts about total and partial correctness as follows.

\begin{lemma}\label{lem:corf}
	Let $S$ be a cq-program, $\qassert$ and $\qassertp$ be cq-assertions, and $V\subseteq \qVar$. 
	\begin{enumerate}
		\item If $\models_{{\mathit{tot}}} \ass{\qassert}{S}{\qassertp}$ then $\models_{\mathit{par}} \ass{\qassert}{S}{\qassertp}$;
		\item $\models_{{\mathit{tot}}} \ass{\emptydis_{V}}{S}{\qassertp}$;
		\item  $\models_{\mathit{par}} \ass{\qassert}{S}{\top_{V}}$;
		\item If $\models_{{\mathit{tot}}} \ass{\qassert_i}{S}{\qassertp_i}$ and  $\lambda_i\geq 0$ for $i=1,2$, then 
		\[
		 \models_{{\mathit{tot}}} \ass{ \lambda_1\qassert_1 +  \lambda_2\qassert_2}{S}{ \lambda_1 \qassertp_1 + \lambda_2 \qassertp_2}.
		 \]
		 The result also holds for partial correctness if $\lambda_1 +  \lambda_2 =1$.
	\end{enumerate}
\end{lemma}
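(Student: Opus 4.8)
The plan is to derive each of the four clauses directly from the definitions of total and partial correctness, together with the elementary facts about $\Exp$ collected in Lemma~\ref{lem:bpdeg} and the fact that $\sem{S}$ is trace-nonincreasing (Lemma~\ref{lem:welldef}(1)). Throughout I fix a set $V$ of quantum variables large enough to contain $\qv(S)$ and the quantum variables of all assertions involved, and an arbitrary $\qstate\in\qstatesh{V}$; by Lemma~\ref{lem:formulasimple} one could even restrict to simple normalised states, but that reduction is not needed here.

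For clause (1), assume $\models_{{\mathit{tot}}}\ass{\qassert}{S}{\qassertp}$, so that $\Exp(\qstate\models\qassert)\le\Exp(\sem{S}(\qstate)\models\qassertp)$. Since $\tr(\sem{S}(\qstate))\le\tr(\qstate)$ by Lemma~\ref{lem:welldef}(1), the quantity $\tr(\qstate)-\tr(\sem{S}(\qstate))$ is nonnegative, so adding it to the right-hand side preserves the inequality; this is exactly the partial-correctness inequality. For clause (2), $\Exp(\qstate\models\emptydis_V)=0$ by Lemma~\ref{lem:bpdeg}(2), while $\Exp(\sem{S}(\qstate)\models\qassertp)\ge 0$ by Lemma~\ref{lem:bpdeg}(1), so the total-correctness inequality holds trivially. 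For clause (3), Lemma~\ref{lem:bpdeg}(2) gives $\Exp(\sem{S}(\qstate)\models\top_V)=\tr(\sem{S}(\qstate))$, so the right-hand side of the partial-correctness inequality collapses to $\tr(\qstate)$; and $\Exp(\qstate\models\qassert)\le\tr(\qstate)$ because $\qassert\lesssim\top_{\qv(\qassert)}$, whence $\Exp(\qstate\models\qassert)\le\Exp(\qstate\models\top_{\qv(\qassert)})=\tr(\qstate)$ by Lemma~\ref{lem:qassetorder}(2) and Lemma~\ref{lem:bpdeg}(2).

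Clause (4) is the only one that genuinely uses linearity of $\Exp$ in the assertion argument. By Lemma~\ref{lem:bpdeg}(3), $\Exp(\qstate\models\lambda_1\qassert_1+\lambda_2\qassert_2)=\lambda_1\Exp(\qstate\models\qassert_1)+\lambda_2\Exp(\qstate\models\qassert_2)$, and similarly for the postcondition $\lambda_1\qassertp_1+\lambda_2\qassertp_2$ evaluated at $\sem{S}(\qstate)$; applying the two total-correctness hypotheses termwise and using $\lambda_i\ge 0$ yields the desired total-correctness formula. For the partial-correctness variant, multiply the $i$-th partial-correctness inequality by $\lambda_i$, sum the two, and use $\lambda_1+\lambda_2=1$ to reassemble $(\lambda_1+\lambda_2)\big(\tr(\qstate)-\tr(\sem{S}(\qstate))\big)$ into a single copy of $\tr(\qstate)-\tr(\sem{S}(\qstate))$.

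I do not expect any substantial obstacle. The only points requiring a little care are, first, the quantum-variable bookkeeping: one must make sure the $V$ over which one quantifies is consistent and that the implicit tensoring with identities hidden inside $\Exp$ behaves well when $\qassert$, $\qassertp$, and $S$ act on different registers. Second, in clause (4) one should check that the linear-sum $\lambda_1\qassert_1+\lambda_2\qassert_2$ is a well-defined cq-assertion before invoking Lemma~\ref{lem:bpdeg}(3); for the partial-correctness part the normalisation $\lambda_1+\lambda_2=1$ is precisely what guarantees this (and is also what makes the trace terms collapse correctly).
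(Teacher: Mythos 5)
Your proof is correct and follows essentially the same route as the paper's (much terser) argument: clause (1) from the definitions plus trace-nonincreasingness of $\sem{S}$, clauses (2) and (3) from Lemmas~\ref{lem:bpdeg} and~\ref{lem:qassetorder}, and clause (4) from the linearity in Lemma~\ref{lem:bpdeg}. The extra care you note about well-definedness of the linear sum and the bookkeeping of quantum variables is consistent with the paper's standing conventions and does not change the argument.
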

\begin{proof}
	(1) follows from the definitions, (2) and (3) from Lemmas~\ref{lem:bpdeg} and~\ref{lem:qassetorder}, and (4) from Lemma~\ref{lem:bpdeg}.
\end{proof}

\subsection{Weakest (liberal) precondition semantics}\label{sec:wps}

Recall that in classical programming theory, the weakest (liberal) precondition of an assertion $\cassert$ with respect to a given program $S$ characterises the largest set of states $\cstate$ which (upon termination) guarantee that the final states $\sem{S}(\cstate)$ satisfy $\cassert$. Consequently, a program can also be regarded as a predicate transformer which maps any postcondition to its weakest (liberal) precondition. In the following, we extend these semantics to our cq-programs. Let $\qassertsh{\supseteq \qv(S)} \define\bigcup_{V\supseteq qv(S)} \qassertsh{V}$.

\begin{definition}\label{def:weakest}
	Let $S\in \prog$. The \emph{weakest precondition semantics} $wp.S$ and \emph{weakest liberal precondition semantics} $wlp.S$ of $S$ are both mappings 
	$$\qassertsh{\supseteq \qv(S)}\ra  \qassertsh{\supseteq \qv(S)}$$
	defined inductively in Table~\ref{tbl:wpsemantics}. To simplify notation, we use $xp$ to denote both $wp$ and $wlp$ whenever it is applicable for both of them.
\end{definition}

We follow the standard notations $wp.S.\qassert$ and $wlp.S.\qassert$ to denote weakest (liberal) preconditions~\cite{dijkstra1976discipline, morgan1996probabilistic,ying2012floyd}.
The well-definedness of Definition~\ref{def:weakest} follows from the observation that $xp.S$ is monotonic on $\h_V$ (with respect to $\le_V$; see Lemma~\ref{lem:wpcorres}\ref{cl:wpmono} below) for any cq-program $S$ and $V\supseteq qv(S)$.
The weakest (liberal) precondition semantics in Table~\ref{tbl:wpsemantics} is a natural extension of the corresponding semantics of both probabilistic~\cite{morgan1996probabilistic} and purely quantum~\cite{ying2012floyd} programs. For example, the weakest precondition for conditional branching is defined in~\cite{morgan1996probabilistic} as
	$$wp.(\iif\ b\ \then \ S_1\ \eelse\ S_0\ \pend).\beta \define b\times wp.S_1.\beta + (\neg b) \times wp.S_0.\beta$$ 
	where on the right-hand side $b$ and $\neg b$ are regarded as $\{0,1\}$-valued functions on the states space, and $\beta$ is a probabilistic assertion (a non-negative random variable; see Table~\ref{tbl:comparison}). This coincides with the corresponding definition in Table~\ref{tbl:wpsemantics}, as $b\wedge \qassertp$ is exactly $b\times \qassertp$ for $\tybool$-type expression $b$ and probabilistic assertion  $\qassertp$.

The following lemma shows a duality relation between the denotational and weakest (liberal) precondition semantics of cq-programs.

{\renewcommand{\arraystretch}{1.5}
	\begin{table}[t]
		\begin{lrbox}{\tablebox}
			\centering
			\begin{tabular}{l}
			\begin{tabular}{ll}
				$xp.\sskip.\qassert = \qassert$ & $xp.(x\rassign g).\qassert =  \displaystyle\sum_{d\in D_{\mathit{type(x)}}}g(d)\cdot \qassert[\subs{x}{d}]$
				\\$xp.(x:=e).\qassert =  \qassert[\subs{x}{e}]$ 
			  &$xp.(x:= \measure\ \m[\bar{q}]).\qassert  = \displaystyle\sum_{i\in I}M_i^\dag \qassert[\subs{x}{i}]M_i$\\
			 $xp.(\bar{q}\apply U).\qassert  = U_{\bar{q}}^\dag \qassert U_{\bar{q}}$ &$xp.(q:=0).\qassert =  \displaystyle\sum_{i=0}^{d_{q}-1} \qiz \qassert\qzi$\\
			$xp.(S_0; S_1).\qassert = xp.S_0.(xp.S_1.\qassert)$\hspace{3.5em}
			&$xp.(\mstm).\qassert = b\wedge xp.S_1.\qassert + \neg b \wedge xp.S_0.\qassert$\\
						\smallskip
			$wlp.\abort.\qassert = \top_{V}$ & $wp.\abort.\qassert = \bot_{V}$\\
						\end{tabular}\\ 
					\begin{tabular}{c}
			$wlp.(\wstm).\qassert = \bigwedge_{n\geq 0} \qassert_n$, where 
			$\qassert_0 \define \top_{V}$, and for any $n\geq 0$,\\
			$\qassert_{n+1} \define \neg b \wedge \qassert + b\wedge wlp.S.\qassert_n.$\\
						$wp.(\wstm).\qassert = \bigvee_{n\geq 0} \qassert_n$, where 
			$\qassert_0 \define \bot_{V}$, and for any $n\geq 0$,\\
			$\qassert_{n+1} \define \neg b \wedge \qassert + b\wedge wp.S.\qassert_n.$
			\end{tabular}			
		\end{tabular}
		\end{lrbox}
		\resizebox{\textwidth}{!}{\usebox{\tablebox}}\\
		\vspace{4mm}
		\caption{Weakest (liberal) precondition semantics for cq-programs, where $xp\in \{wp, wlp\}$. 
		}
		\label{tbl:wpsemantics}
	\end{table}
}

\begin{lemma}\label{lem:wpwlp}
		Let $S$ be a cq-program, $\qstate$ a cq-state, and $\qassert$ a cq-assertion with $qv(\qstate) \supseteq qv(\qassert)\supseteq qv(S)$. Then
	\begin{enumerate}
		\item $qv(wp.S.\qassert) = qv(wlp.S.\qassert) = qv(\qassert)$;
	\item $\Exp(\qstate\models wp.S.\qassert) =  \Exp(\sem{S}(\qstate)\models \qassert)$;
	\item $\Exp(\qstate\models wlp.S.\qassert) =  \Exp(\sem{S}(\qstate)\models \qassert) + \tr(\qstate) -  \tr(\sem{S}(\qstate))$.
	\end{enumerate}
\end{lemma}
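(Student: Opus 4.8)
The plan is to prove the three claims by structural induction on the cq-program $S$, following the inductive definition of $xp.S$ in Table~\ref{tbl:wpsemantics}. I would dispatch claim (1) first by a short induction: in every defining clause the cq-assertion $xp.S.\qassert$ is obtained from $\qassert$ only by conjoining $\tybool$-guards, substituting classical expressions, or conjugating/summing with fixed operators acting on $\qv(S)\subseteq\qv(\qassert)$ (tensored with identities elsewhere), so the quantum-variable set stays equal to $\qv(\qassert)$; the $\wstm$ clause follows since each approximant $\qassert_n$ in Table~\ref{tbl:wpsemantics} inherits $\qv(\qassert)$, and hence so do $\bigvee_n\qassert_n$ and $\bigwedge_n\qassert_n$. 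Settling (1) up front is convenient: when invoking the induction hypothesis for (2)--(3) on a sub-program $S'$ one must recheck $\qv(\qstate)\supseteq\qv(\qassert)\supseteq\qv(S')$, and (1) gives $\qv(xp.S'.\qassert)=\qv(\qassert)$ for free.

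For the base cases of (2) and (3): $\sskip$ and $\abort$ are immediate from Lemma~\ref{lem:bpdeg}(2); the deterministic assignment $x:=e$ is exactly Lemma~\ref{lem:qavalid}, with $\sem{x:=e}(\qstate)=\qstate[\subs{x}{e}]$ and $\tr(\qstate[\subs{x}{e}])=\tr(\qstate)$, so the correction term in (3) disappears. The probabilistic assignment $x\rassign g$, the initialisation $q:=0$, the unitary $\bar q\apply U$, and the measurement $x:=\measure\ \m[\bar q]$ are all handled by combining: Lemma~\ref{lem:bpdeg}, Clause~\ref{cl:super}, applied to the relevant completely positive, sub-unital maps (the conjugations $U^\dag(\cdot)U$ and the reset super-operator of $q:=0$ being in fact unital, and each $M_i^\dag(\cdot)M_i$ being sub-unital because $\sum_i M_i^\dag M_i=I$), whose adjoints realise the quantum action of $\sem{S}$; Lemma~\ref{lem:qavalid} for the substitutions intertwined in $x\rassign g$ and $x:=\measure\ \m[\bar q]$; and linearity of $\Exp$ in both arguments (Lemma~\ref{lem:bpdeg}(3)--(4)) to assemble the sums. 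Since all these maps are trace-preserving, $\tr(\sem{S}(\qstate))=\tr(\qstate)$ and (3) reduces to (2).

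The inductive cases for $S_0;S_1$ and $\mstm$ are routine bookkeeping. For sequencing I would chain the induction hypotheses for $S_0$ then $S_1$, using $\sem{S_0;S_1}=\sem{S_1}\circ\sem{S_0}$ and $xp.(S_0;S_1)=xp.S_0\circ xp.S_1$, and note that in (3) the intermediate term $\tr(\sem{S_0}(\qstate))$ cancels. For the conditional I would split $\qstate=\qstate|_b+\qstate|_{\neg b}$, use the decomposition $\sem{\mstm}(\qstate)=\sem{S_1}(\qstate|_b)+\sem{S_0}(\qstate|_{\neg b})$ from Lemma~\ref{lem:iout}, the identity $\Exp(\qstate\models b\wedge\qassertp)=\Exp(\qstate|_b\models\qassertp)$ (Lemma~\ref{lem:bpdeg}, Clause~\ref{cl:lem3.9}), and linearity of $\Exp$ (Lemma~\ref{lem:bpdeg}(3)--(4)); the trace terms add up correctly.

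The while loop $\while\define\wstm$ is the main obstacle. I would first verify, by a side induction on $n$ using the already-handled conditional and sequencing cases, that the approximants in Table~\ref{tbl:wpsemantics} satisfy $\qassert_n=xp.((\while)^n).\qassert$, whence $wp.(\while).\qassert=\bigvee_n wp.((\while)^n).\qassert$ and $wlp.(\while).\qassert=\bigwedge_n wlp.((\while)^n).\qassert$. For $wp$, chaining Lemma~\ref{lem:qasset}(3) ($\Exp$ commutes with the increasing join of assertions), the induction hypothesis for each $(\while)^n$, Lemma~\ref{lem:iout}, Clause~\ref{item:loop} ($\sem{\while}(\qstate)=\bigvee_n\sem{(\while)^n}(\qstate)$), and Lemma~\ref{lem:qasset}(1) yields (2). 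For $wlp$ the subtle point is that one combines an $\inf$ over $n$ of assertions with a $\sup$ over $n$ of cq-states; the clean fix is to rewrite, via Lemma~\ref{lem:bpdeg}(2)--(3), the $n$-th term $\Exp(\qstate\models wlp.((\while)^n).\qassert)+\tr(\qstate)-\tr(\sem{(\while)^n}(\qstate))$ as $\tr(\qstate)-\Exp(\sem{(\while)^n}(\qstate)\models\top_W-\qassert)$ with $W=\qv(\qassert)$, so that $\inf_n$ becomes $-\sup_n$ of a quantity monotone along the increasing sequence $\{\sem{(\while)^n}(\qstate)\}_n$, which by Lemma~\ref{lem:qasset}(1) equals $-\Exp(\sem{\while}(\qstate)\models\top_W-\qassert)=\Exp(\sem{\while}(\qstate)\models\qassert)-\tr(\sem{\while}(\qstate))$; adding back $\tr(\qstate)$ gives (3). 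Getting this limit exchange right — in particular phrasing the $wlp$ identity so the relevant monotonicity is manifest — is the only step where I expect real care to be needed.
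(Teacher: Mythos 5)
Your proposal is correct and follows essentially the same route as the paper: structural induction on $S$, with the base and composite cases dispatched via Lemmas~\ref{lem:bpdeg}, \ref{lem:qavalid} and \ref{lem:iout}, and the while loop handled through the approximants of Table~\ref{tbl:wpsemantics} together with the limit lemmas (Lemmas~\ref{lem:qasset} and~\ref{lem:iout}\ref{item:loop}). The paper only writes out the conditional and the $wp$-loop case (via a side induction giving $\Exp(\qstate'\models\qassert_n)=\Exp(\sem{S^n}(\qstate')\models\qassert)$); your careful treatment of the $wlp$-loop case, rewriting the $n$-th term so that the infimum becomes a supremum over the increasing state sequence, is a correct filling-in of the part the paper leaves implicit.
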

\begin{proof}
	We prove this lemma by induction on the structure of $S$. The basis cases are easy from the definition. We only show the following two cases for clause (2) as examples.
	\begin{itemize}
		\item Let $S\define \mstm$. Then 
		\begin{align*}
			\Exp(\qstate\models wp.S.\qassert) &=\Exp(\qstate\models b\wedge wp.S_1.\qassert + \neg b \wedge wp.S_0.\qassert)\\
			& = 	\Exp(\qstate|_b\models wp.S_1.\qassert) + \Exp(\qstate|_{\neg b}\models wp.S_0.\qassert) \\
			& =\Exp(\sem{S_1}(\qstate|_b)\models \qassert) + \Exp(\sem{S_0}(\qstate|_{\neg b})\models \qassert)
		\end{align*}
		which is exactly $\Exp(\sem{S}(\qstate)\models \qassert)$.
		
		\item Let $S\define \whilestm{b}{S'}$. Let $V \define qv(\qassert)$, $\qassert_0 \define \bot_{V}$, and for any $n\geq 0$,
		$\qassert_{n+1} \define \neg b \wedge \qassert + b\wedge wp.S'.\qassert_n.$
		First, we show by induction that for any $n\geq 0$ and $\qstate'\in \qstatesh{V}$,
		$$\Exp(\qstate' \models \qassert_n) = \Exp(\sem{S^n}(\qstate') \models \qassert).$$
		The case of $n=0$ follows from the definition. We further calculate from Lemmas~\ref{lem:bpdeg} and~\ref{lem:whilesem} that
		\begin{align*}
			\Exp(\qstate' \models \qassert_{n+1}) &= \Exp(\qstate'\models \neg b \wedge \qassert) + \Exp(\qstate' \models b\wedge wp.S'.\qassert_n)\\
			&=\Exp(\qstate'|_{\neg b} \models  \qassert) + \Exp(\qstate'|_{b} \models wp.S'.\qassert_n)\\
			&=\Exp(\qstate'|_{\neg b} \models  \qassert) + \Exp(\sem{S'}(\qstate'|_{b}) \models \qassert_n)\\
			&=\Exp(\qstate'|_{\neg b} \models  \qassert) +
			\Exp(\sem{S^n}(\sem{S'}(\qstate'|_{b}) \models \qassert))\\
			&=\Exp(\sem{S^{n+1}}(\qstate') \models \qassert).
		\end{align*}
		Thus from Lemma~\ref{lem:qasset},
		$$\Exp(\qstate \models  wp.S.\qassert) =\Exp(\qstate \models  \bigvee_{n\geq 0} \qassert_n) =  \Exp(\sem{S}(\qstate) \models \qassert).$$
	\end{itemize}
\end{proof}

We can also compute the weakest (liberal) precondition semantics of the syntactic sugars introduced in Sec.~\ref{sec:syntax}.

\begin{lemma}\label{lem:wpsugar}
	Let  $\bar{q}\define q_1,\ldots, q_n$, $1\leq k\leq n$, $\u \define \{U_i : 1\leq i\leq K\}$, and $xp\in \{wp, wlp\}$.
	Let $\qassert$ be a cq-assertion in $\qassertsh{V}$ with $V$ containing all quantum variables of the corresponding cq-program. Then
	\begin{enumerate}
				\item  $xp.(\bar{q}:=0).\qassert  = \sum_{i=0}^{d_{\bar{q}}-1} \quiz \qassert\quzi$;
		\item $xp.(x:= \measure\ \bar{q}).\qassert = \sum_{i=0}^{d_{\bar{q}}-1}\quii\qassert[\subs{x}{i}]\quii$;
		\item Let $S \define \bar{q}[e_1, \cdots, e_k]\apply {\u}(e)$. Then
		\[
		xp.S.\qassert = \sum_{i_1, \cdots, i_k= 1, \mathit{distinct}}^{n} \sum_{i=1}^K \bigwedge_{j=1}^k (e_j = i_j)\wedge (e=i) \wedge \mathcal{U}^i_{i_1, \cdots, i_k}(\qassert)\]
		 where $\mathcal{U}^i_{i_1, \cdots, i_k}(\qassert) \define
		U_i^\dag \qassert {U_i}
		$
		and $U_i$ is applied on ${q}_{i_1}, \cdots, {q}_{i_k}$.
	\end{enumerate}
\end{lemma}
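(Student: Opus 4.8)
The plan is to observe that each of the three constructs is merely a macro which, once the definitions of Sec.~\ref{sec:syntax} are unfolded, is a genuine program of the core language; hence $xp.S.\qassert$ is obtained by straightforward structural unfolding, applying the clauses of Table~\ref{tbl:wpsemantics} one at a time and simplifying. Clause~(2) is immediate: $x:=\measure\ \bar q$ abbreviates $x:=\measure\ \m_{\mathit{com}}[\bar q]$ with $\m_{\mathit{com}}=\{P_k = |k\>_{\bar q}\<k| : 0\leq k<d_{\bar q}\}$, and since each $P_k$ is a self-adjoint projection the measurement clause $\sum_{i}M_i^\dag\,\qassert[\subs{x}{i}]\,M_i$ specialises directly to $\sum_{i=0}^{d_{\bar q}-1}\quii\,\qassert[\subs{x}{i}]\,\quii$.

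For clause~(1), $\bar q:=0$ abbreviates $q_1:=0;\cdots;q_n:=0$, so by the composition clause $xp.(\bar q:=0).\qassert = xp.(q_1:=0).(\cdots(xp.(q_n:=0).\qassert))$. I would apply the single-variable rule $xp.(q_j:=0).\qassert' = \sum_{i_j=0}^{d_{q_j}-1}|i_j\>_{q_j}\<0|\;\qassert'\;|0\>_{q_j}\<i_j|$ once for each $j$ and collect the $n$ nested summations into one: operators on distinct registers commute and tensor, so the composite Kraus family is $\{\bigotimes_{j=1}^n|i_j\>_{q_j}\<0| : 0\leq i_j<d_{q_j}\}$, which by the indexing convention $|i_1\>_{q_1}\cdots|i_n\>_{q_n}=|i\>_{\bar q}$ (with $i=\sum_j i_j d_{q_{j+1}}\cdots d_{q_n}$) equals $\{\quiz : 0\leq i<d_{\bar q}\}$; this gives $\sum_{i=0}^{d_{\bar q}-1}\quiz\,\qassert\,\quzi$. (Alternatively one could verify the claimed formula against $\sem{\bar q:=0}$ of Lemma~\ref{lem:densugar} through the duality of Lemma~\ref{lem:wpwlp} and the order-characterisation of Lemma~\ref{lem:qassetorder}, but the direct unfolding is shorter.)

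Clause~(3) is the substantive one. Here $S\define\bar q[e_1,\cdots,e_k]\apply\u(e)$ expands to $\measstm{b_{\mathrm{err}}}{\abort}{S_1;\cdots;S_{|R|}}$, where $b_{\mathrm{err}}$ is the out-of-range/non-distinct guard $(\exists i.\,e_i<1\vee e_i>n)\vee(\exists i,j.\,i\neq j\wedge e_i=e_j)$, each $S_\ell=\measstm{\bigwedge_j(e_j=i_j)}{q_{i_1},\cdots,q_{i_k}\apply\u(e)}{\sskip}$, and $q_{i_1},\cdots,q_{i_k}\apply\u(e)$ itself expands to $\measstm{e<1\vee e>K}{\abort}{S_1';\cdots;S_K'}$ with $S_i'=\measstm{e=i}{q_{i_1},\cdots,q_{i_k}\apply U_i}{\sskip}$. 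The key auxiliary fact I would isolate and prove first, by induction on $m$, is that whenever the guards $c_1,\dots,c_m$ are pairwise mutually exclusive and each $T_\ell$ changes no classical variable,
\[
xp.\bigl(\measstm{c_1}{T_1}{\sskip};\cdots;\measstm{c_m}{T_m}{\sskip}\bigr).\qassert \;=\; \sum_{\ell=1}^m c_\ell\wedge xp.T_\ell.\qassert \;+\;\Bigl(\textstyle\neg\bigvee_{\ell} c_\ell\Bigr)\wedge\qassert ;
\]
the two hypotheses ensure that in the induction step the cross terms $c_\ell\wedge xp.T_\ell.(c_{\ell'}\wedge\cdots)$ with $\ell\neq\ell'$ collapse to $\bot_V$ (no classical assignment means $(xp.T_\ell.\qassertp)(\cstate)$ depends on $\qassertp$ only through $\qassertp(\cstate)$, while $\cstate\models c_\ell$ forces $\cstate\not\models c_{\ell'}$), and the surviving terms recombine as displayed. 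Applying this to the inner sequence (guards $e=i$, pairwise exclusive) together with $xp.(q_{i_1},\cdots,q_{i_k}\apply U_i).\qassert = U_i^\dag\qassert U_i = \mathcal{U}^i_{i_1,\cdots,i_k}(\qassert)$ and the $\abort$ clause yields $xp.(q_{i_1},\cdots,q_{i_k}\apply\u(e)).\qassert$; applying it again to the outer sequence (guards $\bigwedge_j(e_j=i_j)$ over $(i_1,\cdots,i_k)\in R$, pairwise exclusive), peeling off the outermost conditional via the $\abort$ clause, using that $\neg\bigvee_{(i_1,\cdots,i_k)\in R}\bigwedge_j(e_j=i_j)$ equals $b_{\mathrm{err}}$, and merging the two nested case-splits into one double sum (via associativity $\cassert\wedge(\cassert'\wedge\qassert')=(\cassert\wedge\cassert')\wedge\qassert'$ of the classical connective on cq-assertions) produces the displayed identity. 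That displayed formula is in fact the $xp=wp$ instance, since each classical region on which $S$ aborts contributes $\bot_V$ under $wp$; for $wlp$ one must additionally add $(b_{\mathrm{err}}\vee e<1\vee e>K)\wedge\top_V$.

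The main difficulty is purely administrative: managing the two levels of nested conditionals in clause~(3), pinning down the auxiliary disjoint-guards identity together with its $\neg\bigvee_\ell c_\ell$ remainder term and the ``no classical assignment'' hypothesis (without which $xp.T_\ell$ would not commute with the classical guards), and tracking which conjunctions become contradictory so that the abort regions collapse correctly. Everything else is direct substitution into Table~\ref{tbl:wpsemantics}.
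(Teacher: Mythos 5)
Your proof is correct, but it takes a genuinely different route from the paper's. The paper's entire proof is ``Direct from Lemmas~\ref{lem:densugar} and~\ref{lem:wpwlp}'': one computes $\Exp(\sem{S}(\qstate)\models\qassert)$ from the already-established denotational semantics of the sugars and reads off the precondition through the duality of Lemma~\ref{lem:wpwlp} (together with Lemma~\ref{lem:qassetorder}, which makes a cq-assertion uniquely determined by its expectations) --- exactly the alternative you mention and set aside in clause~(1). You instead unfold each macro into core syntax and push $xp$ through Table~\ref{tbl:wpsemantics} compositionally, isolating the disjoint-guards identity for sequences of one-armed conditionals whose bodies touch no classical variable; that identity is stated and used correctly (the pointwise-dependence argument for killing the cross terms is exactly what is needed), and your bookkeeping for the two nested levels of conditionals in clause~(3) checks out. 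The trade-off: the paper's route is shorter because all the case analysis was already paid for in Lemma~\ref{lem:densugar}, while your syntactic route is self-contained and yields a reusable lemma about mutually exclusive guards. Notably, your unfolding also surfaces a point the paper's one-line proof glosses over: as printed, clause~(3) is exact only for $xp=wp$. Since $wlp.\abort.\qassert=\top_V$ while $\sem{S}$ is $\bot_V$ on the abort regions, Lemma~\ref{lem:wpwlp}(3) forces $wlp.S.\qassert$ to equal the identity on every classical state satisfying the out-of-range/non-distinctness condition on the $e_j$'s or $e<1\vee e>K$, so the $wlp$ formula needs the additional summand $(b_{\mathrm{err}}\vee e<1\vee e>K)\wedge\top_V$, exactly as you say; your version is the careful one, and flagging this is a genuine (if minor) correction to the statement rather than a gap in your argument.
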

\begin{proof}
	Direct from  Lemmas~\ref{lem:densugar} and~\ref{lem:wpwlp}.
\end{proof}

The following collects some properties of the weakest (liberal) precondition semantics.

\begin{lemma}\label{lem:wpcorres}
	Let $S$ be a cq-program, $\qstate$ a cq-state, and $\qassert$ a cq-assertion with $qv(\qstate) \supseteq qv(\qassert)\supseteq qv(S)$. Let $xp\in \{wp, wlp\}$. Then
	\begin{enumerate}
		\item\label{cl:wpwlp} $wp.S.\qassert + wlp.S.(\top_{qv(\qassert)}-\qassert) = \top_{qv(\qassert)}$;
		\item\label{cl:wpmono}  the function $xp.S$ is monotonic; that is, for all $\qassert_1 \le \qassert_2$,
		$
		xp.S.\qassert_1 \le xp.S.\qassert_2;
		$
		\item\label{cl:wplinear}  the function $wp.S$ is linear; that is, for all $\qassert_1, \qassert_2\in \qasserts$, 
		\[
		wp.S.(\lambda_1\qassert_1 + \lambda_2\qassert_2) = 
		\lambda_1 wp.S.\qassert_1 + \lambda_2 wp.S.\qassert_2;\]
		\item\label{cl:wlplinear}  the function $wlp.S$ is affine-linear; that is, for all $\qassert_1, \qassert_2\in \qasserts$ and $\lambda_1 + \lambda_2 =1$,
		\[
		wlp.S.(\lambda_1\qassert_1 + \lambda_2\qassert_2) = 
		\lambda_1 wlp.S.\qassert_1 + \lambda_2 wlp.S.\qassert_2.\]
				
		\item\label{cl:superoper} if $W\cap qv(\qassert) \subseteq V \subseteq qv(\qassert)$, $(V\cup W)\cap qv(S) = \emptyset$, and $\f_{V\ra W}$ is a completely positive and sub-unital super-operator, then
		$$
		\f_{V\ra W}(wp.S.\qassert)  = wp.S.(\f_{V\ra W}(\qassert))
		$$
		and 
		$$\f_{V\ra W}(wlp.S.\qassert)  \le wlp.S.(\f_{V\ra W}(\qassert)).
		$$
		The equality holds for $wlp$ as well if $\f_{V\ra W}$ is unital;
	\end{enumerate}
\end{lemma}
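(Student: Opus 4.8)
The plan is to prove all clauses by structural induction on $S$, making heavy use of the duality Lemma~\ref{lem:wpwlp} and the order/expectation machinery of Lemmas~\ref{lem:bpdeg}, \ref{lem:qassetorder}, and \ref{lem:qasset}. Several clauses are in fact cleanest when derived from Lemma~\ref{lem:wpwlp} together with Lemma~\ref{lem:qassetorder} rather than by a direct syntactic induction: for instance, for clause~\ref{cl:wpmono}, if $\qassert_1 \le \qassert_2$ then $\Exp(\sem{S}(\qstate)\models\qassert_1)\le \Exp(\sem{S}(\qstate)\models\qassert_2)$ for every $\qstate$ by Lemma~\ref{lem:qassetorder}(2), hence $\Exp(\qstate\models wp.S.\qassert_1)\le \Exp(\qstate\models wp.S.\qassert_2)$ for all $\qstate$ by Lemma~\ref{lem:wpwlp}(2), and the converse direction of Lemma~\ref{lem:qassetorder}(2) then forces $wp.S.\qassert_1 \le wp.S.\qassert_2$; the $wlp$ case is identical using Lemma~\ref{lem:wpwlp}(3). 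Likewise, for clause~\ref{cl:wplinear}, Lemma~\ref{lem:welldef}(2) gives $\sem{S}(\cdot)$ linear on cq-states, so by Lemma~\ref{lem:bpdeg}(3)--(4) both sides of the claimed identity have the same expectation against every $\qstate$, and Lemma~\ref{lem:qassetorder}(2) (applied to $\qassert$ and to $\top-\qassert$, or directly to the difference) yields equality of the cq-assertions themselves; clause~\ref{cl:wlplinear} is the same argument restricted to $\lambda_1+\lambda_2=1$, using the affine form of Lemma~\ref{lem:wpwlp}(3). For clause~\ref{cl:wpwlp}, I would combine Lemma~\ref{lem:wpwlp}(2) and (3): for any $\qstate$, $\Exp(\qstate\models wp.S.\qassert) + \Exp(\qstate\models wlp.S.(\top-\qassert)) = \Exp(\sem{S}(\qstate)\models\qassert) + \Exp(\sem{S}(\qstate)\models\top-\qassert) + \tr(\qstate) - \tr(\sem{S}(\qstate))$, and since $\qassert + (\top-\qassert) = \top$ this equals $\tr(\sem{S}(\qstate)) + \tr(\qstate) - \tr(\sem{S}(\qstate)) = \tr(\qstate) = \Exp(\qstate\models\top)$; equality of expectations against all $\qstate$ gives the operator identity.

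For clause~\ref{cl:superoper}, I expect to argue again through expectations, now using the super-operator transfer identity Lemma~\ref{lem:bpdeg}(\ref{cl:super}), namely $\Exp(\qstate\models\f(\qassertp)) = \Exp(\f^\dag(\qstate)\models\qassertp)$. The commutation of $\f_{V\to W}$ with $wp.S$ should follow because, on the one hand, $\Exp(\qstate\models \f(wp.S.\qassert)) = \Exp(\f^\dag(\qstate)\models wp.S.\qassert) = \Exp(\sem{S}(\f^\dag(\qstate))\models \qassert)$, while on the other hand $\Exp(\qstate\models wp.S.(\f(\qassert))) = \Exp(\sem{S}(\qstate)\models \f(\qassert)) = \Exp(\f^\dag(\sem{S}(\qstate))\models\qassert)$; these two coincide provided $\sem{S}$ and $\f^\dag$ commute as operations on cq-states, which is exactly where the disjointness hypothesis $(V\cup W)\cap qv(S)=\emptyset$ is needed — $\f^\dag$ acts only on the quantum registers in $V$, $S$ only touches registers in $qv(S)$, and a routine check on each program construct (using that super-operators on disjoint systems commute, and that $\f^\dag$ is applied pointwise over classical states while $S$'s classical updates do not depend on the quantum part) shows $\sem{S}\circ\f^\dag = \f^\dag\circ\sem{S}$. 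Then Lemma~\ref{lem:qassetorder}(2) promotes the equality of expectations to the operator equality. For the $wlp$ inequality, the same computation with Lemma~\ref{lem:wpwlp}(3) produces extra trace terms $\tr(\f^\dag(\qstate)) - \tr(\sem{S}(\f^\dag(\qstate)))$ versus $\tr(\sem{S}(\qstate)) - \tr(\f^\dag(\sem{S}(\qstate)))$; since $\f$ is only sub-unital, $\f^\dag$ is only trace-nonincreasing, giving $\tr(\f^\dag(\qstate))\le\tr(\qstate)$, hence the one-sided bound $\f_{V\to W}(wlp.S.\qassert)\le wlp.S.(\f_{V\to W}(\qassert))$; when $\f$ is unital, $\f^\dag$ is trace-preserving, the trace terms match, and equality is restored.

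The main obstacle I anticipate is clause~\ref{cl:superoper}, specifically the lemma that $\sem{S}$ commutes with the pointwise application of $\f^\dag$ on the disjoint register set $V$. Although intuitively clear, this requires care at the measurement and probabilistic-assignment constructs — one must check that the Kraus operators of $\f^\dag$ (acting on $V$) commute with the measurement operators $M_i$ (acting on $\bar q\subseteq qv(S)$, disjoint from $V$) and that the classical substitution $\cstate[x/i]$ commutes with $\f^\dag$ (which does not touch the classical state at all) — and at the while loop, where one needs continuity of $\f^\dag$ to pass the commutation through the supremum $\bigvee_n\sem{(\while)^n}$ (available since completely positive maps are $\omega$-continuous, and $\Exp$ is continuous by Lemma~\ref{lem:qasset}). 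The remaining clauses~\ref{cl:wpwlp}--\ref{cl:wlplinear} are, by contrast, essentially formal consequences of Lemmas~\ref{lem:wpwlp} and~\ref{lem:qassetorder} as sketched above and should not require a construct-by-construct induction at all.
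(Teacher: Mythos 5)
Your handling of clauses \ref{cl:wpwlp}--\ref{cl:wlplinear} via the duality Lemma~\ref{lem:wpwlp} together with Lemma~\ref{lem:qassetorder} is sound (the paper dismisses these as "simpler" without detail), and your $wp$ half of clause~\ref{cl:superoper} is essentially the paper's own proof: the same chain of expectation equalities through Lemma~\ref{lem:wpwlp} and Lemma~\ref{lem:bpdeg}\ref{cl:super}, with the commutation $\sem{S}\circ\f^\dag_{W\ra V}=\f^\dag_{W\ra V}\circ\sem{S}$ justified by disjointness; the paper asserts that commutation with only a parenthetical remark, so your plan to check it construct by construct is, if anything, more careful. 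One caution: the paper invokes monotonicity (clause~\ref{cl:wpmono}) to justify well-definedness of $xp$ for loops, and Lemma~\ref{lem:wpwlp} presupposes that well-definedness, so deriving clause~\ref{cl:wpmono} from Lemma~\ref{lem:wpwlp} is mildly circular at the foundational level; a direct structural induction for monotonicity avoids this.

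The genuine gap is the $wlp$ half of clause~\ref{cl:superoper}. After using commutation, what remains to show is exactly $\tr(\f^\dag(\qstate))-\tr(\sem{S}(\f^\dag(\qstate)))\le \tr(\qstate)-\tr(\sem{S}(\qstate))$, and your justification --- $\f^\dag$ is trace-nonincreasing, hence $\tr(\f^\dag(\qstate))\le\tr(\qstate)$ --- does not deliver it, because the subtrahend $\tr(\sem{S}(\f^\dag(\qstate)))$ shrinks as well; moreover $\f^\dag(\qstate)$ lives on a different register set than $\qstate$, so no order comparison between them is available. Rearranged via commutation, the needed inequality is $\tr(\sem{S}(\qstate))-\tr(\f^\dag(\sem{S}(\qstate)))\le\tr(\qstate)-\tr(\f^\dag(\qstate))$, i.e.\ $\Exp(\sem{S}(\qstate)\models\<\true,\,I_W-\f_{V\ra W}(I_{\h_V})\>)\le\Exp(\qstate\models\<\true,\,I_W-\f_{V\ra W}(I_{\h_V})\>)$: the expectation of the positive observable $I_W-\f_{V\ra W}(I_{\h_V})$, supported on registers disjoint from $qv(S)$, cannot increase under $\sem{S}$. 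That is true but requires its own argument, e.g.\ by applying the already-established $wp$ equality to the sub-unital map $1\mapsto I_W-\f_{V\ra W}(I_{\h_V})$ together with $wp.S.\top_{Y}\le\top_{Y}$. The paper avoids this trace bookkeeping altogether: it deduces the $wlp$ inequality algebraically from the $wp$ equality, clauses~\ref{cl:wpwlp} and~\ref{cl:wplinear}, and positivity of $(I_W-\f_{V\ra W}(I_{\h_V}))\otimes(\top_{Y}-wp.S.\top_{Y})$ with $Y=qv(\qassert)\setminus V$, which also makes the unital case immediate. Either repair works, but as written your step fails.
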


	\begin{proof}
		We only prove~\ref{cl:superoper} as an example; other cases are simpler. Let $X \define V\cup W\cup qv(\qassert)$. For any $\qstate\in \qstatesh{X}$,
		\begin{align*}
			\Exp(\qstate \models wp.S.\f_{V\ra W}(\qassert)) &= \Exp(\sem{S}(\qstate)\models \f_{V\ra W} (\qassert)) & (Lemma~\ref{lem:wpwlp}(1))\\
			&=\Exp(\f^\dag_{W\ra V}(\sem{S}(\qstate))\models \qassert) & (Lemma~\ref{lem:bpdeg}\ref{cl:super})\\
			&=\Exp(\sem{S}(\f^\dag_{W\ra V}(\qstate))\models \qassert)  & ((V\cup W)\cap qv(S) = \emptyset)\\
			&=\Exp(\f^\dag_{W\ra V}(\qstate)\models wp.S.\qassert)& (Lemma~\ref{lem:wpwlp}(1)) \\		
			&=\Exp(\qstate\models \f_{V\ra W}(wp.S.\qassert)).& (Lemma~\ref{lem:bpdeg}\ref{cl:super})
		\end{align*}
		Thus $wp.S.\f_{V\ra W}(\qassert) = \f_{V\ra W}(wp.S.\qassert)$ from the arbitrariness of $\qstate$.

		For  $wlp$, let $Y\define qv(\qassert)\backslash V$ and $Z\define Y\cup W$. Then from the assumption $W\cap qv(\qassert) \subseteq V$ we have $Y\cap W = \emptyset$. Since $\f$ is sub-unital,
		\begin{align*}
			(I_W - \f_{V\ra W}(I_V)) \otimes (\top_{Y} - wp.S.\top_{Y}) \ge \bot_{X}.
		\end{align*} 
	Note that $\f_{V\ra W}(I_V) \otimes \top_{Y} = \f_{V\ra W}(\top_{qv(\qassert)})$. We have
	\begin{align*}
			\top_Z - \f_{V\ra W}(\top_{qv(\qassert)}) & \ge I_W\otimes  wp.S.\top_Y - \f_{V\ra W}(I_V)\otimes wp.S.\top_{Y}\\
			& = wp.S.\top_Z - \f_{V\ra W}(wp.S.\top_{qv(\qassert)})
\end{align*} 
		where the second equality follows from the assumption that $(V\cup W)\cap qv(S) = \emptyset$, and thus from clause~\ref{cl:wplinear},
			\begin{align*}
			\top_Z - wp.S.(\top_Z -  \f_{V\ra W}(\qassert)  & \ge   \f_{V\ra W}(\top_{qv(\qassert)} - wp.S.(\top_{qv(\qassert)} - \qassert)).
		\end{align*} 
		The result then follows from clause~\ref{cl:wpwlp}. It is easy to check that when $\f$ is unital, the equality actually holds.
	\end{proof}

Note that from Lemmas~\ref{lem:wpwlp} and~\ref{lem:qassetorder}, if $qv(\qassert) = qv(\qassertp) \supseteq qv(S)$, then $\models_{{\mathit{tot}}} \ass{\qassert}{S}{\qassertp}$ iff $\qassert \le wp.S.\qassertp$, and $\models_{\mathit{par}} \ass{\qassert}{S}{\qassertp}$ iff $\qassert \le wlp.S.\qassertp$. To conclude this section, we extend this result (and a similar one for partial correctness) to the general case.
\begin{lemma}\label{lem:lesssimpre}
	Let $S$ be a cq-program, and $\qassert$ and $\qassertp$ are cq-assertions. Then 
	\begin{align*}
			\models_{{\mathit{tot}}} \ass{\qassert}{S}{\qassertp} \quad &\mbox{ iff }\quad
		\qassert \lesssim wp.S.(\qassertp\otimes I_{qv(S)\backslash \qv(\qassertp)})\\
			\models_{\mathit{par}} \ass{\qassert}{S}{\qassertp} \quad &\mbox{ iff }\quad
		\qassert \lesssim wlp.S.(\qassertp\otimes I_{qv(S)\backslash \qv(\qassertp)}).
	\end{align*}
\end{lemma}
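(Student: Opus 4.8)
The plan is to reduce the general statement to the special case recorded just above the lemma, in which the pre- and postconditions act on the same set of quantum variables that moreover contains $\qv(S)$. Set $W := \qv(S,\qassert,\qassertp)$, let $\qassertp^{*} := \qassertp\otimes I_{\h_{\qv(S)\backslash\qv(\qassertp)}}$ (so that $\qv(\qassertp^{*})=\qv(S)\cup\qv(\qassertp)\supseteq\qv(S)$ and $wp.S.\qassertp^{*}$, $wlp.S.\qassertp^{*}$ are defined, with the same set of quantum variables by Lemma~\ref{lem:wpwlp}(1)), and pad both assertions up to $W$: $\qassert' := \qassert\otimes I_{\h_{W\backslash\qv(\qassert)}}$ and $\qassertp' := \qassertp\otimes I_{\h_{W\backslash\qv(\qassertp)}}$, so that $\qv(\qassert')=\qv(\qassertp')=W$.

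The first ingredient is that tensoring an assertion with identities on extra quantum variables does not change any expectation: by Definition~\ref{def:satisfaction} together with $\tr[(A\otimes I)B]=\tr[A\,\tr_{\h_2}(B)]$, for every cq-state $\tau$ with $\qv(\tau)\supseteq\qv(\qassert)\cup Z$ one has $\Exp(\tau\models\qassert\otimes I_{\h_{Z}})=\Exp(\tau\models\qassert)$. Since also $\qv(S,\qassert',\qassertp')=W=\qv(S,\qassert,\qassertp)$, the cq-states quantified over in the definitions of total and partial correctness are the same for the primed and unprimed triples, so $\models_{{\mathit{tot}}}\ass{\qassert}{S}{\qassertp}$ iff $\models_{{\mathit{tot}}}\ass{\qassert'}{S}{\qassertp'}$, and likewise for partial correctness. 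As $\qv(\qassert')=\qv(\qassertp')\supseteq\qv(S)$, the special case gives $\models_{{\mathit{tot}}}\ass{\qassert'}{S}{\qassertp'}$ iff $\qassert'\le wp.S.\qassertp'$, and $\models_{\mathit{par}}\ass{\qassert'}{S}{\qassertp'}$ iff $\qassert'\le wlp.S.\qassertp'$.

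It remains to prove, for $xp\in\{wp,wlp\}$, that $\qassert'\le xp.S.\qassertp'$ iff $\qassert\lesssim xp.S.\qassertp^{*}$. Both sides of the first inequality act on $W$, so it is equivalent, by Lemma~\ref{lem:qassetorder}(2), to $\Exp(\qstate\models\qassert')\le\Exp(\qstate\models xp.S.\qassertp')$ for all $\qstate\in\qstatesh{W}$; and since $\qv(\qassert)\cup\qv(\qassertp^{*})=W$, the second relation is equivalent, again by Lemma~\ref{lem:qassetorder}(2), to $\Exp(\qstate\models\qassert)\le\Exp(\qstate\models xp.S.\qassertp^{*})$ for all $\qstate\in\qstatesh{W}$. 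For $\qstate\in\qstatesh{W}$, Lemma~\ref{lem:wpwlp}(2) (resp.\ (3)) rewrites $\Exp(\qstate\models wp.S.\qassertp')$ as $\Exp(\sem{S}(\qstate)\models\qassertp')$ and $\Exp(\qstate\models wp.S.\qassertp^{*})$ as $\Exp(\sem{S}(\qstate)\models\qassertp^{*})$ (resp.\ with an extra common summand $\tr(\qstate)-\tr(\sem{S}(\qstate))$ in the $wlp$ case), and the padding observation collapses both $\Exp(\sem{S}(\qstate)\models\qassertp')$ and $\Exp(\sem{S}(\qstate)\models\qassertp^{*})$ to $\Exp(\sem{S}(\qstate)\models\qassertp)$, while $\Exp(\qstate\models\qassert')=\Exp(\qstate\models\qassert)$. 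Hence the two expectation conditions are literally the same, which gives the desired equivalence and completes the proof; the partial-correctness case is identical, the correction terms appearing symmetrically on both sides.

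I expect the only delicate point to be the $\qv$-bookkeeping, in particular the identities $W\backslash\qv(\qassert)=\qv(\qassertp^{*})\backslash\qv(\qassert)$ and $\qv(S,\qassert',\qassertp')=\qv(S,\qassert,\qassertp)$; the reason no separate frame rule for $\sem{S}$ or for $xp.S$ is needed is precisely that Lemma~\ref{lem:wpwlp} already holds for every cq-state whose quantum-variable set merely \emph{contains} that of the assertion.
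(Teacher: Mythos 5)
Your proposal is correct and follows essentially the same route as the paper: pad $\qassert$ and $\qassertp$ with identity operators to the common variable set $qv(S,\qassert,\qassertp)$, observe that this padding changes neither the validity of the triple nor any expectation, and then invoke the wp/wlp--correctness duality. The only (harmless) deviation is in the last step, where the paper cites Lemma~\ref{lem:wpcorres}\ref{cl:superoper} (with the unital super-operator ``tensor with identity'') to identify $wp.S.(\qassertp\otimes I_{W\backslash qv(\qassertp^{*})}\otimes I_{qv(S)\backslash qv(\qassertp)})$ with the padded $wp.S.\qassertp^{*}$, whereas you re-derive the same fact semantically via Lemmas~\ref{lem:qassetorder} and~\ref{lem:wpwlp} -- which is exactly how that clause is proved in the paper anyway.
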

\begin{proof}
	Note that $\models_{{\mathit{tot}}} \ass{\qassert}{S}{\qassertp}$ iff $$\models_{{\mathit{tot}}} \ass{\qassert\otimes I_{qv(S, \qassertp) \backslash qv(\qassert)}}{S}{\qassertp\otimes I_{qv(S, \qassert) \backslash qv(\qassertp)}}.$$ Similar result holds for partial correctness as well. Then the lemma follows from Lemma~\ref{lem:wpcorres}\ref{cl:superoper}.
\end{proof}

\section{Hoare logic for cq-programs}

The core of Hoare logic is a proof system consisting of axioms and proof rules which enable syntax-oriented and modular reasoning of program correctness. In this section, we propose a Hoare logic for cq-programs.

\subsection{Partial correctness}

We propose in Table~\ref{tbl:psystem} the proof system for partial correctness of cq-programs, which looks quite similar to the standard Hoare logic, thanks to the novel definition of cq-assertions. Several cases deserve explanation. The side conditions in rules (Init), (Unit), and (Meas) are introduced to guarantee the well-definedness of the corresponding preconditions. They can always be satisfied by introducing `dull' quantum variables (i.e. tensor product with appropriate identity operators): if, say, $q\not\in qv(\qassert)$, then let $\qassert' \define \qassert\otimes I_q$ which is $\eqsim$-equivalent to $\qassert$ and $q\in qv(\qassert')$. An alternative way to deal with the case where $q\not\in qv(\qassert)$ in (Init) or $\bar{q}\cap \qv(\qassert) = \emptyset$ in (Unit) and (Meas) is to use the corresponding auxiliary rules introduced in Sec.~\ref{sec:aux}.

To use rule (If), we first split the precondition into two parts: $\qassert = b\wedge \qassert + \neg b\wedge \qassert$.
In the first one, all the classical states satisfy $b$, thus the first premise $\ass{b\wedge \qassert}{S_1}{\qassertp}$ is employed; in the second part, all classical states satisfy $\neg b$, thus the second premise is employed. 
As shown in Sec.~\ref{sec:wps}, this rule is essentially a quantum extension of  the corresponding rule in the expectation-based probabilistic Hoare logic~\cite{morgan1996probabilistic}. 
In contrast, more sophisticated rules are introduced in satisfaction-based probabilistic Hoare logics~\cite{ramshaw1979formalizing,den2002verifying,chadha2007reasoning,rand2015vphl} to deal with the case where probabilities of the two branches are different. This illustrates a benefit of adopting the expectation-based approach in reasoning about probabilistic and quantum programs: the quantitative assertions can encode probabilities in a natural way, making proof rules simpler than the  satisfaction-based approach.

The cq-assertion $\qassert$ in rule (While) plays a similar role of `loop invariant' as in classical programs. Finally, as the pre- and post-conditions can act on different quantum variables, we need the pre-order $\lesssim$ for rule (Imp) rather than the L\"{o}wner order in~\cite{ying2012floyd} etc.

Note also that in the rules in Table~\ref{tbl:psystem}, substitutions (say, $\qassert[\subs{x}{i}]$ in (Meas)) and Boolean operations (say, $b\wedge \qassert$ in (If)) are applied on the classical part of $\qassert$, while super-operators (say, $U_{\bar{q}}^\dag \qassert U_{\bar{q}}$ in (Unit)) are on the quantum part only.  For example, let $\qassert \define \bigoplus_{k\in K} \<\cassert_k, N_k\>$. Then rule (Meas) actually claims that if $\bar{q}\subseteq qv(\qassert)$,
\[\left\{\sum_{i\in I} \sum_{k\in K}\left\<\cassert_k[\subs{x}{i}], M_i^\dag N_k M_i\right\> \right\}{x:=\measure\ \m[\bar{q}]}\{\qassert\}.
\]
We write $\vdash_{\mathit{par}}\ass{\qassert}{S}{\qassertp}$ if the correctness formula $\ass{\qassert}{S}{\qassertp}$ can be derived using the axioms and rules presented in Table~\ref{tbl:psystem}.

{\renewcommand{\arraystretch}{2.5}
\begin{table}[t]
	\begin{lrbox}{\tablebox}
		\centering
				\begin{tabular}{l}
		\begin{tabular}{lclc}
	(Skip)	& $\ass{\qassert}{\sskip}{\qassert}$ & 
	(Abort) 	& $\ass{\top_{V}}{\abort}{\bot_V}$\\
 (Assn)	&
$\ass{\qassert[\subs{x}{e}]}{x:=e}{\qassert}$ &
(Rassn) &
$\displaystyle\left\{\sum_{d\in D_{\mathit{type}(x)}} g(d)\cdot \qassert[\subs{x}{d}]\right\}{x\rassign g}\{\qassert\}$ \\
(Init)	& $\displaystyle\frac{q\in \qv(\qassert)}{\ass{\sum_{i=0}^{d_q-1} \qiz \qassert\qzi}{q:=0}{\qassert}}$ &
 (Unit)	&
 $\displaystyle\frac{\bar{q}\subseteq \qv(\qassert)}{\ass{U_{\bar{q}}^\dag \qassert U_{\bar{q}}}{\bar{q}\apply U}{\qassert}}$ \\
 (Meas)	&
 $\displaystyle\frac{\bar{q}\subseteq \qv(\qassert), \m = \{M_i : i\in I\}}{\ass{\sum_{i\in I}M_i^\dag\qassert[\subs{x}{i}]M_i}{x:=\measure\ \m[\bar{q}]}{\qassert}}$  &
   (Seq)	&
 $\displaystyle\frac{\ass{\qassert}{S_0}{\qassert'},\ \ass{\qassert'}{S_1}{\qassertp}}{\ass{\qassert}{S_0; S_1}{\qassertp}}$\\
  (If)	&
 $\displaystyle\frac{\ass{b\wedge \qassert}{S_1}{\qassertp},\ \ass{\neg b \wedge\qassert}{S_0}{\qassertp}}{\ass{\qassert}{\mstm}{\qassertp}}$
&
 (While)	& $\displaystyle\frac{\ass{b\wedge \qassert}{S}{\qassert}
}{\ass{\qassert}{\wstm}{\neg b\wedge\qassert}}$ \\
 (Imp)	&
$\displaystyle\frac{\qassert\lesssim \qassert',\ \ass{\qassert'}{S}{\qassertp'},\ \qassertp'\lesssim \qassertp}{\ass{\qassert}{S}{\qassertp}}$&&
		\end{tabular}\\
			\end{tabular}
	\end{lrbox}
	\resizebox{\textwidth}{!}{\usebox{\tablebox}}\\
	\vspace{4mm}
	\caption{Proof system for partial correctness. 
	}
	\label{tbl:psystem}
\end{table}
}

	Recall the proof rule for loop programs in~\cite{ying2012floyd}:
	\begin{equation}\label{eq:ylooprule}\displaystyle\frac{\ass{M}{S}{\e_{\bar{q}}^0(N) + \e_{\bar{q}}^1(M)}
	}{\ass{\e_{\bar{q}}^0(N) + \e_{\bar{q}}^1(M)}{{\whilestm{\m[\bar{q}]=1}{S}}}{N}}
	\end{equation}
	where $\m = \{M_0, M_1\}$ and $\e_{\bar{q}}^i(A) \define M_i^\dag A M_i$, $i=0,1$. Now we show how this rule can be derived in our proof system, when the assertions like $M$ in Eq.(\ref{eq:ylooprule}) are replaced by cq-assertions of the form $\<\true, M\>$. That is, we are going to show
	\begin{equation}\label{eq:assum}
	\vdash_{\mathit{par}} \ass{\<\true, M\>}{S}{\left\<\true, \e_{\bar{q}}^0(N) + \e_{\bar{q}}^1(M)\right\>}
	\end{equation}
	implies
	\begin{equation}\label{eq:result}
	\vdash_{\mathit{par}} \ass{\left\<\true, \e_{\bar{q}}^0(N) + \e_{\bar{q}}^1(M)\right\>}{x:=\measure\ \m[\bar{q}]; \whilestm{x=1}{S; x:=\measure\ \m[\bar{q}]; }}{\<\true, N\>}.
	\end{equation}
	
	First we have
	\begin{align*}
	&\quad\{\<x=1, M\>\} \\
	&\quad\{\<\true, M\>\}&\mathit{(Imp)} \\
	&\quad S; & \\
	&\quad\left\{\left\<\true, \e_{\bar{q}}^0(N) + \e_{\bar{q}}^1(M)\right\>\right\}&\mathit{Eq.(\ref{eq:assum})}\\
	&\quad\left\{\left\<0=1, \e_{\bar{q}}^0(M)\right\> + \left\< 0\neq 1, \e_{\bar{q}}^0(N)\right\> +\left \<1=1, \e_{\bar{q}}^1(M)\right\> + \left\< 1\neq 1, \e_{\bar{q}}^1(N)\right\>\right\}&\mathit{(Imp)}\\
	&\quad x:= \measure\ \m[\bar{q}];\\
	&\quad\{\<x=1, M\> + \< x\neq 1, N\>\}.& \mathit{(Meas)}
	\end{align*}
	Then, using the (While) rule, 
	\[
	\vdash_{\mathit{par}} \ass{ \<x=1, M\> + \<x\neq 1, N\>}{\whilestm{x=1}{S; x:=\measure\ \m[\bar{q}]; }}{\<x\neq 1, N\>}.
	\]
	Finally, the following reasoning 
	\begin{align*}
	&\left\{\left\<\true, \e_{\bar{q}}^0(N) + \e_{\bar{q}}^1(M)\right\>\right\}\\
	&x:=\measure\ \m[\bar{q}]; \\
	&\{\<x=1, M\> + \< x\neq 1, N\>\}& \mathit{(Meas, Imp)}\\
	&\while\ x=1\ \ddo\ S;\ x:= \measure\ \m[\bar{q}];\ \pend\\
	&\{\<x\neq 1, N\>\}\\
	&\{\<\true, N\>\}& \mathit{(Imp)}
	\end{align*}
	gives us the proof of Eq.(\ref{eq:result}) as desired.

Now we show the soundness and (relative) completeness of the proof system in the sense of partial correctness.

\begin{theorem}\label{thm:psc}
	The proof system in Table~\ref{tbl:psystem} is both sound and complete with respect to the partial correctness of cq-programs.
\end{theorem}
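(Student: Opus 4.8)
The plan is to prove soundness and relative completeness separately, both by structural induction on the program $S$, leveraging the weakest liberal precondition semantics developed in Section~\ref{sec:wps}. For \emph{soundness}, I would show that each axiom and rule in Table~\ref{tbl:psystem} preserves validity $\models_{\mathit{par}}$. The key observation is that, by Lemma~\ref{lem:lesssimpre}, $\models_{\mathit{par}}\ass{\qassert}{S}{\qassertp}$ is equivalent to $\qassert\lesssim wlp.S.(\qassertp\otimes I_{qv(S)\backslash\qv(\qassertp)})$, so it suffices to check that each rule's conclusion follows from this characterisation. For the atomic rules (Skip), (Assn), (Rassn), (Init), (Unit), (Meas), the precondition written in the rule is \emph{exactly} $wlp.S.\qassert$ according to Table~\ref{tbl:wpsemantics} (using Lemma~\ref{lem:wpsugar} where syntactic sugar is involved), so the triple holds with equality; the side conditions $q\in\qv(\qassert)$ etc.\ are precisely what makes these expressions well-defined. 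For (Seq), (If), (Imp) one uses compositionality of $wlp$ (Lemma~\ref{lem:wpcorres}), the decomposition $\qassert = b\wedge\qassert + \neg b\wedge\qassert$ together with Lemma~\ref{lem:bpdeg}\ref{cl:lem3.9}, and monotonicity of $wlp.S$ (Lemma~\ref{lem:wpcorres}\ref{cl:wpmono}), respectively. The only genuinely delicate case is (While): I would show that if $\qassert$ is an invariant, i.e.\ $\ass{b\wedge\qassert}{S}{\qassert}$ holds, then $\qassert\lesssim\qassert_n$ for every approximant $\qassert_n$ in the definition of $wlp.(\wstm).\qassert$ — proved by induction on $n$ using $\qassert = \neg b\wedge\qassert + b\wedge\qassert \lesssim \neg b\wedge\qassert + b\wedge wlp.S.\qassert \lesssim \neg b\wedge\qassert + b\wedge wlp.S.\qassert_{n-1} = \qassert_{n}$ — and hence $\qassert\lesssim\bigwedge_n\qassert_n = wlp.(\wstm).\qassert$, which via Lemma~\ref{lem:lesssimpre} gives $\models_{\mathit{par}}\ass{\qassert}{\wstm}{\neg b\wedge\qassert}$.

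For \emph{relative completeness}, the standard Cook-style strategy applies: I would prove that for every cq-program $S$ and cq-assertion $\qassertp$,
\[
\vdash_{\mathit{par}}\ass{wlp.S.\qassertp}{S}{\qassertp},
\]
by structural induction on $S$. Once this is established, an arbitrary valid triple $\models_{\mathit{par}}\ass{\qassert}{S}{\qassertp}$ is derivable: by Lemma~\ref{lem:lesssimpre}, validity gives $\qassert\lesssim wlp.S.(\qassertp\otimes I_{qv(S)\backslash\qv(\qassertp)})$, and then one derivable triple plus a single application of (Imp) (using $\qassertp\otimes I\eqsim\qassertp$, hence $\qassertp\otimes I\lesssim\qassertp$ and $\qassertp\lesssim\qassertp\otimes I$) yields $\vdash_{\mathit{par}}\ass{\qassert}{S}{\qassertp}$. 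For the induction itself, the atomic cases are immediate since the rule's precondition \emph{equals} $wlp.S.\qassertp$. For $S = S_0;S_1$, apply the IH to $S_1$ with postcondition $\qassertp$ and to $S_0$ with postcondition $wlp.S_1.\qassertp$, then glue with (Seq) using $wlp.(S_0;S_1).\qassertp = wlp.S_0.(wlp.S_1.\qassertp)$. For the conditional, split $wlp.S.\qassertp = b\wedge wlp.S_1.\qassertp + \neg b\wedge wlp.S_0.\qassertp$ and observe $b\wedge wlp.S.\qassertp \lesssim wlp.S_1.\qassertp$ and likewise for the $\neg b$ branch, apply the IH and (Imp), then (If).

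The main obstacle — as always in Hoare-logic completeness — is the \textbf{while loop}. I need $\vdash_{\mathit{par}}\ass{wlp.(\wstm).\qassertp}{\wstm}{\qassertp}$. The natural candidate invariant is $\qassert^* \define wlp.(\wstm).\qassertp$ itself. Using Lemma~\ref{lem:whilesem}-style reasoning at the level of $wlp$ (the fixed-point identity $wlp.(\wstm).\qassertp = \neg b\wedge\qassertp + b\wedge wlp.S.(wlp.(\wstm).\qassertp)$, which I would extract from the defining sequence $\{\qassert_n\}$ in Table~\ref{tbl:wpsemantics} together with Lemma~\ref{lem:qasset}(4) and monotonicity of $wlp.S$, noting the decreasing chain $\qassert_0=\top_V\ge\qassert_1\ge\cdots$), one gets $b\wedge\qassert^* \lesssim wlp.S.\qassert^*$. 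Feeding this into the IH for $S$ (applied with postcondition $\qassert^*$) and an (Imp) step produces $\vdash_{\mathit{par}}\ass{b\wedge\qassert^*}{S}{\qassert^*}$; then (While) gives $\vdash_{\mathit{par}}\ass{\qassert^*}{\wstm}{\neg b\wedge\qassert^*}$, and a final (Imp) using $\neg b\wedge\qassert^* \lesssim \qassertp$ (immediate from the fixed-point identity, since $\neg b\wedge\qassert^* = \neg b\wedge\qassertp \le\qassertp$) closes the argument. The subtle points to get right are: (i) that the defining chain for $wlp$ is \emph{decreasing} and its meet is genuinely a fixed point (requiring continuity of $wlp.S$ from above, i.e.\ Lemma~\ref{lem:qasset}(4) combined with the structural IH delivering $wlp.S.(\bigwedge_n\qassert_n) = \bigwedge_n wlp.S.\qassert_n$), and (ii) bookkeeping of the quantum-variable sets so that all $\lesssim$-comparisons are between assertions that, after padding with identities, live over a common $V$ — handled uniformly by the $\eqsim$/$\lesssim$ machinery from Section~3.2.
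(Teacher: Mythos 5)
Your proposal is correct and follows essentially the same route as the paper: soundness is checked rule by rule with (While) handled by induction on the $wlp$ approximants $\qassert_n$, and completeness is the Cook-style argument that $\vdash_{\mathit{par}}\ass{wlp.S.\qassertp}{S}{\qassertp}$ by structural induction, taking $wlp.(\wstm).\qassertp$ itself as the loop invariant via the fixed-point identity and closing with (Imp), (While), (Imp) and Lemma~\ref{lem:lesssimpre}. The only difference is that you flag the justification of the fixed-point identity and the $\lesssim$/padding bookkeeping explicitly, points the paper's proof takes for granted.
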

\begin{proof}
	Soundness:  We need only to show that each rule in Table~\ref{tbl:psystem} is valid in the sense of partial correctness. Take the rule (While) as an example; the others are simpler. Let
	$\models_{\mathit{par}}  \ass{b\wedge \qassert}{S}{\qassert}$. Without loss of generality, we assume $qv(S) \subseteq qv(\qassert)$. Then $b\wedge \qassert\le wlp.S.\qassert$. We now prove by induction on $n$ that
	$ \qassert \le \qassert_n$
	for any $n\geq 0$, where $\qassert_n$ is defined as in Table~\ref{tbl:wpsemantics} for the $wlp$ semantics of $\wstm$ when the postcondition is $\neg b\wedge \qassert$. The case when $n=0$ is trivial. Then we calculate
	\begin{align*}
		\qassert_{n+1} &= \neg b\wedge (\neg b\wedge\qassert) + b\wedge wlp.S.\qassert_n\\
		&\ge \neg b\wedge \qassert + b\wedge wlp.S.\qassert\\
		&\ge \neg b\wedge \qassert + b\wedge (b\wedge \qassert)=\qassert,
	\end{align*}
	where the first inequality follows from the induction hypothesis and Lemma~\ref{lem:wpcorres}\ref{cl:wpmono}.
	Thus $$\qassert \le wlp.(\wstm).(\neg b\wedge \qassert),$$ and so 
	$$\models_{\mathit{par}} \ass{\qassert}{\wstm}{\neg b\wedge \qassert}$$ as desired.
	
	Completeness: By Lemma~\ref{lem:lesssimpre} and the (Imp) rule, it suffices to show that for any $\qassert$ and $S'$ with $qv(S')\subseteq qv(\qassert)$,
	$$\vdash_{\mathit{par}} \ass{wlp.S'.\qassert}{S'}{\qassert}.$$
	Again, we take the case for loops as an example. 
	Let $\while \define \wstm$ and $\qassertp \define wlp.\while.\qassert$.
	By induction, we have 
	$\vdash_{\mathit{par}} \ass{wlp.S.\qassertp}{S}{\qassertp}.$
	Note that
	$$\qassertp= \neg b\wedge \qassert+ b\wedge wlp.S.\qassertp.$$
	Thus $b\wedge \qassertp = b\wedge wlp.S.\qassertp \le wlp.S.\qassertp$ and 
	so $\vdash_{\mathit{par}} \ass{b\wedge \qassertp}{S}{\qassertp}$ by the (Imp) rule. Now using (While) we have
	$\vdash_{\mathit{par}} \ass{\qassertp}{\while}{\neg b\wedge \qassertp}$
	and the result follows from the fact that $\neg b\wedge \qassertp = \neg b\wedge \qassert \le \qassert$. 
\end{proof}

\subsection{Total correctness}

Ranking functions play a central role in proving total correctness of while loop programs. Recall that in the classical case, a ranking function maps each reachable state in the loop body to an element of a well-ordered set (say, the set $\N$ of non-negative integers), such that the value decreases strictly after each iteration of the loop. Our proof rule for total correctness of while loops also heavily relies on the notion of ranking assertions.

\begin{definition}
	Let $\qassert\in \qasserts$. A decreasing sequence (w.r.t. $\le$) of cq-assertions $\{\qassert_n : n\geq 0\}$ in $\qasserts$ are $\qassert$-\emph{ranking assertions} for $\wstm$ if 
	\begin{enumerate}
		\item $\qassert \le \qassert_0$ and $\bigwedge_n \qassert_n = \emptydis_V$;
		\item for any $n\geq 0$ and $\qstate\in \qstatesh{qv(S)\cup V}$,
		\[
			\Exp(\sem{S}(\qstate|_b) \models \qassert_n ) \leq \Exp(\qstate \models \qassert_{n+1}).
		\]
	\end{enumerate}
\end{definition}
An alternative definition of $\qassert$-ranking assertions, which uses the weakest precondition semantics instead of the denotational one, is to replace the second clause above by $b\wedge wp.S.\qassert_n \le \qassert_{n+1}.$ It is easy to show that these two definitions are equivalent. 

With the notion of ranking assertions, we can state the proof rule for while loops in total correctness as follows:
$$\mathrm{(WhileT)}\quad \displaystyle\frac{
	\begin{tabular}{l}
	$\ass{b\wedge \qassert}{S}{\qassert}$\\
	$\qassert$-ranking assertions exist for $\wstm$
	\end{tabular}
}{\ass{\qassert}{\wstm}{\neg b\wedge \qassert}}$$
The {proof system for total correctness} is then defined as for partial correctness, except that the rule (While) is replaced by (WhileT), and rule (Abort) replaced by
$$\mathrm{(AbortT)}\qquad \ass{\bot_{V}}{\abort}{\bot_{V}}.$$
We write $\vdash_{{\mathit{tot}}}\ass{\qassert}{S}{\qassertp}$ if the correctness formula $\ass{\qassert}{S}{\qassertp}$ can be derived using the proof system for total correctness.

	Recall that in \cite{ying2012floyd}, a notion of bound function is proposed for proving total correctness of purely quantum programs. Let $M\in \ph$ and $\epsilon >0$. A function 
	$$t : \dh \rightarrow \N$$
	is called $(M, \epsilon)$-bound for the loop $\whilestm{\m[\bar{q}]=1}{S}$ where $\m = \{M_0, M_1\}$ if for any $\rho\in \dh$,
	\begin{enumerate}
		\item $t(\sem{S}(\e_{\bar{q}}^1(\rho))) \leq t(\rho)$, 
		\item if $\tr(M\rho) \geq \epsilon$ then  $t(\sem{S}(\e_{\bar{q}}^1(\rho))) < t(\rho).$ 
	\end{enumerate}
	With the bound functions, the proof rule for total correctness of quantum loops in~\cite{ying2012floyd} reads as follows:
	$$\displaystyle\frac{
		\begin{tabular}{l}
		$\ass{M}{S}{\e_{\bar{q}}^0(N)+ \e_{\bar{q}}^1(M)}$\\
		for each $\epsilon > 0$, $t_\epsilon$ is a $(\e_{\bar{q}}^1(M), \epsilon)$-bound function for $\whilestm{\m[\bar{q}]=1}{S}$
		\end{tabular}
	}{\ass{\e_{\bar{q}}^0(N)+ \e_{\bar{q}}^1(M)}{\whilestm{\m[\bar{q}]=1}{S}}{N}}$$
	As our ranking assertions are essentially \emph{linear} functions on $\dh$, they normally have a more compact representation, and hopefully are easier to use in applications than the bound functions in~\cite{ying2012floyd}.

Again, we can prove the soundness and (relative) completeness of the proof system for total correctness.

\begin{theorem}\label{thm:total}
	The proof system for total correctness is both sound and complete with respect to the total correctness of cq-programs.
\end{theorem}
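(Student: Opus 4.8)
The plan is to mirror the structure of the proof of Theorem~\ref{thm:psc}, changing only the two rules that differ, namely (AbortT) and (WhileT). For soundness, since all rules except (WhileT) coincide with the partial-correctness ones (and (AbortT) is trivially valid because $\sem{\abort}(\qstate) = \emptydis_V$ always, so $\Exp(\qstate \models \bot_V) = 0 = \Exp(\sem{\abort}(\qstate)\models \bot_V)$ and the trace-deficit term absorbs everything), it suffices to verify that (WhileT) is sound in the sense of total correctness. So suppose $\models_{{\mathit{tot}}} \ass{b\wedge\qassert}{S}{\qassert}$ and that $\qassert$-ranking assertions $\{\qassert_n : n\geq 0\}$ exist for $\wstm$; write $\while \define \wstm$. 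I want to show $\qassert \lesssim wp.\while.(\neg b\wedge\qassert)$. By Lemma~\ref{lem:lesssimpre} and the structure of the proof system we may assume $qv(S)\subseteq qv(\qassert) = V$. Let $\Phi_0 \define \bot_V$ and $\Phi_{m+1} \define \neg b\wedge(\neg b\wedge\qassert) + b\wedge wp.S.\Phi_m$, so that $wp.\while.(\neg b\wedge\qassert) = \bigvee_m \Phi_m$ by Table~\ref{tbl:wpsemantics}. The key claim is that for every $n\geq 0$,
\[
\qassert - \qassert_n \le \bigvee_{m\geq 0}\Phi_m,
\]
equivalently $\qassert \le (\bigvee_m\Phi_m) + \qassert_n$; taking the infimum over $n$ and using $\bigwedge_n\qassert_n = \emptydis_V$ then gives $\qassert \le \bigvee_m\Phi_m = wp.\while.(\neg b\wedge\qassert)$, as desired. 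I would prove the claim by induction on $n$. For the inductive step, the idea is the standard one: from $b\wedge\qassert \le wp.S.\qassert$ (which follows from $\models_{{\mathit{tot}}}\ass{b\wedge\qassert}{S}{\qassert}$ via Lemma~\ref{lem:wpcorres} and Lemma~\ref{lem:qassetorder}), from the ranking condition $b\wedge wp.S.\qassert_n \le \qassert_{n+1}$, and from monotonicity of $wp.S$ (Lemma~\ref{lem:wpcorres}\ref{cl:wpmono}) together with the linearity facts, one shows that the "mass" of $\qassert$ not yet captured by $\bigvee_m\Phi_m$ after peeling one loop iteration is bounded by $\qassert_{n+1}$; here the $\neg b$ part is caught immediately by $\Phi_1$ and the $b$ part is pushed one step deeper, picking up one more $wp.S$ and hence shrinking from $\qassert_n$ to $\qassert_{n+1}$ by the ranking property. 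Making this "peeling" argument precise — working with differences of cq-assertions and being careful that all the linear-sums involved are well-defined (so that subtraction makes sense in the sense of the $\lesssim$/$\le$ structure) — is the one place that needs genuine care rather than routine bookkeeping, so I expect it to be the main obstacle; the cleanest route is probably to prove, by induction on $n$, the pointwise/expectation inequality $\Exp(\qstate \models \qassert) \le \Exp(\qstate \models \bigvee_m\Phi_m) + \Exp(\qstate\models\qassert_n)$ for all $\qstate\in\qstatesh{V}$, using Lemma~\ref{lem:qasset} to handle the suprema, Lemma~\ref{lem:bpdeg} and Lemma~\ref{lem:whilesem} to unfold one iteration, and the ranking condition in its denotational form $\Exp(\sem{S}(\qstate|_b)\models\qassert_n)\le\Exp(\qstate\models\qassert_{n+1})$.

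For relative completeness, I again follow Theorem~\ref{thm:psc}: by Lemma~\ref{lem:lesssimpre} and (Imp) it suffices to show $\vdash_{{\mathit{tot}}}\ass{wp.S'.\qassert}{S'}{\qassert}$ for every cq-program $S'$ and cq-assertion $\qassert$ with $qv(S')\subseteq qv(\qassert)$, by induction on the structure of $S'$. All cases except the while loop are identical to the partial-correctness argument (for (AbortT) note $wp.\abort.\qassert = \bot_V$ by Table~\ref{tbl:wpsemantics}, so $\ass{\bot_V}{\abort}{\qassert}$ follows from (AbortT) and (Imp)). For $S' \define \while \define \wstm$, set $\qassertp \define wp.\while.\qassert$. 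By the induction hypothesis $\vdash_{{\mathit{tot}}}\ass{wp.S.\qassertp}{S}{\qassertp}$, and since $wp.\while.\qassert = \neg b\wedge\qassert + b\wedge wp.S.\qassertp$ we get $b\wedge\qassertp = b\wedge wp.S.\qassertp \le wp.S.\qassertp$, so $\vdash_{{\mathit{tot}}}\ass{b\wedge\qassertp}{S}{\qassertp}$ by (Imp) — this is exactly the first premise of (WhileT). It remains to exhibit $\qassertp$-ranking assertions for $\while$, and the natural candidate is $\qassertp_n \define wp.\while_n.\qassert$ where $\while_n$ denotes the $n$-fold syntactic approximant... wait, the right choice is rather $\qassertp_n \define \qassertp - wp.(\while)^n.(\neg b\wedge\qassert)$ in terms of the finite unfoldings $(\while)^n$ from Lemma~\ref{lem:iout}\ref{item:loop}, i.e. the "remaining mass" after $n$ iterations. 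One checks: $\qassertp_0 = \qassertp - \bot_V = \qassertp \ge \qassertp$ (clause 1, first half); $\bigwedge_n\qassertp_n = \emptydis_V$ because $\bigvee_n wp.(\while)^n.(\neg b\wedge\qassert) = \qassertp$ by the definition of $wp$ for loops in Table~\ref{tbl:wpsemantics} together with Lemma~\ref{lem:wpwlp} and Lemma~\ref{lem:qasset} (clause 1, second half); and clause 2, $b\wedge wp.S.\qassertp_n \le \qassertp_{n+1}$, follows by unfolding one iteration of $wp.\while$ and using that $wp.S$ is linear (Lemma~\ref{lem:wpcorres}\ref{cl:wplinear}). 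Then (WhileT) yields $\vdash_{{\mathit{tot}}}\ass{\qassertp}{\while}{\neg b\wedge\qassertp}$, and since $\neg b\wedge\qassertp = \neg b\wedge\qassert \le \qassert$, rule (Imp) closes the induction. The delicate point here is verifying that the candidate ranking assertions are genuinely a decreasing sequence of well-defined cq-assertions (the differences must always be $\ge \emptydis_V$, which they are because the $wp.(\while)^n.(\neg b\wedge\qassert)$ form an increasing sequence below $\qassertp$) — but this is direct from the monotone structure established in Lemma~\ref{lem:iout}\ref{item:loop} and Lemma~\ref{lem:wpwlp}. I therefore expect completeness to be essentially mechanical once the ranking-assertion construction is pinned down, with the soundness of (WhileT) being the real content of the theorem.
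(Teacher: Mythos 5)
Your proposal is correct and follows the paper's overall architecture — soundness reduces to validity of (WhileT), proved by induction against the $wp$-approximants of the loop, and completeness reduces to $\vdash_{\mathit{tot}}\ass{wp.S'.\qassert}{S'}{\qassert}$ with the loop case closed by exhibiting explicit ranking assertions — but it differs from the paper in two worthwhile details. For soundness, the paper proves the sharper finite-level inequality $\qassert \le \qassert_n + \qassertp_n$, where $\qassertp_n$ is the $n$-th approximant of $wp.(\wstm).(\neg b\wedge\qassert)$ (with $\qassertp_0=\bot$, $\qassertp_{n+1}=\neg b\wedge\qassert + b\wedge wp.S.\qassertp_n$); keeping both indices finite means the induction only uses linearity and monotonicity of $wp.S$, with the suprema taken once at the end. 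Your version, which keeps $\bigvee_m\Phi_m$ on the right and argues at the level of expectations, also goes through — Lemma~\ref{lem:qasset} and Lemma~\ref{lem:wpwlp} supply exactly the continuity you need — but it is slightly heavier; the per-level inequality is the cleaner bookkeeping you were looking for. For completeness, your ranking assertions $\qassertp_n \define wp.\while.\qassert - wp.(\while)^n.(\neg b\wedge\qassert)$ (residual expectation of the postcondition after $n$ unfoldings) differ from the paper's choice $\qassert_0 \define wp.\while.\top_{qv(\qassert)}$, $\qassert_{n+1}\define b\wedge wp.S.\qassert_n$ (residual termination probability), which the paper verifies via the trace identity $\Exp(\qstate\models\qassert_n)=\tr(\sem{\while}(\qstate))-\tr(\sem{(\while)^n}(\qstate))$ using Lemma~\ref{lem:whilesem}. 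Your construction is equally valid and its verification is arguably slicker: clause (2) follows algebraically from the fixed-point identity $wp.\while.\qassert=\neg b\wedge\qassert+b\wedge wp.S.(wp.\while.\qassert)$ and linearity of $wp.S$ (indeed with equality, $b\wedge wp.S.\qassertp_n=\qassertp_{n+1}$), and the only point you glossed over — that $wp.\while.(\neg b\wedge\qassert)=wp.\while.\qassert$, needed so that the differences are well-defined and $\bigwedge_n\qassertp_n=\emptydis$ — is immediate since the Table~\ref{tbl:wpsemantics} recursions for the two postconditions coincide ($\neg b\wedge(\neg b\wedge\qassert)=\neg b\wedge\qassert$). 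So the paper buys a self-contained quantitative termination argument, while your route buys a shorter, purely fixed-point/linearity verification of the ranking clauses.
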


\begin{proof}
	Soundness:  We need only to show that each rule of the proof system is valid in the sense of total correctness. Take rule (WhileT) as an example. 
	Let $\while \define \whilestm{q}{S}$,
	\begin{align}
		&\models_{{\mathit{tot}}} \ass{b\wedge \qassert}{S}{\qassert},\label{eq:1}
	\end{align}
	and $\{\qassert_n : n\geq 0\}$ be a sequence of $\qassert$-ranking assertions for $\while$. Assume without loss of generality $qv(S)\subseteq qv(\qassert)$. We prove by induction on $n$ that
	$$\qassert \le \qassert_n + \qassertp_{n}$$
	for any $n\geq 0$, where 
	$\qassertp_0 \define \emptydis_{qv(\qassert)}$, and for any $n\geq 0$,
	$\qassertp_{n+1} \define \neg b \wedge \qassert + b\wedge wp.S.\qassertp_n$. 
	The case when $n=0$ is from the assumption that $\qassert \le \qassert_0$. 
	For $n\geq 0$, we calculate
	\begin{align*}
		b\wedge \qassert \le wp.S.\qassert \le wp.S.\qassert_n + wp.S.\qassertp_{n}
	\end{align*}
	where the first inequality follows from Eq.(\ref{eq:1}), and the second one from the induction hypothesis and Lemma~\ref{lem:wpcorres}\ref{cl:wplinear}.
	Thus
	\begin{align*}
		\qassert &= b\wedge \qassert + \neg b\wedge \qassert\\
		& \le  b\wedge \qassert_{n+1} +  b\wedge wp.S.\qassertp_{n} + \neg b\wedge \qassert\\
		&\le \qassert_{n+1} + \qassertp_{n+1},
	\end{align*}
	where the first inequality follows from the definition of ranking assertions, and the second one from that of $\qassertp_{n+1}$.
	Thus 
	$$\qassert \le wp.(\wstm).(\neg b\wedge \qassert)$$
	by noting that $wp.(\wstm).(\neg b\wedge \qassert) = \bigvee_n \qassertp_n$ and
	$\bigwedge_n \qassert_n = \emptydis_\h$, and so 
	$$\models_{{\mathit{tot}}} \ass{\qassert}{\wstm}{\neg b\wedge \qassert}$$ as desired.
	
	Completeness: By the (Imp) rule, it suffices to show that for any $\qassert$ and $S'$ with $qv(S')\subseteq qv(\qassert)$,
	$$\vdash_{\mathit{tot}} \ass{wp.S'.\qassert}{S'}{\qassert}.$$
	Again, we take the case for while loops as an example. Let $\while \define \wstm$ and $\qassertp \define wp.\while.\qassert$.
	By induction, we have 
	$\vdash_{\mathit{tot}} \ass{wp.S.\qassertp}{S}{\qassertp}.$
	Note that
	$$\qassertp= \neg b\wedge \qassert+ b\wedge wp.S.\qassertp.$$
	Thus $b\wedge \qassertp = b\wedge wp.S.\qassertp \le wp.S.\qassertp$, and 
	so $\vdash_{\mathit{tot}} \ass{b\wedge \qassertp}{S}{\qassertp}$ by rule (Imp).

	Let $\qassert_0 = wp.\while.\top_{qv(\qassert)}$ and 
	$\qassert_{n+1}  = b\wedge wp.S.\qassert_n$.
	We are going to show that $\{\qassert_n : n\geq 0\}$ are $\qassert$-ranking assertions for $\while$. First, note that 
	$$\qassert_ 1 = b \wedge wp.S.\qassert_0 \le \neg b\wedge \top_{qv(\qassert)} + b \wedge wp.S.\qassert_0 = \qassert_0.$$
	So $\{\qassert_n : n\geq 0\}$ is decreasing by easy induction, using  Lemma~\ref{lem:wpcorres}\ref{cl:wpmono}. Next, as $\qassert \le \top_{qv(\qassert)}$, we have 
	$\qassertp\le \qassert_0$.  
	
	Finally, we prove that $\bigwedge_n \qassert_n =\bot_{qv(\qassert)}$.
	We show by induction on $n$ that for any $n\geq 0$ and $\qstate\in \qstatesh{qv(\qassert, \while)}$,
	\begin{equation}\label{eq:induc}
		\Exp(\qstate\models \qassert_n) =
		\tr(\sem{\while}(\qstate)) - \tr(\sem{\while^n}(\qstate)).
	\end{equation}
	The case when $n=0$ is direct from Lemmas~\ref{lem:bpdeg} and \ref{lem:wpwlp}.
	We further calculate that
	\begin{align*}
		\Exp(\qstate\models \qassert_{n+1}) 
		&=\Exp(\qstate\models b\wedge wp.S.\qassert_{n})\\	&=\Exp(\qstate|_b \models wp.S.\qassert_{n})\\
		&= \Exp(\sem{S}(\qstate|_b)\models \qassert_{n})\\
		&= \tr(\sem{\while}(\sem{S}(\qstate|_b))) - \tr(\sem{\while^n}(\sem{S}(\qstate|_b)))\\
		&=\tr(\sem{\while}(\qstate)) - \tr(\sem{\while^{n+1}}(\qstate)).
	\end{align*}
	Here the second last equality is from induction hypothesis, and the last one from Lemma~\ref{lem:whilesem}.
	Note that the second term of the r.h.s of Eq.(\ref{eq:induc}) converges to the first one when $n$ goes to infinity. Thus
	$\lim_n \Exp(\qstate\models\qassert_n) = 0$, and so $\bigwedge_n \qassert_n = \emptydis_{qv(\qassert)}$ from the arbitrariness of $\qstate$ and Lemma~\ref{lem:qasset}.
	Now using rule (WhileT), we have
	$\vdash_{\mathit{tot}} \ass{\qassertp}{\while}{\neg b\wedge \qassertp}$
	and the result follows from the fact that $\neg b\wedge \qassertp = \neg b\wedge \qassert \le \qassert$. 	
\end{proof}

To conclude this section, let us point out an alternative statement for the (WhileT) rule.
\begin{lemma} \label{lem:whileT}
	Let $\qassert\in \qassertsh{V}$. The loop $\wstm$ has  $\qassert$-ranking assertions iff there is an increasing sequence $\{\qassertp_n : n\geq 0\}$ of cq-assertions in $\qassertsh{V}$ such that
	\begin{enumerate}
		\item $\qassert \le \top_{V} - \qassertp_0$ and $\bigvee_n \qassertp_n = \top_{V}$;
		\item $\models_{\mathit{par}} \ass{b\wedge \qassertp_{n+1}}{S}{\qassertp_n}$.
	\end{enumerate}			
\end{lemma}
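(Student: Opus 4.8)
The plan is to set up a bijective correspondence between $\qassert$-ranking assertions $\{\qassert_n\}$ and the claimed increasing sequences $\{\qassertp_n\}$ via the duality $\qassert_n \leftrightarrow \qassertp_n \define \top_V - \qassert_n$. Since $\{\qassert_n\}$ is decreasing iff $\{\qassertp_n\}$ is increasing, and $\bigwedge_n \qassert_n = \emptydis_V$ iff $\bigvee_n \qassertp_n = \top_V$ (taking complements commutes with the meet/join in the CPO), condition (1) of the ranking-assertion definition translates directly into condition (1) here: $\qassert \le \qassert_0 = \top_V - \qassertp_0$. So the whole content is to show that the second condition of each definition translates into the other under this substitution. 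I would use the weakest-precondition form of the ranking condition noted right after the definition, namely $b\wedge wp.S.\qassert_n \le \qassert_{n+1}$, rather than the denotational form, since it matches the shape of $\models_{\mathit{par}}$ more cleanly.

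First I would record, from Lemma~\ref{lem:wpcorres}\ref{cl:wpwlp}, the identity $wp.S.\qassert_n + wlp.S.(\top_{qv(\qassert)} - \qassert_n) = \top_{qv(\qassert)}$, i.e. $wp.S.\qassert_n = \top_V - wlp.S.\qassertp_n$ (assuming, as in the proofs of the completeness theorems, that $qv(S)\subseteq V$; the general case is handled by tensoring with identities exactly as in Lemma~\ref{lem:lesssimpre}). Then I would rewrite the ranking condition:
\begin{align*}
b\wedge wp.S.\qassert_n \le \qassert_{n+1}
&\iff b\wedge (\top_V - wlp.S.\qassertp_n) \le \top_V - \qassertp_{n+1}\\
&\iff \qassertp_{n+1} \le \top_V - \left(b\wedge (\top_V - wlp.S.\qassertp_n)\right)\\
&\iff \qassertp_{n+1} \le \neg b\wedge \top_V + b\wedge wlp.S.\qassertp_n.
\end{align*}
The last step uses that for a $\tybool$-expression $b$ one has $\top_V - (b\wedge \Theta) = \neg b \wedge \top_V + b\wedge(\top_V - \Theta)$, which is immediate from the pointwise definition of $\bowtie$ and the fact that $b$ partitions $\cstates$. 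Now $\neg b\wedge \top_V + b\wedge wlp.S.\qassertp_n$ is precisely $wlp.(\mathbf{if}\ b\ \mathbf{then}\ S\ \mathbf{else}\ \sskip\ \mathbf{end}).\qassertp_n$ by Table~\ref{tbl:wpsemantics}; but more directly, $\qassertp_{n+1} \le \neg b\wedge\top_V + b\wedge wlp.S.\qassertp_n$ is equivalent — splitting $\qassertp_{n+1} = b\wedge \qassertp_{n+1} + \neg b\wedge \qassertp_{n+1}$ and using $\neg b\wedge\qassertp_{n+1}\le \neg b\wedge\top_V$ — to $b\wedge \qassertp_{n+1} \le b\wedge wlp.S.\qassertp_n \le wlp.S.\qassertp_n$, which by Lemma~\ref{lem:lesssimpre} (partial-correctness case) is exactly $\models_{\mathit{par}}\ass{b\wedge\qassertp_{n+1}}{S}{\qassertp_n}$. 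Running these equivalences in both directions gives both implications of the lemma.

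The main obstacle I anticipate is bookkeeping around the quantum-variable sets: the ranking-assertion definition quantifies over $\qstate\in \qstatesh{qv(S)\cup V}$ and the $wp$/$wlp$ identities require the assertions and $S$ to live on a common set of quantum variables, whereas $\models_{\mathit{par}}$ as defined makes no such restriction. I would handle this exactly as in the proof of Theorem~\ref{thm:psc} and Lemma~\ref{lem:lesssimpre}: reduce to the case $qv(S)\subseteq V$ by tensoring $S$'s missing variables into the assertions with identity operators, invoking that $\models_{\mathit{par}}\ass{\qassert}{S}{\qassertp}$ is unchanged under $\eqsim$-padding of pre- and post-conditions. Beyond that, the argument is purely a chain of order-equivalences in the pointed $\omega$-CPO $\qassertsh{V}$, using only the complementation identity from Lemma~\ref{lem:wpcorres}\ref{cl:wpwlp} and the Boolean-partition identity for $b\wedge(\cdot)$, so no genuinely new estimate is needed.
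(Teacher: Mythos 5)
Your proposal is correct and follows essentially the same route as the paper's proof: set $\qassertp_n \define \top_V - \qassert_n$, use the $wp$-form of the ranking condition together with the duality $wp.S.\qassert + wlp.S.(\top_V-\qassert)=\top_V$ from Lemma~\ref{lem:wpcorres}\ref{cl:wpwlp}, and convert $b\wedge wp.S.\qassert_n \le \qassert_{n+1}$ into $b\wedge\qassertp_{n+1}\le wlp.S.\qassertp_n$ by a chain of order equivalences, which is exactly the partial-correctness statement. The only difference is cosmetic (you complement the $wp$ first and then split on $b$, the paper moves $b$ to the right-hand side before complementing), so no further comment is needed.
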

\begin{proof}
	Let $\{\qassert_n : n\geq 0\}$ be a sequence of $\qassert$-ranking assertions for $\wstm$. Let $\qassertp_i \define \top_{V} - \qassert_i$, $i\geq 0$. Then clause (1) holds trivially. To prove clause (2), note that
	\begin{align*}
		& b\wedge wp.S.\qassert_n \le \qassert_{n+1} \\
		\mbox{iff}\quad &	wp.S.\qassert_n \le b\wedge \qassert_{n+1} + \neg b \wedge \top_V\\
		\mbox{iff} \quad& b\wedge (\top_{V} - \qassert_{n+1}) \le \top_{V} - wp.S.\qassert_n \\
		\mbox{iff} \quad& b\wedge \qassertp_{n+1} \le wlp.S.\qassertp_n
	\end{align*}
	where the last equivalence is from Lemma~\ref{lem:wpcorres}\ref{cl:wpwlp}.
\end{proof}

With the above lemma, we can restate rule  $\mathrm{(WhileT)}$ as follows:
	\[\mathrm{(WhileT')}\quad \displaystyle\frac{
		\begin{tabular}{l}
		$\ass{b\wedge \qassert}{S}{\qassert}$\\
		$\{\qassert_n : n\geq 0\} \mbox{ increasing}, \qassert \le \top_{V} - \qassert_0, \ \bigvee_n \qassert_n = \top_{V}$\\		$\vdash_{\mathit{par}} \ass{b\wedge \qassert_{n+1}}{S}{\qassert_n}$\\
		\end{tabular}
	}{\ass{\qassert}{\wstm}{\neg b\wedge \qassert}}\]
	Interestingly, proof of partial correctness is also employed in this new rule for total correctness. Note that however, there are infinitely many premises in the rule which might not be convenient for automated reasoning, unless parametrised reasoning is supported somehow.

\section{Auxiliary Rules}\label{sec:aux}

{\renewcommand{\arraystretch}{2.9}
	\begin{table}[t]
		\begin{lrbox}{\tablebox}
			\centering
			\begin{tabular}{l}
			\begin{tabular}{lclc}
				(Top)  &\hspace{-3em} $\ass{\top_{V}}{S}{\top_V}$\hspace{4em}  (Bot)  \hspace{1em} $\ass{\bot_V}{S}{\bot_{V}}$   &
			(Init0)	& $\displaystyle\frac{\bar{q}\cap \qv(\qassert) = \emptyset}{\ass{\qassert}{\bar{q}:=0}{\qassert}}$\\
						(Meas0)	&
			$\displaystyle\frac{\bar{q}\cap \qv(\qassert) = \emptyset, \m = \{M_i : i\in I\}}{\ass{\sum_{i\in I}\qassert[\subs{x}{i}]\otimes M_i^\dag M_i}{x:=\measure\ \m[\bar{q}]}{\qassert}}$ \hspace{3em}
		&
			(Unit0)	&
			$\displaystyle\frac{\bar{q}\cap \qv(\qassert) = \emptyset}{\ass{\qassert}{\bar{q}\apply U}{\qassert}}$
			\end{tabular}	\\	
					\begin{tabular}{lc}
			(Param)& $\left\{\displaystyle\sum_{i_1, \cdots, i_k= 1, \mathit{distinct}}^{|\bar{q}|} \sum_{i=1}^K \bigwedge_{j=1}^k (e_j = i_j)\wedge (e=i) \wedge \mathcal{U}^i_{i_1, \cdots, i_k}(\qassert)\right\}{\bar{q}[e_1, \cdots, e_k]\apply \u(e)}\left\{\qassert\right\}$\\
			& where $\u \define \{U_i : 1\leq i\leq K\}$, $\mathcal{U}^i_{i_1, \cdots, i_k}(\qassert) \define
			U_i^\dag \qassert {U_i}
			$,
			and $U_i$ is applied on ${q}_{i_1}, \cdots, {q}_{i_k}$.
		\end{tabular}\\
		\begin{tabular}{lc}
							(SupOper)& $\displaystyle\frac{
				\ass{\qassert}{S}{\qassertp}, W\cap qv(\qassert) \subseteq V \subseteq qv(\qassert), W\cap qv(\qassertp) \subseteq V \subseteq qv(\qassertp),  (V\cup W) \cap qv(S) = \emptyset}{\ass{\f_{V\ra W}(\qassert)}{S}{\f_{V\ra W}(\qassertp)}}$\\
			& where $\f_{V\ra W}$ is a completely positive and sub-unital super-operator from $\l(\h_V)$ to $\l(\h_W)$. \\
						(SupPos)	& $\displaystyle\frac{
				\ass{\left\<\cassert, \displaystyle\frac{1}{\sqrt{d}}\sum_{i=1}^d|\phi_i\>_V |i\>_{W}\right\>}{S}{\left\<\cassert', \displaystyle\frac{1}{\sqrt{d}}\sum_{i=1}^d|\psi_i\>_V |i\>_W\right\>}, qv(S)\subseteq V
			}{\ass{\left\<\cassert, \displaystyle\sum_{i=1}^d\alpha_i |\phi_i\>_V\right\>}{S}{\left\<\cassert',  \displaystyle\sum_{i=1}^d\alpha_i |\psi_i\>_V\right\>}}$\\
			& where $|i\>$'s,  $|\phi_i\>$'s, and $|\psi_i\>$'s are all sets of  orthonormal states, $\alpha_i \in \C$, and $\sum_{i=1}^d |\alpha_i|^2 =1$.\\
				 (L-Sum) & $\displaystyle\frac{
				\ass{\sum_{i=1}^d \qassert_i
					\otimes |i\>_W\<i|}{S}{\sum_{i=1}^d \qassertp_i
					\otimes |i\>_W\<i|},\ W \cap qv(S, \qassert_i, \qassertp_i) = \emptyset}
			{\ass{\sum_{i=1}^d \lambda_i \qassert_i}{S}{\sum_{i=1}^d \lambda_i \qassertp_i}}$\\
			& where $|i\>$'s are orthonormal states in $\h_W$, 
			$\lambda_i\geq 0$, and $\sum_{i=1}^d \lambda_i \leq 1$.
			\end{tabular}\\
		\begin{tabular}{lclc}
					(Tens)& $\displaystyle\frac{
		\ass{\qassert}{S}{\qassertp},\ \ W \cap qv(S, \qassert, \qassertp) = \emptyset}{\ass{M_W\otimes \qassert}{S}{M_W\otimes \qassertp}}$ &
					(Trace)& $\displaystyle\frac{
	\ass{\qassert}{S}{\qassertp}, V \subseteq qv(\qassert)\cap qv(\qassertp), V\cap qv(S)= \emptyset}{\ass{\frac{1}{\dim(\h_V)}\tr_V( \qassert)}{S}{\frac{1}{\dim(\h_V)}\tr_V(\qassertp)}}$ \\
				(Exist) & $\displaystyle\frac{
			\ass{\<\cassert, M\>}{S}{\qassertp},\  x\not\in var(S) \cup \mathit{free}(\qassertp)}
		{\ass{\<\exists x.\cassert, M\>}{S}{\qassertp}}$ &
		 (Inv) & $\displaystyle\frac{
			\ass{\qassert}{S}{\qassertp}, \mathit{free}(\cassert)\cap \mathit{change}(S) = \emptyset}
		{\ass{\cassert\wedge \qassert}{S}{\cassert\wedge \qassertp}}$\\
		 (Disj) & $\displaystyle\frac{
			\ass{\<\cassert, M\>}{S}{\qassertp},\ \ass{\<\cassert', M\>}{S}{\qassertp}}
		{\ass{\<\cassert\vee \cassert', M\>}{S}{\qassertp}}$ &
				(Sum) & $\displaystyle\frac{
					\ass{\<\cassert, M\>}{S}{\qassertp},\ \ass{\<\cassert', N\>}{S}{\qassertp},\ \cassert' \rightarrow \neg \cassert}{\ass{\<\cassert, M\> + \<\cassert', N\>}{S}{\qassertp}}$\\
				 (Linear)& $\displaystyle\frac{
					\ass{\qassert_i}{S}{\qassertp_i},\ \lambda_i\geq 0}
				{\ass{\sum_i \lambda_i \qassert_i}{S}{\sum_i \lambda_i \qassertp_i}}$ & (ProbComp) & $\displaystyle\frac{
					\ass{\cassert'}{S_1}{\left\<\cassert, |\psi\>_{\bar{q}}\<\psi|\right\>}, \ass{\left\<\cassert, M_{\bar{q}}\right\>}{S_2}{\qassertp}
				}{\ass{\left\<\psi|M|\psi\right\> \cdot \cassert'}{S_1; S_2}{\qassertp}}$			 \\
				\end{tabular}\\
			\begin{tabular}{lc}
			
				(C-WhileT) & $\displaystyle\frac{
				\ass{b\wedge \qassert}{S}{\qassert},\
				\ass{b\wedge\cassert \wedge t=z}{S}{t<z},\ \cassert \rightarrow t\geq 0
			}{\ass{\qassert}{\wstm}{\neg b\wedge \qassert}}$\\
		& where $\mathit{type}(z) = \mathit{type}(t) = \tyint$, $z\not\in var(\cassert, b, t, S)$, $\qassert = \bigoplus_{i\in I} \<\cassert_i, M_i\>$ and $\cassert \define \bigvee_{i\in I} \cassert_i$.
			\end{tabular}		
			\end{tabular}
		\end{lrbox}
		\resizebox{\textwidth}{!}{\usebox{\tablebox}}
		\caption{Auxiliary rules.}
		\label{tbl:auxrules}
	\end{table}
}
We have provided sound and relatively complete proof systems for both partial and total correctness of cq-programs. Thus in principle, these proof rules are sufficient for proving desired properties as long as they can be described faithfully with Hoare triple formulas.
However, in practice, using these rules directly might be complicated. To simplify reasoning, in this section we introduce some auxiliary proof rules which are listed in Table~\ref{tbl:auxrules}. For the sake of convenience, we write $\<\cassert, |\psi\>\>$ for $\<\cassert, |\psi\>\<\psi|\>$, and $\cassert$ for $\cassert \wedge \top$.

The rules (Top) and (Bot) deals with special cq-assertions. Rules (Init0), (Meas0), and (Unit0) simplify the corresponding ones in Table~\ref{tbl:psystem} when the evolved quantum variables do not appear in the postcondition. Extended commands with syntactic sugars are also considered in these rules, as well as in the rule (Param). 

Rule (SupOper) essentially says that any valid operation applied on the quantum variables not involved in $S$ does not affect the correctness of $S$. 
Note that a weaker version of this rule, where $V$ and $W$ are taken equal, was presented in~\cite{ying2019toward}. However, the current version is much more expressive, evidenced by the fact that  (SupPos), (L-Sum), (Tens), and (Trace) are all its special cases. 

Rule (SupPos) deals with superposition of quantum states, and it is useful in proving the correctness of quantum circuits which consist of solely unitary operators. As unitary operators are linear, a natural question is: can we verify such circuits by only checking each pure state from an orthonormal basis? Specifically, let $V=qv(S)$, and 
$\{|\phi_i\>: 1\leq i\leq d\}$ and $\{|\psi_i\>: 1\leq i\leq d\}$ are both orthonormal bases of $\h_V$. If $\vdash\ass{\left\<\cassert, |\phi_i\>_V\right\>}{S}{\left\<\cassert',  |\psi_i\>_V\right\>}$ for all 
$i$, can we deduce
$$\vdash\left\{\left\<\cassert, \sum_{i=1}^d\alpha_i |\phi_i\>_V\right\>\right\}{S}\left\{\left\<\cassert', \sum_{i=1}^d\alpha_i |\psi_i\>_V\right\>\right\}$$ for any superposed states $\sum_{i=1}^d\alpha_i |\phi_i\>$ and $\sum_{i=1}^d\alpha_i |\psi_i\>$?
This is, however, not correct. For example, let $\type(q) = \tyqubit$. Then 
\[
\ass{\left\<\true, |0\>_q\right\>}{q \apply Z}{\left\<\true,  |0\>_q\right\>}\quad  \mbox{ and } \quad\ass{\left\<\true, |1\>_q\right\>}{q \apply Z}{\left\<\true,  |1\>_q\right\>}\] since $Z|1\> = -|1\>$ and $(-|1\>)(-\<1|) = |1\>\<1|$. However, $\ass{\left\<\true, |+\>_q\right\>}{q \apply Z}{\left\<\true,  |+\>_q\right\>}$ is certainly not true. 
The reason is that observables (thus cq-assertions) cannot distinguish quantum states like $|1\>$ and $-|1\>$ which differ only in the global phases. 
To overcome this difficulty, in rule (SupPos) we combine all the states $|\phi_i\>_{V}$ into a single (entangled) one $\frac{1}{\sqrt{d}}\sum_{i=1}^d |\phi_i\>_V|i\>_W$ in a larger Hilbert space $\h_V \otimes  \h_W$ (Intuitively, we use the orthonormal states $|i\>$ in $\h_W$ to index $|\phi_i\>_V$). In this way, the global phases caused by applying $S$ on $|\phi_i\>_V$'s become local and detectable.

Rules (Exist) and (Inv) are merely classical ones where the logic operations are performed on the classical part of the cq-assertions. The three rules (Disj), (Sum), and (Linear) all extend the rule 
$$\displaystyle\frac{
	\ass{\cassert_1}{S}{\cassert_1'}, \ \ass{\cassert_2}{S}{\cassert_2'}}
{\ass{\cassert_1\vee \cassert_2}{S}{\cassert_1'\vee \cassert_2'}}$$
in classical Hoare logic dealing with disjunction of assertions. In the first two rules, the disjunction is applied only on the classical part: rule (Disj) allows disjunction of any classical assertions $p$ and $p'$, but their quantum part must be the same; rule (Sum) allows different quantum parts, but the  classical assertions must be mutually exclusive. For the general case, a weighted sum (for both the pre- and the postconditions) is used in (Linear). 

Rule (ProbComp) reasons about sequential composition of two programs $S_1$ and $S_2$. Note that rule (Seq) in Table~\ref{tbl:psystem} assumes the postcondition of $S_1$ is the same as, or stronger than, when (Imp) is employed, the precondition of $S_2$. In contrast, rule (ProbComp) can handle the case where such an assumption does not hold. As can be seen from the case studies, this rule is very useful in calculating the success probability of quantum algorithms. 

Finally, we present rule (C-WhileT) for the special case when a classical ranking function can be found to guarantee the (finite) termination of cq-programs. As shown in the case studies in Sec.~\ref{sec:case}, this rule is useful in simplifying the analysis of many practical quantum algorithms.

\begin{theorem} \label{thm:aux}
	\begin{enumerate}
		\item 	All the auxiliary rules presented in Table~\ref{tbl:auxrules}, except (Top), are sound with respect to total correctness.
		\item 	If we require $\sum_{i} \lambda_i \leq 1$ in \textrm{(Linear)}, then all the auxiliary rules presented in Table~\ref{tbl:auxrules}, except (ProbComp), (SupPos) and (C-WhileT), are sound with respect to partial correctness. 
	\end{enumerate}
\end{theorem}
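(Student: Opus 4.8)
The plan is to verify soundness of each rule in Table~\ref{tbl:auxrules} separately, in almost all cases reducing to the weakest (liberal) precondition characterisation of correctness (Lemma~\ref{lem:lesssimpre}) together with the algebraic facts about $wp$, $wlp$ and $\Exp$ collected in Lemmas~\ref{lem:bpdeg}, \ref{lem:qassetorder}, \ref{lem:wpwlp} and~\ref{lem:wpcorres}. The only rule not sound for total correctness is (Top), which fails precisely when $S$ may diverge, so the strategy is to establish total-correctness soundness for everything else and then deduce part~(2). A large block of rules is essentially routine: (Bot) is immediate from Lemma~\ref{lem:corf}; (Init0), (Meas0), (Unit0) and (Param) follow by directly computing the relevant weakest precondition of the (possibly sugared) command via Lemmas~\ref{lem:wpsugar} and~\ref{lem:wpwlp} and checking that the stated precondition coincides with it --- e.g. $wp.(\bar q\apply U).(\qassert\otimes I_{\bar q})=U_{\bar q}^\dagger(\qassert\otimes I_{\bar q})U_{\bar q}=\qassert\otimes I_{\bar q}$ when $\bar q\cap qv(\qassert)=\emptyset$. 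Rules (Exist) and (Inv) are the classical quantifier- and frame-rules; I would prove them on simple cq-states $\langle\cstate,\rho\rangle$ using the standard coincidence lemma that $S$ with $x\notin var(S)$ (resp.\ $\mathit{free}(\cassert)\cap\mathit{change}(S)=\emptyset$) commutes with updating $x$ (resp.\ preserves the truth of $\cassert$ along every branch), together with the observation that $\Exp(\langle\cstate,\rho\rangle\models\cassert\wedge\qassert)=0$ whenever $\cstate\not\models\cassert$. Finally (Disj), (Sum) and (Linear) follow pointwise from linearity of $wp.S$ (Lemma~\ref{lem:wpcorres}\ref{cl:wplinear}): in (Disj) and (Sum) at each classical state at most one summand is active, so the stated precondition is dominated by $wp.S.\qassertp$ through the appropriate premise, while (Linear) is linearity of $wp.S$ applied term by term.

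The structural rule (SupOper) is the workhorse and I would prove it first, directly from Lemma~\ref{lem:wpcorres}\ref{cl:superoper}: since $\f_{V\to W}$ is positive hence monotone, applying it to the hypothesis $\qassert\lesssim wp.S.(\qassertp\otimes I)$ and using $\f_{V\to W}(wp.S.(\cdot))=wp.S.(\f_{V\to W}(\cdot))$ yields $\f_{V\to W}(\qassert)\lesssim wp.S.(\f_{V\to W}(\qassertp)\otimes I)$, which is the conclusion; for partial correctness one uses instead the inequality $\f_{V\to W}(wlp.S.(\cdot))\le wlp.S.(\f_{V\to W}(\cdot))$ from the same lemma, which points in the needed direction. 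Then (Tens), (Trace), (L-Sum) and (SupPos) are obtained by instantiating $\f$: tensoring with $M_W$; the unital map $\tfrac1{\dim\h_V}\tr_V$; the completely positive sub-unital map $\rho_W\mapsto\sum_i\lambda_i\langle i|\rho_W|i\rangle$; and the completely positive unital map with Kraus operator $I_V\otimes\langle v|$ for $|v\rangle=\sum_i\bar\alpha_i|i\rangle_W$, respectively. For (SupPos) this last $\f$ introduces a harmless scalar factor $1/d$ in both the pre- and postcondition, which cancels under the L\"owner comparison precisely because $wp.S$ is linear (Lemma~\ref{lem:wpcorres}\ref{cl:wplinear}); the same cancellation fails for the merely affine-linear $wlp.S$ (one only gets $wlp.S.(\tfrac1dA)=\tfrac1d wlp.S.A+(1-\tfrac1d)wlp.S.\bot\ge\tfrac1d wlp.S.A$), which is exactly why (SupPos) must be excluded from part~(2). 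For the partial-correctness version of (Linear) the added constraint $\sum_i\lambda_i\le1$ lets one write $\sum_i\lambda_i\qassertp_i$ as a genuine convex combination involving $\bot$ and conclude via affine-linearity of $wlp.S$ (Lemma~\ref{lem:wpcorres}\ref{cl:wlplinear}) and $wlp.S.\bot\ge\bot$.

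The two genuinely non-routine rules are (ProbComp) and (C-WhileT). For (ProbComp) the crucial observation --- and what I expect to be the main obstacle --- is a \emph{saturation} phenomenon: the strong total-correctness triple $\{\cassert'\}S_1\{\langle\cassert,|\psi\rangle_{\bar q}\langle\psi|\rangle\}$ forces, for every $\cstate\models\cassert'$ and unit $\rho$, that $S_1$ terminates from $\langle\cstate,\rho\rangle$ with probability one and that the quantum part of each terminating branch is a nonnegative multiple of $|\psi\rangle_{\bar q}\langle\psi|$, because $\tr(|\psi\rangle\langle\psi|\,\rho_j)\le\tr(\rho_j)$ and $\sum_j\tr(\rho_j)\le\tr(\rho)$, while the hypothesis pins $\sum_j\tr(|\psi\rangle\langle\psi|\,\rho_j)\ge\tr(\rho)$, forcing all these inequalities to become equalities. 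Feeding this into the second premise $\{\langle\cassert,M_{\bar q}\rangle\}S_2\{\qassertp\}$ gives $\Exp(\sem{S_1;S_2}(\cstate,\rho)\models\qassertp)\ge\langle\psi|M|\psi\rangle=\Exp(\langle\cstate,\rho\rangle\models\langle\psi|M|\psi\rangle\cdot\cassert')$, and the general case follows by Lemma~\ref{lem:formulasimple}. Since this uses the probability-one termination guarantee essentially, (ProbComp) is unsound for partial correctness. For (C-WhileT) I would reduce to the rule (WhileT) by building $\qassert$-ranking assertions from the classical variant $t$, taking $\qassert_n$ to be $\qassert$ further restricted to the classical states with $t\ge n$; conditions~(1) are immediate because $\cassert=\bigvee_i\cassert_i\to t\ge0$ and $t$ is integer-valued, and condition~(2) is obtained by instantiating the fresh variable $z$ in the premise $\{b\wedge\cassert\wedge t=z\}S\{t<z\}$ (legitimate since $z\notin var(S)$) together with the invariant premise $\{b\wedge\qassert\}S\{\qassert\}$, which controls how the $\qassert$-mass migrates among classical states; the book-keeping of classical-state supports is the delicate point here, and this rule too relies on the termination guarantee and so is total-correctness specific. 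Finally, part~(2) follows by noting that $\models_{\mathit{tot}}$ implies $\models_{\mathit{par}}$ (Lemma~\ref{lem:corf}) and that each argument above goes through verbatim with $wlp$ replacing $wp$ except for (ProbComp), (SupPos) and (C-WhileT), and for (Linear) provided $\sum_i\lambda_i\le1$.
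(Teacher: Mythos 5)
Most of your proposal follows the paper's own route and is fine: (Bot) from Lemma~\ref{lem:corf}; (Init0), (Meas0), (Unit0), (Param) by computing the relevant preconditions; (SupOper) as the workhorse via Lemma~\ref{lem:lesssimpre} and Lemma~\ref{lem:wpcorres}\ref{cl:superoper}, with (Tens), (Trace), (L-Sum) and (SupPos) obtained by instantiating $\f$ exactly as the paper does (the paper uses the Kraus operator $\<\psi^*|$ with $|\psi^*\>=\sum_i\alpha_i^*|i\>_W$ and then rescales by $d$ via (Linear)); your saturation argument for (ProbComp) is precisely the paper's proof; and the routine rules and part~(2) are handled the same way.

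The genuine gap is in (C-WhileT). Your reduction to (WhileT) via the ranking assertions $\qassert_n\define (t\ge n)\wedge\qassert$ does not work, because condition (2) of the definition of $\qassert$-ranking assertions, namely $b\wedge wp.S.\qassert_n\le\qassert_{n+1}$, is an \emph{upper} bound on $wp.S.\qassert_n$, whereas the premise $\ass{b\wedge\qassert}{S}{\qassert}$ only gives the \emph{lower} bound $b\wedge\qassert\le wp.S.\qassert$: the $\qassert$-expectation may strictly increase across one iteration and migrate to classical states whose $t$-value is still $\ge n$. Concretely, take $\qassert=\<\cassert_1,\tfrac12 I_q\>\oplus\<\cassert_2,I_q\>$ and let $S$ map a state $\cstate\models b\wedge\cassert_1$ with $t(\cstate)=m$ deterministically (preserving $\rho$) to a state in $\cassert_2$ with $t=m-1$; both premises of (C-WhileT) hold for this step, yet $\Exp(\sem{S}(\cstate,\rho)\models\qassert_{m-1})=1>\tfrac12=\Exp(\<\cstate,\rho\>\models\qassert_m)$, so condition (2) fails at $n=m-1$. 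The variant $\qassert_n\define\<\cassert\wedge t\ge n, I\>$ fails too, since condition (2) quantifies over \emph{all} cq-states, including those supported on classical states outside $\cassert$, where the premises impose no constraint on $S$ (it may move such mass into $\cassert$ with arbitrarily large $t$), so $b\wedge wp.S.\qassert_n\not\le\qassert_{n+1}$ there. What does work is a direct argument in the spirit of the paper's (which argues termination within $\cstate(t)$ iterations and then concludes), for instance a strong induction on the value $m=t(\cstate)$ showing $\qassert(\cstate)\le\qassertp_{m+1}(\cstate)$ for every $\cstate\models\cassert$, where $\qassertp_0\define\bot$ and $\qassertp_{n+1}\define\neg b\wedge\qassert+b\wedge wp.S.\qassertp_n$ are the approximants of $wp.(\wstm).(\neg b\wedge\qassert)$: in the step one chains the first premise with the pointwise bounds coming from the induction hypothesis (for post-states in $\cassert$, which have $t<m$ by the second premise with $z$ instantiated to $m$) and from $\qassert(\cstate')=\z$ (for post-states outside $\cassert$). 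Your construction from the classical variant should be replaced by an argument of this shape; as written, that part of the proof does not go through.
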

\begin{proof}
	The rules (Top) and (Bot) are from Lemma~\ref{lem:corf}.
	We note from (Meas) that whenever $\bar{q}\cap qv(\qassert) = \emptyset$,
	\[
	 \left\{\sum_{i\in I}\qassert[\subs{x}{i}]\otimes M_i^\dag M_i \right\}{x:=\measure\ \m[\bar{q}]}\{\qassert\otimes I_{\bar{q}}\}.
	\]
	Then (Meas0) follows by (Imp). The proofs for 
	(Init0) and (Unit0) are similar. (Param) follows from Lemma~\ref{lem:wpsugar}(4).
	
	(SupOper): From $\models_{\mathit{tot}} \ass{\qassert}{S}{\qassertp}$, 
	we have $\qassert \lesssim wp.S.\qassertp$ by Lemma~\ref{lem:lesssimpre}.  Then $$\models_{\mathit{tot}} \ass{\f_{V\ra W}(\qassert)}{S}{\f_{V\ra W}(\qassertp)}$$ from Lemma~\ref{lem:wpcorres}\ref{cl:superoper}. The case for partial correctness is similar. For (SupPos), we first let $\f$ be defined as
	\[
	\f_{W\ra \emptyset}(\qassert') = \<\psi^*| \qassert' |\psi^*\>
	\]
	for any $\qassert' \in \qassertsh{W}$, where 
	$|\psi^*\> =\sum_{i=1}^d \alpha_i^*|i\>_W$.
	Then we have from  (SupOper) that
	\[\models_{\mathit{tot}} \ass{\left\<\cassert, \displaystyle\frac{1}{\sqrt{d}}\sum_{i=1}^d|\phi_i\>_V |i\>_{W}\right\>}{S}{\left\<\cassert', \displaystyle\frac{1}{\sqrt{d}}\sum_{i=1}^d|\psi_i\>_V |i\>_W\right\>}\] implies
	\[
	\models_{\mathit{tot}} \ass{\left\<\cassert, \displaystyle\frac{1}{\sqrt{d}}\sum_{i=1}^d\alpha_i |\phi_i\>_V\right\>}{S}{\left\<\cassert',  \displaystyle\frac{1}{\sqrt{d}}\sum_{i=1}^d\alpha_i |\psi_i\>_V\right\>}.
	\]
	The desired result follows from (Linear) by multiplying both pre- and post-conditions with $d$. Similarly, (L-Sum) follows from (SupOper) by taking 
	$
	\f_{W\ra \emptyset}(\qassert') = \sum_{i=1}^d \lambda_i \<i| \qassert' |i\>
	$ for any $\qassert' \in \qassertsh{W}$.
	
	(Tens) follows from (SupOper) by taking $V= \emptyset$ and 
	$\f_{V\ra W}(1) = M$. Conversely, in (Trace) we take $W= \emptyset$ and 
	$$\f_{V\ra W}(\qassert) = \frac{1}{\dim(\h_V)}\sum_{i\in I} \<i|_V \qassert |i\>_V$$ where $\{|i\> : i\in I\}$ is an orthonormal basis of $\h_V$.

	The rules (Exist), (Inv), (Disj), (Sum), and (Linear) are all easy from definition. Note that to prove (Linear) for partial correctness, we have to require $\sum_{i} \lambda_i \leq 1$.
	
	(ProbComp): For any $\cstate$ and $\rho$ with $\cstate \models \cassert'$ and $\tr(\rho) =1$, let $\qstate' \define \sem{S_1}(\cstate, \rho)$ and $\qstate'' \define \sem{S_2}(\qstate')$. We first have from $\models_{\mathit{tot}}\ass{\cassert'}{S_1}{\<p, |\psi\>_{\bar{q}}\<\psi|\>}$ that 
	\[
	1=\Exp(\<\cstate, \rho\> \models \cassert') \leq \sum_{\cstate'\in \supp{\qstate'}, \cstate'\models \cassert} \<\psi|\qstate'(\cstate')|\psi\> \leq \tr(\qstate') \leq 1.
	\]
	Thus for any $\cstate'\in \supp{\qstate'}$,
	$\cstate'\models p$ and $\qstate'(\cstate') = c_{\cstate'}|\psi\>\<\psi|$ for some $c_{\cstate'} \geq 0$ with $\sum_{\cstate'\in \supp{\qstate'}} c_{\cstate'} = 1$. Furthermore, by
	$\models_{\mathit{tot}}\ass{\<\cassert, M_{\bar{q}}\>}{S_2}{\qassertp}$, we have  
	$$\sum_{\cstate'\in \supp{\qstate'}} c_{\cstate'}\<\psi| M|\psi\>=\Exp(\qstate'\models \<\cassert, M_{\bar{q}}\>) \leq \Exp(\qstate''\models \qassertp).$$
	The result then follows from the observation that
	\[
	\Exp(\<\cstate, \rho\> \models \<\psi| M|\psi\>\cdot \cassert')= \<\psi| M|\psi\>.
	\]
	
	(C-WhileT): first note that $\models_{\mathit{tot}}\ass{b\wedge\cassert \wedge t=z}{S}{t<z}$ implies for any $\cstate \models b\wedge \cassert\wedge t=z$,
	and any $\cstate'$ in the support of $\sem{S}(\cstate, \rho)$, we have $\cstate'\models t<z$. Then an argument similar to that for classical programs leads to the conclusion that all computations from $\<\wstm, \cstate, \rho\>$ terminates within $\cstate(t)$ steps, provided that $\cstate \models \cassert$. 
\end{proof}

\section{Case studies}\label{sec:case}

To illustrate the effectiveness of the proof systems proposed in the previous sections, we
employ them to verify Grover's search algorithm presented in Example~\ref{ex:grover} and Shor's factorisation algorithm with its subroutines.

\subsection{Grover's search algorithm}
	We have proved in Examples~\ref{ex:grover} and~\ref{ex:groverform}, by employing the denotational semantics and the definition of correctness formulas respectively, that Grover's algorithm succeeds in finding a desired solution with probability $p_{\mathit{succ}}$ shown in  Eq.(\ref{eq:psucc}). We now re-prove this result using the proof rules for total correctness. As stated in Example~\ref{ex:groverform}, the goal is to show
	\begin{equation}\label{eq:grospe}
		\vdash_{\mathit{tot}} \ass{p_{\mathit{succ}}}{\mathit{Grover}}{y\in Sol}.
	\end{equation}
	
	Let $b\define x<K$ and $\qassert  \define	\sum_{k=0}^K \<x=k,  \Psi_{K-k}\>$, where $\Psi_k = |\psi_k\>\<\psi_k|$ and
	$$|\psi_k\> = \cos\left(\frac \pi 2 - k\theta\right) |\alpha\> + \sin\left(\frac \pi 2 - k\theta\right) |\beta\>.$$
	Note that $|\psi_0\> = |\beta\>$ and $G|\psi_k\> = |\psi_{k-1}\>$. Intuitively, $\qassert$ records the quantum states at each iteration. We show that it serves as an invariant of the while loop in Grover's algorithm.
  Observe from (Unit) and (Assn) that
	\begin{align*}
	\vdash_{\mathit{tot}} \left\{\sum_{k=0}^{K-1} \<x=k,  \Psi_{K-k}\>\right\}  \bar{q} \apply G;\ x:= x+1; \left\{\sum_{k=0}^{K-1} \<x=k+1,  \Psi_{K-k-1}\>\right\} 
	\end{align*} 
	Together with the fact 
	\[
	\sum_{k=0}^{K-1} \left\<x=k+1,  \Psi_{K-k-1}\right\> = \sum_{k=1}^{K} \left\<x=k,  \Psi_{K-k}\right\> \le  \qassert,
	\]
	we deduce from rule (Imp) that
	$\vdash_{\mathit{tot}} \ass{b\wedge \qassert}{ \bar{q} \apply G; x:= x+1;}{\qassert}.$
	 Let $z\not \in \{x,y\}$, $t \define K -x$, and $\cassert \define (0\leq x\leq K)$. Then $\cassert  \rightarrow t\geq 0$, and $t$ serves as a classical ranking function.  
	Thus by rule (C-WhileT),
	\begin{equation}\label{eq:grwhile}
	\vdash_{\mathit{tot}} \ass{\qassert}{ \whilestm{b}{\bar{q} \apply G; x:= x+1;}}{\neg b\wedge \qassert}.
	\end{equation}
	Furthermore, we have
	\begin{align*}
	&\left\{\left|\<+|^{\otimes n}|\psi_K\>\right|^2 \right\}\\ 
	&\bar{q} := 0; \ \bar{q} \apply H^{\otimes n};\\
	&\left\{\left\<\true, \Psi_K\right\>\right\}& \mathit{(Init, Unit)}\\ 
	&x := 0;\\
	& \left\{\sum_{k=0}^K \left\<x=k,  \Psi_{K-k}\right\>\right\}& \mathit{(Assn, Imp)}\\
	& \whilestm{b}{\bar{q} \apply G; x:= x+1;}\\
	&\left\{\left\<x=K,  \Psi_{0}\right\>\right\} &Eq.(\ref{eq:grwhile})\\
	&\left\{\left\<\true,  \sum_{i\in Sol} |i\>\<i|\right\>\right\} &\mathit{(Imp)}\\
	&y:= \measure\ \bar{q}\\
	&\{y\in Sol\} &\mathit{(Meas0)}
	\end{align*} 
	Finally, it is easy to show that $|\<+|^{\otimes n}|\psi_K\>|^2 = p_{\mathit{succ}}$, from which Eq.(\ref{eq:grospe}) follows.

\subsection{Quantum Fourier Transform}
In the rest of the paper, all quantum variables are assumed to have $\tyqubit$ type.
Recall that the $n$-qubit quantum Fourier transform (QFT) is a unitary mapping such that for any integer $j$, $0\leq j\leq 2^n-1$, 
$$|j\> \rightarrow |\psi_j\> \define \frac{1}{\sqrt{2^n}} \sum_{k=0}^{2^n-1} e^{2\pi ijk/2^n} |k\> = \bigotimes_{k=n}^1  |+_{0.j_k\cdots j_n}\>$$
where $j_1\ldots j_n$ is the binary representation of $j$, and $|+_{0.j_k\cdots j_n}\> \define (|0\> + e^{2\pi i0.j_k\cdots j_n}|1\>/\sqrt{2}$. In particular, $|+_0\> = |+\>$. 
QFT serves as an important part for Shor's factorisation and many other quantum algorithms.

The QFT algorithm for $n$ qubits can be described in our cq-language (with syntactic sugars) as follows:
\begin{align*}
	\mathit{QFT}(n)\define&\\
	&x := 1;\\
	&\while\ x \leq n\ \ddo\\
	&\quad \bar{q}[x] \apply H;\  y := x + 1;\\
	&\quad \while\ y\leq n\ \ddo\\
	&\qquad \bar{q}[y,x] \apply \mathit{CR}(y-x+1);\ y := y+1;\\
	&\quad \pend\\
	&\quad x := x + 1;\\
	&\pend\\
	&\bar{q} \apply \mathit{SWAP}_n
\end{align*}
where $CR\define \{CR_k : 1\leq k\leq n\}$ and for each $k$, $CR_k$ is the controlled-$R_k$ operator with $$R_k = |0\>\<0| + e^{2\pi i/2^k} |1\>\<1|,$$
and $\mathit{SWAP}_n$ reverses the order of a list of $n$ qubits; that is, $\mathit{SWAP}_n|i_1, \cdots, i_n\>_{\bar{q}} = |i_n,\cdots, i_1\>_{\bar{q}}$ for all $|i_j\>\in \h_{q_j}$.
The correctness of $	\mathit{QFT}(n)$ is stated as follows: for any $\alpha_j\in \C$, $\sum_{j} |\alpha_j|^2 =1$,
\[
\vdash_{\mathit{tot}} \left\{\left\<\true, \sum_j \alpha_j |j\>_{\bar{q}}\right\>\right\} \mathit{QFT}(n) \left\{\left\<\true, \sum_j \alpha_j|\psi_j\>_{\bar{q}}\right\>\right\}.
\]	
With the help of rule (SupPos), it suffices to prove
\[
\vdash_{\mathit{tot}} \left\{\left\<\true, |\alpha\>_{\bar{q}, \bar{q}'}\right\>\right\} \mathit{QFT(n)} \left\{\left\<\true, |\beta\>_{\bar{q}, \bar{q}'}\right\>\right\}
\]
where $|\bar{q}'| = |\bar{q}|$, $|\alpha\> \define \frac{1}{\sqrt{2^n}}\sum_{j=0}^{2^n-1} |j\>|j\>$ is a maximally entangled state in $\h_{\bar{q}} \otimes \h_{\bar{q}'} $, and $|\beta\> \define \frac{1}{\sqrt{2^n}}\sum_{j=0}^{2^n-1} |\psi_j\>|j\>$.

The proof is rather involved. Due to the limit of space, we sketch the main ideas instead.

\begin{enumerate}
	\item Let $\while'$ be the inner loop. We show that 
	$$\qassertp \define \sum_{\ell = 1}^{n} \sum_{m=\ell+1}^{n+1} \left\<x=\ell \wedge y=m,
	\frac{1}{\sqrt{2^n}} \sum_{j=0}^{2^n-1}	\left[\bigotimes_{k=1}^{\ell-1} |+_{0.j_k\cdots j_n}\> \otimes |+_{0.j_l\cdots j_{m-1}}\> \bigotimes_{k=\ell + 1}^n |j_k\>\right]_{\bar{q}} |j\>_{\bar{q}'}\right\>$$ serves as an invariant for $\while'$.
	Furthermore, let $p \define (1\leq x\leq n) \wedge (x+1\leq y\leq n+1)$. Then $t\define n+1-y$ serves as a classical ranking function for $\while'$. 
	Thus we have from (C-WhileT) 
	\[
	\vdash_{\mathit{tot}} \ass{\qassertp}{\while'}{y>n \wedge \qassertp}.
	\]
	\item  Let $\while$ be the outer while-loop, and 
	\[
	\qassert \define \sum_{\ell = 1}^{n+1}  \left\<x=\ell,\
	\frac{1}{\sqrt{2^n}}\sum_{j=0}^{2^n-1}	\left[\bigotimes_{k=1}^{\ell-1} |+_{0.j_k\cdots j_n}\> \otimes \bigotimes_{k=\ell}^n |j_k\>\right]_{\bar{q}} |j\>_{\bar{q}'}\right\>.
	\]
	Then it can be shown that $\qassert$ is an invariant for $\while$.
	Again, it is easy to construct a classical ranking function ($t\define n+1-x$), so
	\begin{equation}\label{eq:qft}
		\vdash_{\mathit{tot}} \ass{\qassert}{\while}{x>n \wedge \qassert}.
	\end{equation}
	
	\item For the whole program, we have
	\begin{align*}
		&\left\{\left\<\true, |\alpha\>_{\bar{q},\bar{q}'}\right\>\right\}\\
		&x := 1;\\
		&\left\{\sum_{\ell = 1}^{n+1}  \left\<x=\ell,\
		\frac{1}{\sqrt{2^n}}\sum_{j=0}^{2^n-1}	\left[\bigotimes_{k=1}^{\ell-1} |+_{0.j_k\cdots j_n}\> \otimes \bigotimes_{k=\ell}^n |j_k\>\right]_{\bar{q}} |j\>_{\bar{q}'}\right\> \right\} &\mathit{(Assn)}\\
		&\while\\
		&\left\{\left\<x=n+1,\
		\frac{1}{\sqrt{2^n}}\sum_{j=0}^{2^n-1}	\left[\bigotimes_{k=1}^{n} |+_{0.j_k\cdots j_n}\>\right]_{\bar{q}} |j\>_{\bar{q}'}\right\> \right\} & \mathit{Eq.(\ref{eq:qft})}\\
		&\bar{q} \apply \mathit{SWAP}_n\\
		&\left\{ \left\<\true, \frac{1}{\sqrt{2^n}}\sum_{j=0}^{2^n-1} |\psi_j\>_{\bar{q}} |j\>_{\bar{q}'}\right\>\right\}. & \mathit{(Unit, Imp)}
	\end{align*}		
\end{enumerate}
\subsection{Phase Estimation}

Given (the controlled version of) a unitary operator $U$ acting on $m$ qubits and one of its eigenstate $|u\>$ with $U|u\> = e^{2\pi i\varphi} |u\>$ for some $\varphi \in [0,1)$. The phase estimation algorithm computes an $n$-bit approximation $\tilde{\varphi}$ of $\varphi$  with success probability at least $1-\epsilon$, where $n$ and $\epsilon$ are two given parameters.
Let $t\define n+ \ceil{\log(2+ \frac{1}{2\epsilon})}$. The algorithm is detailed as follows:
\begin{align*}
	\mathit{PE}\define &                                               \\
	&\bar{r} := 0;  \ \bar{r} \apply \mathit{U_u};\ \bar{q} := 0; \ x := 1;                                       \\
	& \while\ x \leq t\ \ddo                        \\
	& \quad \bar{q}[x] \apply H;  \ y := 0;                                 \\
	& \quad \while\ y< 2^{t-x}\ \ddo                \\
	& \qquad \bar{q}[x],\bar{r} \apply \mathit{CU};\ y := y+1;                              \\
	& \quad \pend                                   \\
	& \quad x := x + 1;                             \\
	& \pend                                         \\
	& \bar{q} \apply \mathit{QFT}(t)^\dag;             \\
	& z := \measure\ \bar{q}
\end{align*}
where $|\bar{q}| = t$, $|\bar{r}| = m$, $\mathit{U_u}$ is a unitary operator to prepare $|u\>$ from $|0\>$, $\mathit{CU}$ is the controlled-$U$ operator, and $\mathit{QFT}(t)^\dag$ is the inverse quantum Fourier transform on $t$ qubits.

The correctness of $\mathit{PE}$ can be stated as
\begin{equation}\label{eq:pecor}
\vdash_{\mathit{tot}}	\ass{p_{\mathit{PE}}}{\mathit{PE}}{|\varphi - z/2^t| < 2^{-n}}
\end{equation}
with $p_{\mathit{PE}}\geq 1-\epsilon$.
Let $\while$ be the outer while-loop and $\while'$ be the inner one.
The proof consists of three phases.
\begin{enumerate}
	\item For the body of $\while'$, we have for any $1\leq k\leq t$ and $0\leq \ell< 2^{t-k}$,
	\begin{align*}
		&\left\{\left\<x=k\wedge y=\ell, |+_{\ell \varphi}\>_{{q}_k}|u\>_{\bar{r}}\right\>\right\}\\
		& \bar{q}[x],r \apply \mathit{CU};\ y := y+1;\\
		&\left\{\left\<x=k\wedge y=\ell +1, |+_{(\ell+1) \varphi}\>_{{q}_k}|u\>_{\bar{r}}\right\>\right\}& \mathit{(Unit, Assn)}
	\end{align*}
	where $|+_a\> \define (|0\> + e^{2\pi i a}|1\>)/\sqrt{2}$ for any $a\in \R$,
	and in particular, $|+_0\> = |+\>$. Furthermore, it is easy to construct a classical ranking function $2^{t-x}-y$. 
	Thus we have from (Linear) and (C-WhileT),
	\begin{equation}\label{eq:peinner}
		\vdash_{\mathit{tot}} \ass{\qassertp}{\while'}{\qassertp \wedge y\geq 2^{t-x}}
	\end{equation}
	where 
	$
	\qassertp \define \sum_{\ell=0}^{2^{t-k}} \left\<x=k\wedge y=\ell, |+_{\ell \varphi}\>_{{q}_k}|u\>_{\bar{r}}\right\>.
	$
	
	\item For the body of $\while$, we have for any $1\leq k\leq t$,
	\begin{align*}
		&\left\{\left\<x=k, |0\>_{{q}_k}|u\>_{\bar{r}}\right\>\right\}\\
		& \bar{q}[x] \apply H;\ y := 0;\\
		&\left\{\qassertp \equiv \sum_{\ell=0}^{2^{t-k}}\left\<x=k \wedge y=\ell, |+_{\ell\varphi}\>_{{q}_k}|u\>_{\bar{r}}\right\>\right\} & \mathit{(Unit, Assn)}\\
		&\while'\\
		&\left\{\qassertp \wedge y\geq 2^{t-x} \equiv \left\<x=k, |+_{2^{t-k}\varphi}\>_{{q}_k}|u\>_{\bar{r}}\right\>\right\} &\mathit{Eq.(\ref{eq:peinner})}\\
		&x := x + 1;\\
		&\left\{\left\<x=k+1, |+_{2^{t-k}\varphi}\>_{{q}_k}|u\>_{\bar{r}}\right\>\right\} & \mathit{(Assn)}
	\end{align*}
	Furthermore, it is easy to construct a classical ranking function $t+1-x$. 
	Thus from (Tens), (Linear), and (C-WhileT) we have
	\begin{equation}\label{eq:peout}
		\vdash_{\mathit{tot}} \ass{\qassert}{\while}{\qassert \wedge x>t} 
	\end{equation}
	where 
	\[
	\qassert \define \sum_{k=1}^{t+1} \left\<x=k, \bigotimes_{j=1}^{k-1}|+_{2^{t-j}\varphi}\>\otimes |0\>^{\otimes (t-k+1)}|u\>_{\bar{r}}\right\>.
	\]
	
	\item For the whole program, we have
	\begin{align*}
		& \left\{\top\right\}\\
		&\bar{r} := 0; \bar{r} \apply \mathit{U_u};\ \bar{q} := 0;\\
		& \left\{\left\<\true, |0\>^{\otimes t}|u\>_{\bar{r}}\right\>\right\} & \mathit{(Init, Unit)}\\
		&x := 1;\\
		& \left\{\sum_{k=1}^{t+1} \left\<x=k, \bigotimes_{j=1}^{k-1}|+_{2^{t-j}\varphi}\>\otimes |0\>^{\otimes (t-k+1)}|u\>_{\bar{r}}\right\>\right\} & \mathit{(Assn)}\\
		&\while\\
		& \left\{\left\<x=t+1, \bigotimes_{j=1}^{t}|+_{2^{t-j}\varphi}\>|u\>_{\bar{r}}\right\>\right\} & \mathit{Eq.(\ref{eq:peout})}\\
		& \left\{\left\<\true, \frac{1}{\sqrt{2^t}}\sum_{k=0}^{2^t-1} e^{2\pi ik \varphi} |k\>\right\>\right\} & \mathit{(Imp)}
	\end{align*}

	Furthermore, let 		
	\begin{equation}\label{eq:defK}
	K\define \left\{0\leq m < 2^t :  \left|\varphi - 2^{-t}m\right| < 2^{-n}\right\}
	\end{equation}
	and for each $m$,
	$|\psi_m\> \define 1/\sqrt{2^t}\sum_{j=0}^{2^t-1} e^{2\pi ijm/2^t} |j\>.$
	Then we have
	\begin{align*}
		& \left\{\left\<\true, \sum_{m\in K}|\psi_m\>_{\bar{q}}\<\psi_m|\right\>\right\}\\
		&\bar{q} \apply \mathit{QFT}(t)^\dag;\\
		& \left\{\left\<\true, \sum_{m\in K}|m\>_{\bar{q}}\<m| \right\>\right\} &\mathit{(Unit)}\\
		&z := \measure\ \bar{q}\\
		& \left\{\left|\varphi - z/2^t\right| < 2^{-n}\right\} & \mathit{(Meas0)}
	\end{align*}
	Finally, by (ProbComp) we have Eq.(\ref{eq:pecor}) with 
	\[
	p_{\mathit{PE}} = \sum_{m\in K}\left|\frac{1}{\sqrt{2^t}}\sum_{k=0}^{2^t-1} e^{2\pi ik \varphi} \left\<\psi_m|k\right\>\right|^2 \geq 1-\epsilon,
	\]
	where the last inequality is from the following lemma.
\end{enumerate}

\begin{lemma}\label{lem:peprob}
	Let $\varphi \in [0,1)$, $\epsilon \in (0,1)$, $n\geq 1$, $t\define n+ \ceil{\log(2+ \frac{1}{2\epsilon})}$, and $K$ be defined in Eq.~\eqref{eq:defK}.
	Then
	\[
	\sum_{m\in K} \left|\frac{1}{2^t} \sum_{j=0}^{2^t-1} \exp\left[2\pi i j (\varphi - 2^{-t}m)\right]\right|^2 \geq 1-\epsilon.
	\]
\end{lemma}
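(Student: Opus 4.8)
This is a self-contained analytic estimate about the output distribution of the inverse QFT, independent of the Hoare-logic machinery; the plan is to follow the classical phase-estimation error analysis. Write $\alpha_m \define \frac{1}{2^t}\sum_{j=0}^{2^t-1} e^{2\pi i j(\varphi - 2^{-t}m)}$ for $0\le m<2^t$, so that the left-hand side of the lemma is exactly $\sum_{m\in K}|\alpha_m|^2$. First I would record the identity $\sum_{m=0}^{2^t-1}|\alpha_m|^2 = 1$: expanding the modulus squared and summing over $m$ first annihilates every cross term with $j\ne j'$ (the inner sum $\sum_{m=0}^{2^t-1}e^{-2\pi i(j-j')m/2^t}$ vanishes unless $j=j'$), leaving $2^t\cdot 2^t/2^{2t}=1$. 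Hence it suffices to prove the complementary bound $\sum_{m\notin K}|\alpha_m|^2 \le \epsilon$.

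Next I would set up the standard reparametrisation. Let $b\define \lfloor 2^t\varphi\rfloor\in\{0,\dots,2^t-1\}$ be the best $t$-bit under-approximation of $\varphi$ and $\delta\define \varphi-2^{-t}b\in[0,2^{-t})$, and index outcomes by $\ell$ through $m=(b+\ell)\bmod 2^t$, with $\ell$ ranging over a block of $2^t$ consecutive integers around $0$. Summing the geometric series gives, whenever $2^t\delta\ne\ell$,
\[
\alpha_{(b+\ell)\bmod 2^t} \;=\; \frac{1}{2^t}\cdot\frac{e^{2\pi i(2^t\delta-\ell)}-1}{e^{2\pi i(\delta-\ell/2^t)}-1}.
\]
Bounding the numerator by $2$ in modulus and applying $|e^{i\theta}-1|\ge 2|\theta|/\pi$ for $|\theta|\le\pi$ to the denominator (legitimate because $\delta-\ell/2^t\in(-\tfrac12,\tfrac12]$ over the chosen block) yields the key estimate $|\alpha_{(b+\ell)\bmod 2^t}|\le \frac{1}{2\,|\ell-2^t\delta|}$.

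The third step is to identify the failure outcomes and sum their contributions. Since $\varphi-2^{-t}m\equiv\delta-\ell/2^t\pmod 1$ and $0\le 2^t\delta<1$, the condition $m\notin K$ forces $|\ell-2^t\delta|$ to be at least roughly $2^{t-n}$; combined with the previous bound, a short tail estimate (comparing $\sum_k k^{-2}$ with $\int k^{-2}\,dk$) gives $\sum_{m\notin K}|\alpha_m|^2\le \frac{1}{2(2^{t-n}-2)}$. Plugging in $t=n+\ceil{\log(2+\tfrac{1}{2\epsilon})}$ gives $2^{t-n}\ge 2+\tfrac{1}{2\epsilon}$, hence $2^{t-n}-2\ge\tfrac{1}{2\epsilon}$ and $\frac{1}{2(2^{t-n}-2)}\le\epsilon$, which is the desired bound.

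The main obstacle is the bookkeeping in the third step: one has to reduce $\varphi-2^{-t}m$ modulo $1$ carefully, keep track of the $\pm1$ slack that $\delta\in[0,2^{-t})$ introduces when translating ``$m\in K$'' into an inequality on $|\ell|$, and make sure the wrap-around outcomes (where $m$ just below $2^t$ behaves like $m=0$) are treated consistently with the definition of $K$ in Eq.~\eqref{eq:defK}. Once the failure region in terms of $\ell$ is pinned down, the remaining steps are routine geometric-series and tail computations.
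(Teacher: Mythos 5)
Your proposal is correct and is precisely the argument the paper relies on: the paper's own proof is a one-line citation to page 224 of Nielsen and Chuang, and what you write out — the geometric-series closed form for $\alpha_m$, the bound $|\alpha_\ell|\le 1/(2|\ell-2^t\delta|)$ via $|e^{i\theta}-1|\ge 2|\theta|/\pi$, the integral-comparison tail estimate $\le 1/(2(2^{t-n}-2))$, and the choice of $t$ making this at most $\epsilon$ — is exactly that textbook analysis. One remark on the wrap-around bookkeeping you single out as the main obstacle: it is not merely delicate but actually unresolvable for the lemma as literally stated, since Eq.~\eqref{eq:defK} uses the plain distance $|\varphi-2^{-t}m|$ rather than distance modulo $1$; for $\varphi$ within a few multiples of $2^{-t}$ of $0$ (say $\varphi=2^{-t-1}$), a constant fraction (over $6\%$) of the probability mass lies on outcomes $m$ just below $2^t$ that approximate $\varphi$ well modulo $1$ yet are excluded from $K$, so the stated inequality fails for small $\epsilon$. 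Nielsen--Chuang bound the modular error, which is clearly the intended reading here (and the only one the cited proof supports); under that reading your third step pins down the failure region exactly as you describe and the proof is complete.
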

\begin{proof}
	See page 224 of~\cite{nielsen2002quantum}.
\end{proof}

\subsection{Order-finding}\label{sec:order}
Given positive co-prime integers $x$ and $N$, the order of $x$ modulo $N$ is the least positive integer $r$ such that $x^r \equiv_N 1$, where $\equiv_N$ denotes equality modulo $N$.
Let $L\define \ceil{\log(N)}$, $\epsilon \in (0,1)$, and $t\define2L+ 1 + \ceil{\log(2+ \frac{1}{2\epsilon})}$. The order-finding algorithm computes the order $r$ of $x$ by using $O(L^3)$ operations, with success probability at least $(1-\epsilon)/(2\log(N))$. The algorithm goes as follows:
\begin{align*}
	\mathit{OF}(x,N)\define&\\
	&\bar{q} := 0; \ \bar{q} \apply H^{\otimes t};\\	
	&\bar{q}' := 0; \ \bar{q}' \apply U_{+1};\\
	&\bar{q}, \bar{q}' \apply \mathit{CU};\\
	&\bar{q} \apply \mathit{QFT}(t)^\dag;\\
	&z' := \measure\ \bar{q};\\
	&z := f(z'/2^t)
\end{align*}
where $|\bar{q}| = t$, $|\bar{q}'| = L$, $U_{+1}$ is a unitary operator on $\h_{\bar{q}'}$ such that $U_{+1}|0\> = |1\>$, $f(x)$ is the continued fractions algorithm which computes all convergents $m/n$ of the continued fraction for $x$ with 
	$\left|m/n - x\right| < 1/(2n^2)$
	and returns the minimal $n$ if there is any, and $\mathit{CU}$ is the controlled-$U$ operator on $\h_{\bar{q}}\otimes \h_{\bar{q}'}$ such that $\mathit{CU}|j\>_{\bar{q}} |y\>_{\bar{q}'} = |j\>_{\bar{q}} U^j|y\>_{\bar{q}'}$, where for each $0\leq y< 2^L$,
\begin{equation}\label{eq:ofu}
	U|y\> = \left\{\begin{tabular}{ll}
		$|xy \mbox{ mod } N\>$ & if $y<N$\\
		$|y\>$ & otherwise.\\
	\end{tabular}
	\right.
\end{equation}
Note that $CU$ can be implemented using $O(L^3)$ basic quantum gates by employing the technique of modular exponentiation~\cite{shor1997}. For the sake of simplicity, we omit the detailed implementation of $CU$ in the description of $\mathit{OF}(x,N)$.

The correctness of $\mathit{OF}(x,N)$ can be stated as
\begin{equation}\label{eq:OF}
	\vdash_{\mathit{tot}} \{p_{\mathit{OF}} \cdot (\gcd(x,N) =1)\} \mathit{OF}(x,N) \{z=r\}
\end{equation}
for some $p_{\mathit{OF}}\geq (1-\epsilon)/(2\log(N))$. For each $0\leq s < r$, let
$$|u_s\> \define \frac{1}{\sqrt{r}} \sum_{k=0}^{r-1} e^{-2\pi i s k/r} |x^k \mbox{ mod } N\>.$$
Then $|u_s\>$'s are orthonormal, $U|u_s\> = e^{2\pi i s /r}|u_s\>$, and 
$
1/\sqrt{r} \sum_{s=0}^{r-1} |u_s\> = |1\>.
$
We compute
\begin{align*}
	&\{\gcd(x,N) =1\}\\
	&\bar{q} := 0; \ \bar{q} \apply H^{\otimes t};\ \bar{q}' := 0;\ \bar{q}' \apply U_{+1};\\	
	&\left\{\left\<\gcd(x,N) =1,   |+^{\otimes t}\>_{\bar{q}} |1\>_{\bar{q}'}\right\>\right\}& \mathit{(Init, Unit)}\\
	&\left\{\left\<\gcd(x,N) =1, \frac{1}{\sqrt{r2^t}} \sum_{s=0}^{r-1}\sum_{j=0}^{2^t-1}|j\>_{\bar{q}} |u_s\>_{\bar{q}'}\right\>\right\}& \mathit{(Imp)}\\
	&\bar{q}, \bar{q}' \apply \mathit{CU};\\
	&\left\{\left\<\gcd(x,N) =1, \frac{1}{\sqrt{r2^t}} \sum_{s=0}^{r-1}\sum_{j=0}^{2^t-1} e^{2\pi i js/r} |j\>_{\bar{q}} |u_s\>_{\bar{q}'}\right\>\right\}& \mathit{(Unit)}\\
	&\bar{q} \apply \mathit{QFT}(t)^\dag;\\
	&\left\{\left\<\gcd(x,N) =1, \frac{1}{\sqrt{r}2^t} \sum_{s=0}^{r-1} \sum_{k=0}^{2^t-1} \sum_{j=0}^{2^t-1} \exp\left[2\pi i j\left(\frac{s}{r}-\frac{k}{2^t}\right)\right] |k\>_{\bar{q}} |u_s\>_{\bar{q}'}\right\>\right\}& \mathit{(Unit)}
\end{align*}
Furthermore, for any $0\leq s < r$ with $\gcd(s, r) = 1$, let 
\[
	K_s\define \left\{0\leq k < 2^t :  \left|\frac{s}{r}-\frac{k}{2^t}\right| < \frac{1}{2^{2L+1}}\right\}.
\]
Then from~\cite[Theorem 5.1]{nielsen2002quantum} and \cite{hardy1979introduction}, for each $k\in K_s$ the continued fractions algorithm $f$ in $\mathit{OF}(x,N)$ computes $f(k/2^t) = r$.
Thus
\begin{align*}
	&\left\{\left\< \gcd(x,N) =1, \sum_{s: \gcd(s,r) =1} \sum_{k\in K_s} |k\>_{\bar{q}}\<k| \otimes |u_s\>_{\bar{q}'}\<u_s| \right\>\right\}\\
	&\left\{\sum_{s=0}^{r-1}\sum_{k=0}^{2^t-1}\left\<  f(k/2^t)=r, |k\>_{\bar{q}}\<k| \otimes |u_s\>_{\bar{q}'}\<u_s| \right\>\right\} & \mathit{(Imp)}\\
	&z' := \measure\ \bar{q};\\
	&\{f(z'/2^t)=r\} & \mathit{(Meas0, Imp)}\\
	&z := f(z'/2^t)\\
	&\{z=r\} & \mathit{(Assn)}
\end{align*}
Then by (ProbComp), Eq.(\ref{eq:OF}) holds where
\begin{align*}
	p_{\mathit{OF}}  &= \frac{1}{r} \sum_{s: \gcd(s,r) =1} \sum_{k\in K_s}  \left| \frac{1}{2^t}  \sum_{j=0}^{2^t-1} \exp\left[2\pi i j\left(\frac{s}{r}-\frac{k}{2^t}\right)\right]\right|^2\\
	& \geq \frac{1}{r}  \sum_{s: \gcd(s,r) =1}\left(1-\epsilon\right)\geq \frac{1-\epsilon}{2\log(N)}
\end{align*}
where the first inequality is from Lemma~\ref{lem:peprob} and the last one from the fact that there are at least $r/(2\log(r))$ prime numbers less than $r\leq N$.

Note that we can check easily (in $O(L^3)$ time using, say, modular exponentiation) whether or not an output of $\mathit{OF}(x,N)$ is indeed the order of $x$ modulo $N$. 
By repeating the above algorithm $O(L)$ times we can further increase the success probability to $1-\epsilon$. Actually, the $ 1-\epsilon$ success probability can be achieved without introducing the $O(L)$ overhead, by only repeating $\mathit{OF}(x,N)$ a constant number of times and taking the least common multiple of the outputs~\cite{nielsen2002quantum}.

\subsection{Shor's factorisation algorithm}

Given a positive integer $N$ which is composite, the factorisation problem asks to find all the factors of $N$. No classical algorithm can solve this problem in polynomial (in $\ceil{\log(N)}$, the number of bits to encode $N$) time. The difficulty of this problem is at the heart of many widely used cryptographic algorithms such as RSA~\cite{rivest1978method}.

One of the killer apps of quantum computing is Shor's algorithm~\cite{shor1994algorithms}, which solves the factorisation problem (actually, a polynomial-time equivalent one which finds a non-trivial factor of $N$) in $O(\log^3(N))$ time, achieving an exponential speed-up over the best classical algorithms. Shor's algorithm uses the order-finding algorithm as a subroutine in an inline manner, and is depicted in Table~\ref{tbl:shor} (left column), where $\mathit{Unif}(1,N-1)$ is the uniform distribution over $\{1, \cdots, N-1\}$. 

Let $F(y)\define (y\mbox{ is a non-trivial factor of } N) \equiv [1<y<N \wedge (y\ div\ N)],$ and 
$$cmp(N) \define (N>2 \wedge N \mbox{ is odd and composite} \wedge N \neq a^b \mbox{ for any integers $a$ and $b>1$}).$$ Here we assume $N$ to be odd and not of the form $a^b$ for simplicity; otherwise, the  non-trivial factor 2 or $a$ of $N$ can be easily found.
Then the correctness of $\mathit{Shor}(N)$ can be stated as
\begin{equation}\label{eq:shorspec}
\vdash_{\mathit{tot}} \{ p_{\mathit{Shor}} \cdot cmp(N)\}\ \mathit{Shor}(N)\ \{F(y)\}
\end{equation}
for some success probability $p_{\mathit{Shor}}$.

\begin{table}[t]
	\begin{lrbox}{\tablebox}
		\centering
		\begin{tabular}{l|l}
\begin{tabular}{l}
$\mathit{Shor}(N)\define$\\
$x\rassign \mathit{Unif}(1,N-1);$\\
 $\iif\ \gcd(x, N) >1\ \then  $\\
$\quad y:= \gcd(x, N)$\\
 $\eelse$\\
$\quad {\mathit{OF}}(x, N);$\\
$\quad \iif\ (z\mbox{ is even} \wedge x^{z/2}\not\equiv_N -1)\ \then$\\
$\qquad {y_1:= \gcd(x^{z/2}-1, N)};$\\
$\qquad {y_2:= \gcd(x^{z/2}+1, N)}$\\
$\quad \eelse$\\
$\qquad \abort$\\
$\quad \pend$\\
 $\quad\iif\ (y_1\mbox{ non-trivial factor of } N)\ \then$\\
$\qquad y:= y_1$\\
$\quad \eelse$\\
$\qquad \iif\ (y_2\mbox{ non-trivial factor of } N)\ \then$\\
$\qquad \qquad y:= y_2$\\
$ \qquad \eelse$\\ 
$\qquad \qquad \abort$\\ 
$\qquad \pend$\\
$\quad \pend$\\
$\pend$
\end{tabular}
&
\begin{tabular}{lr}
	$\{p_{\mathit{OF}}(1-1/2^{m-1})  \cdot cmp(N) \}$&\\
	$x\rassign \mathit{Unif}(1,N-1);$&\\
	$\{cmp(N)   \wedge  \qassert\}$& main text, (Inv)\\
	$\iif\ \gcd(x, N) >1\ \then\ \{F(\gcd(x, N))\}$&  (Imp)\\
	$\quad y:= \gcd(x, N)\ \{F(y)\}$&\\ 
	$\eelse$&\\
	$\quad \{p_{\mathit{OF}} \cdot (cmp(N) \wedge E(x) \wedge \gcd(x, N) =1)\}$&\\
	$\quad {\mathit{OF}}(x, N);$&\\
	$\quad \{cmp(N)  \wedge E(x)  \wedge z = r_x\}$& Eq.(\ref{eq:OF}), (Inv)\\
	$\quad \iif\ (z\mbox{ is even} \wedge x^{z/2}\not\equiv_N -1)\ \then$&\\
	$\qquad \{F(\gcd(x^{z/2}-1, N))\vee F(\gcd(x^{z/2}+1, N))\}$& Eq.(\ref{eq:shor1})\\
	$\qquad {y_1:= \gcd(x^{z/2}-1, N)};$&\\
	$\qquad {y_2:= \gcd(x^{z/2}+1, N)}$&\\
	$\quad \eelse\ \{\bot\}\ \abort\ \pend$&\\
	$\quad \{F(y_1)\vee F(y_2)\}$&\\
	$\quad\iif\ F(y_1)\ \then$&\\
	$\qquad \{F(y_1)\}\ y:= y_1$&\\
	$\quad \eelse\ \{F(y_2)\}\ \iif\ F(y_2)\ \then$&\\
	$\qquad \qquad\{F(y_2)\}\ y:= y_2$ &\\
	$\qquad \quad \eelse\ \{\bot\}\ \abort\  \pend$&\\
	$\quad \pend\ \{F(y)\}$&\\
	$\pend\ \{F(y)\}$&\\
\end{tabular}
\end{tabular}
\end{lrbox}
\resizebox{\textwidth}{!}{\usebox{\tablebox}}
\vspace{2mm}
\caption{Shor's factorisation algorithm and its proof outline.}
\label{tbl:shor}
\end{table}
As the only non-classical part of Shor's algorithm is the order-finding subroutine, it can be verified by simply employing some theorems from number theory and the result from Sec.~\ref{sec:order}. To be specific, for any $x$ which is co-prime with $N$, let $r_x \define  \mathit{ord}(x,N)$ be the order of $x$ modulo $N$ and $E(x)\define (r_x\mbox{ is even} \wedge x^{r_x/2}\not\equiv_N -1)$.
The following lemma (see~\cite[Theorems 5.2 and 5.3]{nielsen2002quantum}, \cite{ekert1996quantum}) is crucial:

\begin{lemma}\label{lem:shor} Let $N>0$ be a composite integer.
	\begin{enumerate}
\item If $s$ is a non-trivial solution to the equation $s^2 \equiv_N 1$, then at least one of $\gcd(s-1, N)$ and $\gcd(s+1, N)$ is a non-trivial factor of $N$.
\item Let $m$ be the number of prime factors of $N$ and $N$ is odd. If $x$ is chosen uniformly at random from the set $\{1, \ldots, N-1\}$. Then  the conditional probability
\[
\mathrm{Pr}[E(x)\ |\ \gcd(x,N)=1] \geq 1-\frac{1}{2^{m-1}}.
\]
\end{enumerate}
\end{lemma}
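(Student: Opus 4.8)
The plan is to establish the two parts separately; both are classical number‑theoretic facts, and while they could simply be cited from~\cite{nielsen2002quantum, ekert1996quantum}, I sketch the arguments I would give. \emph{Part (1).} Suppose $s$ is a non‑trivial solution of $s^2 \equiv_N 1$, so that $N$ divides $(s-1)(s+1)$ while $s \not\equiv_N \pm 1$; pick the representative with $2 \leq s \leq N-2$. Since $0 < s-1 < N$, the number $N$ does not divide $s-1$, hence $\gcd(s-1,N) \neq N$. If in addition $\gcd(s-1,N) = 1$, then $N \mid (s-1)(s+1)$ would force $N$ to divide $s+1$, impossible since $0 < s+1 < N$. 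Hence $1 < \gcd(s-1,N) < N$, i.e. $\gcd(s-1,N)$ is itself a non‑trivial factor of $N$; in particular at least one of $\gcd(s\pm 1, N)$ is.

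\emph{Part (2).} Write $N = p_1^{a_1}\cdots p_m^{a_m}$ with the $p_i$ distinct odd primes, and let $G_i$ be the multiplicative group of integers modulo $p_i^{a_i}$, which is cyclic of even order $\phi(p_i^{a_i})$. By the Chinese Remainder Theorem, choosing $x$ uniformly among the residues coprime to $N$ amounts to choosing $(x_1,\dots,x_m)$ with the $x_i \in G_i$ independent and uniform. Put $r_i = \mathit{ord}(x_i, p_i^{a_i})$ and write $r_i = 2^{d_i} m_i$ with $m_i$ odd; then $r_x = \mathrm{lcm}(r_1,\dots,r_m)$. The crux is the claim that $\neg E(x)$ holds \emph{iff} $d_1 = \cdots = d_m$. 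For the forward direction: if $r_x$ is odd then each $r_i$ divides the odd number $r_x$, hence all $d_i = 0$; if instead $r_x$ is even and $x^{r_x/2} \equiv_N -1$, then reducing modulo $p_i^{a_i}$ shows $r_i$ divides $r_x$ but not $r_x/2$, which forces $d_i$ to equal the $2$‑adic valuation of $r_x$ for every $i$. For the reverse direction, let $d$ be the common value: if $d = 0$ then $r_x$ is odd and we are done; if $d \geq 1$ then $r_x = 2^d \,\mathrm{lcm}(m_1,\dots,m_m)$, and a short computation shows $x_i^{r_x/2}$ has order exactly $2$ in $G_i$ for each $i$; since $-1$ is the unique element of order $2$ in the cyclic group $G_i$ (here we use that $p_i$ is odd), this gives $x^{r_x/2} \equiv_N -1$.

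It then remains to bound $\Pr[\,d_1 = \cdots = d_m \mid \gcd(x,N)=1\,]$. Writing $\phi(p_i^{a_i}) = 2^{c_i}t_i$ with $t_i$ odd and $c_i \geq 1$, a direct count in the cyclic group $G_i$ gives $\Pr[d_i = 0] = 2^{-c_i}$ and $\Pr[d_i = d] = 2^{\,d-1-c_i}$ for $1 \leq d \leq c_i$, so that $\Pr[d_i = d] \leq 1/2$ for every value of $d$. Using that the $d_i$ are independent,
\[
\Pr[d_1 = \cdots = d_m] \;=\; \sum_{d \geq 0} \prod_{i=1}^{m} \Pr[d_i = d] \;\leq\; \sum_{d \geq 0} \Pr[d_1 = d]\cdot 2^{-(m-1)} \;=\; 2^{-(m-1)},
\]
whence $\Pr[E(x) \mid \gcd(x,N)=1] \geq 1 - 2^{-(m-1)}$, as claimed.

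The step I expect to be the main obstacle is the ``if and only if'' characterisation of $\neg E(x)$ in terms of the $2$‑adic valuations $d_i$: it requires the case analysis above together with the cyclicity of each $G_i$ (which holds precisely because $N$ is odd, so that $-1$ is the unique involution in $G_i$). Once this is in place, the probabilistic estimate is a routine counting argument, and part (1) is elementary.
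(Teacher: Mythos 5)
Your proposal is correct. Note, however, that the paper does not prove this lemma at all: it is imported by citation to~\cite[Theorems 5.2 and 5.3]{nielsen2002quantum} and~\cite{ekert1996quantum}, exactly as you anticipate in your opening sentence. What you supply is essentially the standard self-contained argument behind those references, and it checks out: in part (1) your argument in fact proves the slightly stronger statement that $\gcd(s-1,N)$ by itself is already a non-trivial factor (choosing the representative $2\leq s\leq N-2$, $N\nmid s-1$ rules out $\gcd(s-1,N)=N$, and $\gcd(s-1,N)=1$ would force $N\mid s+1$, impossible); in part (2) the characterisation $\neg E(x)\iff d_1=\cdots=d_m$ is argued correctly in both directions (the forward direction via $r_i\mid r_x$, $r_i\nmid r_x/2$ pinning each $d_i$ to the $2$-adic valuation of $r_x$; the reverse via $x_i^{r_x/2}$ having order exactly $2$ in the cyclic group $G_i$, hence equal to $-1$ since $p_i$ is odd), and the counting $\Pr[d_i=0]=2^{-c_i}$, $\Pr[d_i=d]=2^{\,d-1-c_i}\leq 1/2$ together with CRT-independence yields $\Pr[d_1=\cdots=d_m]\leq 2^{-(m-1)}$ as required. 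The only trade-off is one of economy: the paper treats this as a black-box number-theoretic fact so that the verification effort stays focused on the Hoare-logic reasoning for $\mathit{Shor}(N)$, whereas your route makes the lemma self-contained at the cost of reproducing textbook material; either is acceptable, and your proof introduces no gap.
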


Since $r_x$ is the order of $x$ modulo $N$, we have $x^{r_x/2}\not\equiv_N 1$. Thus the first clause of Lemma~\ref{lem:shor} implies
\begin{equation}\label{eq:shor1}
cmp(N) \wedge E(x) \rightarrow F(\gcd(x^{r_x/2}-1, N))\vee F(\gcd(x^{r_x/2}+1, N)).
\end{equation}
Furthermore, let $x$ be chosen uniformly at random from the set $\{1, \ldots, N-1\}$ and $\lambda\define\mathrm{Pr}(\gcd(x,N) >1)$. Then
from the second clause of Lemma~\ref{lem:shor} we have
\begin{align}
&\mathrm{Pr}[\gcd(x,N)>1]  + p_{\mathit{OF}} \cdot  \mathrm{Pr}[\gcd(x,N) = 1 \wedge E(x)] \notag\\
&\qquad\geq \lambda +  p_{\mathit{OF}} \cdot  (1-\lambda)\left(1-\frac{1}{2^{m-1}}\right)  \geq   p_{\mathit{OF}} \cdot  \left(1-\frac{1}{2^{m-1}}\right), \label{eq:shor}
\end{align}
and so
\begin{align*}
&\left\{p_{\mathit{OF}}\left(1-\frac{1}{2^{m-1}}\right) \right\}\\
&\left\{ \sum_{n=1}^{N-1} \left[\left\<\gcd(n,N) > 1, \frac{1}{N-1}\right\> +  \left\<\gcd(n,N) = 1 \wedge E(n),  \frac{p_{\mathit{OF}}}{N-1}\right\>\right]\right\}&\mathit{Eq.(\ref{eq:shor}), (Imp)}\\
&x\rassign \mathit{Unif}(1,N-1)\\
&\left\{(\gcd(x,N) > 1) +  \left\<\gcd(x,N) = 1 \wedge E(x),  p_{\mathit{OF}}\right\>\right\}. &\mathit{(Rassn)}
\end{align*}
The rest of the proof is sketched in the right column of Table~\ref{tbl:shor}, where $$\qassert \define (\gcd(x,N) > 1) +  \left\<\gcd(x,N) = 1 \wedge E(x),  p_{\mathit{OF}}\right\>.$$
Thus we have Eq.~(\ref{eq:shorspec}) with $p_{\mathit{Shor}} =p_{\mathit{OF}}(1-1/2^{m-1}) \geq p_{\mathit{OF}}/2$, by noting that $cmp(N)$ implies $m>1$. Furthermore, as stated in the previous subsection, this probability can be further increased to $1-\epsilon$ without increasing the time complexity of the algorithm.


\section{Conclusion}

We studied in this paper a simple quantum while-language where classical variables are explicitly involved. This language supports deterministic and probabilistic assignments of classical variables; initialisation, unitary transformation, and measurements of quantum variables; conditionals and while loops. Simultaneous initialisation of multiple quantum variables, and application of parametrised unitary operations on selected variables in a quantum register are also supported as syntactic sugars. These features make the description of practical quantum algorithms easy and compact, as shown by various examples.

With novel definition of cq-states and assertions, we defined for our language a small-step structural operational semantics, and based on it, a denotational one. Partial and total correctness of cq-programs were then introduced in the form of Hoare triples. We proposed Hoare-type logic systems for partial and total correctness respectively, and showed their soundness and relative completeness. 
Case studies including Grover's algorithm, quantum Fourier transformation, phase estimation, order finding, and Shor's algorithm illustrate the expressiveness of our language as well as the capability of the Hoare logic.

As future work, we would like to develop a software tool to implement the proof systems proposed in this paper, and use it to analyse more quantum algorithms and protocols from the area of quantum computation and communication. Another direction we are going to pursue is to extend our Hoare logic to classical-quantum languages with general recursion and procedure call. Finally, techniques of constructing invariants and ranking assertions for quantum loops are also interesting and important topics for further investigation.

\begin{acks}             
This work is partially supported by the National Key R\&D Program of China (Grant No: 2018YFA0306 701) and the Australian Research Council (Grant No: DP180100691). Y. F. also acknowledges the support of Center for Quantum Computing, Peng Cheng Laboratory, Shenzhen during his visit.
\end{acks}

\bibliography{ref}

\end{document}